\theoremstyle{plain}
\newtheorem{theorem}{Theorem}[section]
\newtheorem{lemma}[theorem]{Lemma}
\theoremstyle{definition}
\newtheorem{definition}[theorem]{Definition}
\theoremstyle{remark}
\newcommand{\E}{\mathbb{E}}             
\newcommand{\R}{\mathbb{R}}             
\newcommand{\I}{\mathbb{I}}             
\renewcommand{\P}{\mathbb{P}}           
\newcommand{\D}{\mathcal{D}}            
\newcommand{\Q}{\mathcal{Q}}            
\renewcommand{\S}{\mathcal{S}}          
\renewcommand{\L}{\mathcal{L}}          
\newcommand{\F}{\mathcal{F}}          
\newcommand{\Xc}{\mathcal{X}}           
\newcommand{\KL}{D_\text{KL}}             
\newcommand{\sigmoid}{s}
\newcommand{\IG}{\Gamma^{-1}}
\newcommand{\argmin}{{\arg\!\min}}      
\newcommand{\argmax}{{\arg\!\max}}      
\newcommand{\red}[1]{#1}
\newcommand{\blue}[1]{#1}
\DeclareMathOperator{\corr}{corr}
\DeclareMathOperator{\diag}{diag}
\DeclareMathOperator{\tr}{tr}
\newcommand{\eps}{\varepsilon}
\newcommand{\mm}{m_{\text{max}}}
\newcommand{\Tr}{\text{Tr}}
\newcommand{\T}{\mathcal{T}}
\newtheorem{remark}{Remark}
\newenvironment{customassumption}[1]
  {\innercustomthm}
  {\endinnercustomthm}
\begin{document}

\begin{frontmatter}
\title{Group Spike-and-Slab Variational Bayes}
\runtitle{Group Spike-and-Slab Variational Bayes}

\begin{aug}
\author[A]{\fnms{Michael}~\snm{Komodromos}\ead[label=e1]{mk1019@ic.ac.uk}},
\author[A]{\fnms{Marina}~\snm{Evangelou}\ead[label=e2]{m.evangelou@ic.ac.uk}},
\author[A]{\fnms{Sarah}~\snm{Filippi}\ead[label=e3]{s.filippi@ic.ac.uk}}
\and
\author[A]{\fnms{Kolyan}~\snm{Ray}\ead[label=e4]{kolyan.ray@ic.ac.uk}}

\address[A]{Depart of Mathematics,
Imperial College London\printead[presep={,\ }]{e1}}
\runauthor{M. Komodromos et al.}
\end{aug}

\begin{abstract}
We introduce Group Spike-and-Slab Variational Bayes (GSVB), a scalable method for group sparse regression. A fast co-ordinate ascent variational inference (CAVI) algorithm is developed for several common model families including Gaussian, Binomial and Poisson. Theoretical guarantees for our proposed approach are provided by deriving contraction rates for the variational posterior in grouped linear regression. 
Through extensive numerical studies, we demonstrate that GSVB provides state-of-the-art performance, offering a computationally inexpensive substitute to MCMC, whilst performing comparably or better than existing MAP methods. Additionally, we analyze three real world datasets wherein we highlight the practical utility of our method, demonstrating that GSVB provides parsimonious models with excellent predictive performance, variable selection and uncertainty quantification. 
\end{abstract}

\begin{keyword}[class=MSC]
\kwd[Primary ]{62J12}
\end{keyword}

\begin{keyword}
\kwd{High-dimensional regression}
\kwd{Group sparsity}
\kwd{Generalized Linear Models}
\kwd{Sparse priors}
\end{keyword}
\end{frontmatter}

 \newcommand{\blacklinethick}{\raisebox{3pt}{%
    \protect\tikz{ \protect\draw[-,black,solid,line width = 1.2pt](0,0) -- (8mm,0);}}}
 \newcommand{\thickline}[1]{\raisebox{2pt}{%
    \protect\tikz{ \protect\draw[-,#1,solid,line width = 1.2pt](0,0) -- (5mm,0);}}}
 \newcommand{\dashedline}[1]{\raisebox{2pt}{%
    \protect\tikz{ \protect\draw[-,#1,dashed,line width = 0.8pt](0,0) -- (5mm,0);}}}
\newcommand{\pointFilled}[1]{\raisebox{1.2pt}{%
   \protect\tikz{ \protect\draw[-,#1,fill=#1](0,0) circle (1pt);}}}

\definecolor{tabgreen}{HTML}{2ca02c}
\definecolor{tabblue}{HTML}{1f77b4}
\definecolor{taborange}{HTML}{ff7f0e}
\definecolor{magenta}{HTML}{ff00ff}

\definecolor{setting1}{HTML}{FFFFFF} 
\definecolor{setting2}{HTML}{AACBAA} 
\definecolor{setting3}{HTML}{559755} 
\definecolor{setting4}{HTML}{006400}
\definecolor{gsvb-j}{HTML}{9932CC} 
\definecolor{gsvb-d}{HTML}{4DAF4A} 
\definecolor{gsvb-b}{HTML}{E41A1C} 
\definecolor{ssgl}{HTML}{377EB8} 
\definecolor{mcmc}{HTML}{FF7F00} 
\definecolor{dnf}{HTML}{FF0000} 
\newcommand{\colorSquareB}[1]{\textcolor{black}{\setlength{\fboxsep}{0pt}\fbox{\textcolor{#1}{\rule{7pt}{7pt}}}}}
\newcommand{\colorSquare}[1]{\textcolor{#1}{\rule{7pt}{7pt}}}

\newcommand{\blacklinethin}{\raisebox{3pt}{%
    \protect\tikz{\protect\draw[-,black,solid,line width = 0.4pt](0,0) -- (8mm,0);}}}
\newcommand{\whitediamond}{{$\diamond$}}
\newcommand{\thickcolorline}[1]{\raisebox{3pt}{\protect\tikz{\protect\draw[-,#1,solid,line width = 1.5pt](0,0) -- (8mm,0);}}}

\newcommand{\thickredline}{\raisebox{3pt}{\protect\tikz{\protect\draw[-,red,solid,line width = 1.5pt](0,0) -- (8mm,0);}}}
\definecolor{p90}{HTML}{006400} 
\definecolor{p95}{HTML}{559755} 
\definecolor{p99}{HTML}{aacbaa}
\definecolor{p90B}{HTML}{00008b} 
\definecolor{p95B}{HTML}{5555b1} 
\definecolor{p99B}{HTML}{aaaad8}
\newcommand{\thickgreen}{\raisebox{3pt}{\protect\tikz{\protect\draw[-,p90!70,solid,line width = 1.5pt](0,0) -- (8mm,0);}}}
\newcommand{\thickblue}{\raisebox{3pt}{\protect\tikz{\protect\draw[-,p90B!70,solid,line width = 1.5pt](0,0) -- (8mm,0);}}}
\newcommand{\redline}{\raisebox{2pt}{\protect\tikz{\protect\draw[-,red,dotted,line width = 0.9pt](0,0) -- (8mm,0);}}}
\newcommand{\blackline}{\raisebox{2pt}{\protect\tikz{\protect\draw[-,black,solid,line width = 1.2pt](0,0) -- (8mm,0);}}}
\newcommand{\blackdot}{{\color{black} $\bullet$}}
\newcommand{\greydot}{{\color{lightgray} $\bullet$}}

\newpage
\section{Introduction}

Group structures arise in various applications, such as genetics \citep{Wang2007, Breheny2009}, imaging \citep{Lee2021}, multi-factor analysis of variance \citep{Meier2008}, non-parametric regression \citep{Huang2010a}, 
and multi-task learning, 
among others. In these settings, $p$-dimensional feature vectors, 
$x_i = (x_{i1}, \dots, x_{ip})^\top \in \mathbb{R}^p$, for $i=1,\dots,n$ observations
can be partitioned into groups of features. Formally this means that we can construct sub-vectors, $x_{G_k} = \{x_{j} : j \in G_k \}$ for $k=1,\dots, M$, 
where the groups $G_k = \{ G_{k,1}, \dots, G_{k, m_k} \}$ are disjoint sets of indices satisfying $\bigcup_{k=1}^M G_k = \{1, \dots, p \}$. Typically, these group structures are known beforehand, for example in genetics where biological pathways (gene sets) are known, or they are constructed artificially, for example, through a basis expansion in non-parametric additive models. 

In the regression setting incorporating these group structures is crucial, as disregarding them can result in sub-optimal models \citep{Huang2010, Lounici2011}. 
In this manuscript, we focus on the general linear regression model (GLM) where, for each observation $i=1,\dots, n$, the response, $Y_i$   can be modelled by
\begin{equation} \label{eq:model} 
    \E[Y_i | x_i, \beta] = f \left( \sum_{k=1}^M x_{i, G_k}^\top \beta_{G_k} \right), \quad i = 1, \dots, n
\end{equation}
where $f: \mathbb{R} \rightarrow \mathbb{R}$ represents the link function, $\beta = (\beta_1, \dots, \beta_p)^\top \in \mathbb{R}^p$ denotes the model coefficient vector with $\beta_{G_k} = \{\beta_{j} : j \in G_k \}$. 
Beyond incorporating the group structure,
it is often of practical importance to identify the groups of features that are associated with the response. This holds particularly true when there are a large number of them. To address this, various methods have been proposed over the years, with one of the most popular being the group LASSO \citep{Yuan2006}, which applies an $\ell_{2, 1}$ norm to groups of coefficients. Following \cite{Yuan2006} there have been numerous extensions, including the group SCAD \citep{Wang2007}, the group LASSO for logistic regression \citep{Meier2008}, the group bridge \citep{Huang2009}, group LASSO with overlapping groups \citep{Jacob2009a}, 
among others (see \cite{Huang2012} for a detailed review of frequentist methods).



In a similar vein, Bayesian group selection methods have arisen. The earliest of which being the Bayesian group LASSO \citep{Raman2009, Kyung2010}, 
which uses a multivariate double exponential distribution prior to impose shrinkage on groups of coefficients. Notably, the maximum a posteriori (MAP) estimate under this prior coincides with the estimate under the group LASSO. Other methods include Bayesian sparse group selection \citep{Xu2015, Chen2016} and the group spike-and-slab LASSO \citep{Bai2020} 
which approach the problem via stochastic search variable selection \citep{Mitchell1988, Chipman1996}. 
Formally, these methods utilize a group spike-and-slab prior, a mixture distribution of a multivariate Dirac mass on zero and a continuous distribution over $\R^{m_k}$, where $m_k$ is the size of the $k$th group. Such priors have been shown to work excellently for variable selection as they are able to set coefficients exactly to zero, avoiding the use of shrinkage to enforce sparsity. For a comprehensive review see \cite{Lai2021} and \cite{Jreich2022}.

However, a serious drawback of these methods is that most use Markov Chain Monte Carlo (MCMC) to sample from the posterior \citep{Raman2009, Xu2015, Chen2016}. In general, MCMC approaches are known to be computationally expensive when there are a large number of variables, and often exhibit poor mixing and slow convergence.
To circumvent these issues, some authors proposed computing MAP estimates \citep{Kyung2010, Bai2020}, by relaxing the form of the prior, replacing the multivariate Dirac mass with a continuous distribution concentrated at zero \citep{Rockova2018}. Although these algorithms are fast to compute, they come at the sacrifice of interpretability, as posterior inclusion probabilities no longer guarantee the coefficient is zero but rather concentrated at zero. Beyond this, these algorithms only return a point estimate for $\beta$ and therefore do not provide uncertainty quantification -- a task at the heart of Bayesian inference.

To bridge the gap between scalability and uncertainty quantification several authors have turned to variational inference (VI). An approach to inference wherein the posterior distribution is approximated by a tractable family of distributions known as the variational family (see \cite{Zhang2019} for a review). In the context of high-dimensional Bayesian inference, VI has proven particularly successful, and has been employed in linear regression \citep{Carbonetto2012, Ormerod2017, RS22},  logistic regression \citep{Ray2020} and survival analysis \citep{Komodromos2021} to name a few. Within the context of Bayesian group regression, to our knowledge only the Bayesian group LASSO has seen a variational counterpart \citep{Babacan2014}.

In this manuscript we examine the variational Bayes (VB) posterior arising from the group spike-and-slab prior with multivariate double exponential slab and Dirac spike, referring to our method as Group Spike-and-slab Variational Bayes (GSVB).
 We provide scalable variational approximations to three common classes of generalized linear models: the Gaussian with identity link function, Binomial with logistic link function, and Poisson with exponential link function. We outline a general scheme for computing the variational posterior via co-ordinate ascent variational inference. We further show that for specific cases, the variational family can be re-parameterized to allow for more efficient updates.

%

Through extensive numerical experiments we demonstrate that GSVB achieves state-of-the-art performance while significantly reducing computation time by several orders of magnitude compared to MCMC.
Moreover, through our comparison with MCMC, we highlight that the variational posterior provides excellent uncertainty quantification through marginal credible sets with impressive coverage. Additionally, the proposed method is compared against the spike-and-slab group LASSO \citep{Bai2020}, a state-of-the-art Bayesian group selection algorithm that returns MAP estimates. Within this comparison our method demonstrates comparable or better performance in terms of group selection and effect estimation, whilst carrying the added benefit of providing uncertainty quantification, a feature not available by other scalable methods in the literature. 

To highlight the practical utility of our method, we analyse three real datasets, demonstrating that our method provides excellent predictive accuracy, while also achieving parsimonious models. Additionally, we illustrate the usefulness of the VB posterior by showing its ability to provide posterior predictive intervals, a feature not available to methods that provide MAP estimates, and computationally prohibitive to compute via MCMC.

Theoretical guarantees for our method are provided in the form of posterior contraction rates, which quantify how far the posterior places most of its probability from the `ground truth' generating the data for large sample sizes. This approach builds on the work of \cite{CSV15} and \cite{RS22}, and extends previous Bayesian contraction rate results for true posteriors in the group sparse setting \citep{NJG20,Bai2020} to their variational approximation.

Concurrent to our work, \cite{lin2023} proposed a similar spike-and-slab model for group variable selection in linear regression, which appeared on arXiv shortly after our preprint.

\noindent \textbf{Notation.} Let $y = (y_1, \dots, y_n)^\top \in \R^n$ denote the realization of the random vector $Y = (Y_1, \dots, Y_n)$. Further, let $X = (x_1, \dots, x_n)^\top \in \R^{n \times p}$ denote the design matrix, where for a group $G_k$ let  $X_{G_k} = (x_{1, G_k}, \dots, x_{n, G_k})^\top \in \R^{n \times m_k}$. Similarly, for $G_k^c = \{1,\dots, p \} \setminus G_k$, denote $X_{G_k^c} = (x_{1, G_k^c}, \dots, x_{n, G_k^c})^\top \in \R^{n \times (p - m_k)}$ where $x_{i, G_k^c} = \{x_{ij} : j \in G_k^c \}$, and $\beta_{G_k^c} = \{ \beta_j : j \in G_k^c \}$. Wherein, without loss of generality we assume that the elements of the groups are ordered such that $1 = G_{1,1} < G_{1,2} < \dots < G_{M, m_M} = p$. Finally, the Kullback-Leibler divergence is defined as $\KL = \KL(Q \| P) = \int_{\Xc} \log \left( \frac{dQ}{dP} \right) dQ $,
where $Q$ and $P$ are probability measures on $\Xc$, such that $ Q $ is absolutely continuous with respect to $P$.

\section{Prior and Variational family}  \label{sec:prior}


To model the coefficients $\beta$, we consider a group spike-and-slab prior. For each group $G_k$, the prior over $\beta_{G_k}$ consists of a mixture distribution of a multivariate Dirac
mass on zero and a multivariate double exponential distribution,  
$\Psi(\beta_{G_k})$, whose density is given by, 
\begin{equation*}
 \psi(\beta_{G_k}; \lambda) = C_k \lambda^{m_k} \exp \left( - \lambda \| \beta_{G_k} \| \right)    
\end{equation*}
where $ C_k = [ 2^{m_k} \pi^{(m_k -1)/2} \Gamma ( (m_k + 1) /2 ) ]^{-1} $ and $ \| \cdot \| $ is the $\ell_2$-norm.
In the context of sparse Bayesian group regression the multivariate double exponential has been previously considered by \cite{Raman2009} and \cite{Kyung2010} as part of the Bayesian group LASSO
and by \cite{Xu2015} 
within a group spike-and-slab prior. 

Formally the prior, which we consider throughout, is given by $\Pi(\beta) = \bigotimes_{k=1}^M \Pi_k(\beta_{G_k})$, where each $\Pi_k(\beta_{G_k})$ has the hierarchical representation,
\begin{equation} \label{eq:prior}
\begin{aligned}
    \beta_{G_k} | z_k \overset{\text{ind}}{\sim} &\ z_k \Psi(\beta_{G_k}; \lambda) + (1-z_k) \delta_0(\beta_{G_k}) \\
    z_k | \theta_k \overset{\text{ind}}{\sim} &\ \text{Bernoulli}(\theta_k) \\
    \theta_k \overset{\text{iid}}{\sim} &\ \text{Beta}(a_0, b_0)
\end{aligned}
\end{equation}
for $k=1,\dots,M$, where $\delta_0(\beta_{G_k})$ is the multivariate Dirac mass on zero with dimension $m_k = \dim(\beta_{G_k})$. 
In conjunction with the log-likelihood, $\ell(\D; \beta)$ for a dataset $\D = \{(y_i, x_i) \}_{i=1}^n$, we write the posterior density under the given prior as,
\begin{equation} \label{eq:posterior} 
d\Pi(\beta | \D) = \Pi_D^{-1} e^{\ell(\D; \beta)} d\Pi(\beta)
\end{equation}
where $\Pi_\D = \int_{\R^{p}} e^{\ell(\D; \beta)} d\Pi(\beta)$ is a normalization constant known as the model evidence.


The posterior arising from the prior \eqref{eq:prior} and the dataset $\D$ assigns probability mass to all $2^M$ possible sub-models, i.e. each subset $\S \subseteq \{1, \dots, M\}$, such that $z_k = 1, k \in \S$ and $z_k = 0$ otherwise. 
As using MCMC procedures to sample from this complex posterior distribution is computationally prohibitive, even for a moderate number of groups, we resort to variational inference to approximate it.
This approximation is known as the variational posterior 
which is an element of the variational family $\Q$ given by
\begin{equation} \label{eq:optim} 
\tilde{\Pi} = \underset{Q \in \Q}{\argmin}\ \KL\left( Q \| \Pi(\cdot |\D) \right)\;.
\end{equation}

For our purposes the variational family we consider is a mean-field variational family,
\begin{equation} \label{eq:family_1}
    \Q =
    \left\{ Q(\mu, \sigma, \gamma) = 
	\bigotimes_{k=1}^M Q_k(\mu_{G_k}, \sigma_{G_k}, \gamma_k) :=
	\bigotimes_{k=1}^M 
	\left[ 
	    \gamma_k\ N_k\left(\mu_{G_k}, \diag(\sigma_{G_k}^2) \right) + 
	    (1-\gamma_k) \delta_0
	\right] 
    \right\} 
\end{equation}
where $\mu \in \R^p$ with $\mu_{G_k} = \{ \mu_j : j \in G_k \}$, $\sigma^2 \in \R_+^p$ with $\sigma^2_{G_k} = \{\sigma^2_j : j \in G_k \}$, $ \gamma = (\gamma_1, \dots, \gamma_M)^\top \in [0, 1]^M $. 
$N_k(\mu, \Sigma)$ denotes the multivariate Normal distribution with mean parameter $\mu$ and covariance $\Sigma$. Notably, under $Q \in \Q$, the vector of coefficients for each group $G_k$ is a spike-and-slab distribution where the slab consists of the product of independent Normal distributions,
\begin{equation*}
    \beta_{G_k} \overset{\text{ind}}{\sim} \gamma_k \big[ \bigotimes_{j \in G_k} N(\mu_j, \sigma_j^2) \big] + (1-\gamma_k) \delta_0,
\end{equation*}
meaning the structure (correlations) between elements within the same group are not captured.
To mitigate this, a second variational family is introduced, where the covariance within groups is unrestricted. Formally,
\begin{equation}\label{eq:family_2}
    \Q' =
    \left\{ Q'(\mu, \Sigma, \gamma) = 
	\bigotimes_{k=1}^M Q_k'(\mu_{G_k}, \Sigma_{G_k}, \gamma_k) :=
	\bigotimes_{k=1}^M 
	\left[ 
	    \gamma_k\ N\left(\mu_{G_k}, \Sigma_{G_k} \right) + 
	    (1-\gamma_k) \delta_0
	\right] 
    \right\} 
\end{equation}
where $\Sigma \in \R^{p \times p}$ is a covariance matrix for which $\Sigma_{ij} = 0$, for $i \in G_k, j \in G_l, k \neq l$ and $\Sigma_{G_k} = (\Sigma_{ij})_{i, j \in G_k} \in \R^{m_k \times m_k}$ denotes the covariance matrix of the $k$th group.  Notably, $\Q'$ should provide greater flexibility when approximating the posterior, because unlike $\Q \subset \Q'$, it is able to capture the dependence between coefficients in the same group. The importance of 
which is
highlighted empirically in \Cref{sec:simulations} wherein the two families are compared.

Note that the posterior does not take the form $Q$ or $Q'$, as the use of these variational families replaces the $2^M$ model weights by $M$ VB group inclusion probabilities, $\gamma_k$, thereby introducing substantial additional independence. For example, information as to whether two groups of variables are selected together or not is lost. However, the form of the VB approximation retains many of the interpretable features of the original posterior such as the inclusion probabilities of particular groups.


\section{Computing the variational posterior} \label{c5:sec:method}

\red{
Computing the variational posterior defined in \eqref{eq:optim} relies on optimizing 
a lower bound on the model evidence, $\Pi_\D$, known as the evidence lower bound (ELBO). The ELBO follows from the non-negativity of the Kullback-Leibler divergence and is given by
\begin{equation}
    \L_{Q'}(\D) := \E_{Q'} \left[ \ell(\D; \beta) - \log \frac{dQ'}{d\Pi} \right].
\end{equation} 
Intuitively, the first term evaluates the model's fit to the data, while the second term acts as a regularizer, ensuring the variational posterior is ``close'' to the prior distribution. 
}

Various strategies exist to tackle the maximization of the ELBO,
with respect to $Q' \in \Q'$ \citep{Zhang2019}. One popular approach is co-ordinate ascent variational inference (CAVI), where sets of parameters of the variational family are optimized in turn while the remainder are kept fixed. Although this strategy does not (in general) guarantee the global optimum, it is easy to implement and often leads to scalable inference algorithms. In the following subsections we detail how this algorithm is constructed, noting that derivations are presented under the variational family $\Q'$, as the equations for $\Q \subset \Q'$ follow by restricting $\Sigma_{G_k}$ to be a diagonal matrix.

\subsection{Computing the Evidence lower bound}

To compute the ELBO we begin by deriving the KL divergence between the variational distribution and the prior, $\KL(Q' \| \Pi) = \E_{Q'} \left[ \log(dQ' / d\Pi) \right]$, which we note does not change regardless of the form of the likelihood. To evaluate an expression for this quantity, the group independence structure between the variational distribution and prior is exploited, allowing for the log Radon-Nikodym derivative, $\log \left( dQ'/ d\Pi \right)$, to be expressed as, 
\begin{equation*}
     \log \frac{dQ'}{d\Pi}(\beta) = 
    \sum_{k=1}^M \log \frac{d Q_k'}{d\Pi_k} (\beta_{G_k}) 
    = \sum_{k=1}^M \mathbb{I}_{z_k = 1} \log \frac{\gamma_k dN_k}{\bar{w} d \Psi_k}(\beta_{G_k})
    + \mathbb{I}_{z_k = 0} \log \frac{(1-\gamma_k) d\delta_0}{(1 - \bar{w}) d \delta_0} (\beta_{G_k}).
\end{equation*}
where $\bar{w} = {\alpha_0} / ({\alpha_0 + b_0})$. Subsequently, it follows that, 
\begin{equation} \label{eq:penalty} 
\begin{aligned}
\KL(Q' \| \Pi) = &\
    \sum_{k=1}^M 
\bigg(
    \gamma_k \log \frac{\gamma_k}{\bar{w}}
-
    \frac{\gamma_k}{2} \log( \det (2 \pi \Sigma_{G_k}))
-
    \frac{\gamma_k m_k}{2}
- 
    \gamma_k \log(C_k) \\
- &\
    \gamma_k m_k \log (\lambda)
+
    \E_{Q'} \left[ \I_{z_k =1} \lambda \| \beta_{G_k} \| \right]
+
    (1-\gamma_k) \log \frac{1 - \gamma_k}{1 - \bar{w}}
\bigg)
\end{aligned}
\end{equation}
where we have used the fact that
$\E_{\beta_{G_k} \sim N(\mu_{G_k}, \Sigma_{G_k})} [ (\beta_{G_k} - \mu_{G_k})^\top \Sigma_{G_k}^{-1} (\beta_{G_k} - \mu_{G_k}) ] = m_k$. 

Notably, there is no closed form for $\E_{Q'} \left[ \I_{z_k =1} \lambda \| \beta_{G_k} \| \right]$, meaning optimization over this quantity would require costly Monte Carlo methods.

To circumvent this issue, a surrogate objective is constructed and used instead, which follows by applying Jensen's inequality to 
$\E_{Q'} \left[ \I_{z_k =1} \lambda \| \beta_{G_k} \| \right]$, giving,
\begin{equation}
    \E_{Q'} \left[ \I_{z_k =1} \lambda \| \beta_{G_k} \| \right] = \gamma_k \E_{N_k} \left[ \lambda \| \beta_{G_k} \| \right] \leq \gamma_k \lambda \left( \sum_{i \in G_k} \Sigma_{ii} + \mu_i^2 \right)^{1/2}.
\end{equation}
Thus, $\KL(Q' \| \Pi)$ can be upper-bounded by,
\begin{equation}
\begin{aligned}
  \varrho(& \mu, \Sigma, \gamma) :=
    \sum_{k=1}^M 
\bigg(
    \gamma_k \log \frac{\gamma_k}{\bar{w}}
-
    \frac{\gamma_k}{2} \log( \det (2 \pi \Sigma_{G_k}))
-
    \frac{\gamma_k m_k}{2} \\
- &\
    \gamma_k \log(C_k) 
- 
    \gamma_k m_k \log (\lambda)
+
     \gamma_k \lambda \left( \sum_{i \in G_k} \Sigma_{ii} + \mu_i^2 \right)^{1/2}
+
    (1-\gamma_k) \log \frac{1 - \gamma_k}{1 - \bar{w}}
\bigg)
\end{aligned}
\end{equation}
and in turn, \red{$\L(Q')(\D) \geq  \E_{Q'} \left[ \ell(\D; \beta) \right] - \varrho ( \mu, \Sigma, \gamma )$}. Via this \red{lower} bound, we are able to construct a tractable surrogate objective. Formally, this objective is denoted by,
$\F(\mu, \Sigma, \gamma, \theta)$, where we introduce $\vartheta$ to parameterize additional hyperparameters, specifically those introduced to model the variance term under the Gaussian family, and those introduced to bound the Binomial likelihood. Under this parameterization, we define,
\red{\begin{equation} \label{eq:objective} 
    \F(\mu, \Sigma, \gamma, \vartheta) := 
    \varLambda(\mu, \Sigma, \gamma, \vartheta)
    - \tilde{\varrho}(\vartheta) 
    - \varrho ( \mu, \Sigma, \gamma)
    \leq \L_{Q'}(\D),
\end{equation}}
where $\tilde{\varrho}(\vartheta) : \varTheta \rightarrow \R $ and \red{$\varLambda(\mu, \Sigma, \gamma, \vartheta) \leq \E_{Q'} \left[ \ell(\D; \beta) \right] $} (which is an inequality under non-tractable likelihoods). To this end, what remains to be computed is an expression for the expected log-likelihood and the term $\tilde{\varrho}(\vartheta)$ where appropriate. In the following three subsections, we provide derivations of the objective function $\F(\mu, \Sigma, \gamma, \theta)$ for the Gaussian, Binomial, and Poisson likelihoods taking the canonical link functions for each.

\subsubsection{Gaussian}

Under the Gaussian family with identity link function, $Y_i \overset{\text{iid}}{\sim} N(x_i^\top \beta, \tau^2)$ for all $i=1,\dots, n$, the log-likelihood is given by $\ell(\D; \beta, \tau^2) = - \frac{n}{2}\log(2\pi\tau^2) - \frac{1}{2\tau^2} \| y - X \beta \|^2$. To model $\tau^2$, an inverse Gamma prior is considered due to its popularity among practitioners, i.e., $\tau^2 \overset{\text{ind}}{\sim} \IG(a, b)$, which has density $\frac{b^a}{\Gamma(a)} x^{-a - 1} \exp\left(\frac{-b}{x}\right)$ where $a, b >0$. The prior is therefore given by $\Pi(\beta, \tau^2) = \Pi(\beta) \Gamma^{-1}(\tau^2)$, and the posterior density by $d\Pi(\beta, \tau^2 | \D) = \Pi_D^{-1} e^{\ell(\D; \beta, \tau^2)} d\Pi(\beta, \tau^2)$, where $\Pi_\D = \int_{\R^{p} \times \R_+} e^{\ell(\D; \beta, \tau^2)} d\Pi(\beta, \tau^2)$ and $\D = \{ (y_i, x_i) \}_{i=1}^n$ is the observed data. 

To include this term within the inference procedure, the variational families $\Q$ and $\Q'$ are extended by $\Q_\tau = \Q \times \{\IG(a', b') : a', b' > 0 \}$ and $\Q'_\tau = \Q' \times \{ \IG(a', b') : a', b' > 0\}$ respectively. Under the extended variational family $\Q'_\tau$, \red{the ELBO is given by 
$$
\E_{Q'_\tau} \bigg[ 
    \ell(\D; \beta, \tau^2) 
    - \log \frac{dQ'}{d\Pi}(\beta) 
    - \log \frac{d \IG(a', b')}{d \IG(a, b)}(\tau^2) 
    \bigg],
$$}
where the final term follows from the independence of $\tau^2$ and $\beta$ in the prior and variational family and is given by,
\begin{equation} \label{eq:gaus_rho}
    \tilde{\varrho}(\vartheta) 
     := 
     \E_{Q'_\tau} \left[ \log \left( \frac{d\IG(a', b')}{ d \IG(a, b)} \right) \right] 
    = 
    (a' - a)\kappa(a') 
    + a \log \frac{b'}{b}
    + \log \frac{\Gamma(a)}{\Gamma(a')}
    + \frac{(b - b')a'}{b'}
\end{equation}
where $\vartheta = \{ (a', b') \}$.
The expectation of the log-likelihood is given by,
\red{\begin{equation} \label{eq:gaus_lambda}
\begin{aligned}
    \varLambda(\mu, \Sigma, & \gamma, \vartheta) :=  
     \E_{Q'_\tau} 
     \left[ 
	 \ell(\D; \beta) 
     \right]  
 =
    - \frac{n}{2}(\log(2\pi) + \log(b') - \kappa(a'))
    \\
    - &\
    \frac{a'}{2b'}
    \Bigg(
	\| y \|^2 
	+
	\left(
	\sum_{i=1}^p \sum_{j=1}^{p} 
	    (X^\top X)_{ij} 
	    \E_{Q'_\tau} \left[ \beta_i \beta_j \right] 
	\right)
	 - 2 \sum_{k=1}^M \gamma_k \langle y, X_{G_k} \mu_{G_k} \rangle
    \Bigg)
\end{aligned}
\end{equation}}
where,
\begin{equation} \label{eq:exp_bibj}
    \E_{Q'_\tau} \left[ \beta_i \beta_j \right] = \begin{cases}
	\gamma_k \left( \Sigma_{ij} + \mu_{i} \mu_{j} \right) & \quad i,j \in G_k \\
	\gamma_k \gamma_h \mu_{i}\mu_{j}       & \quad i \in G_k, j \in G_h, h \neq k
    \end{cases}
\end{equation}
Substituting \eqref{eq:gaus_rho} and \eqref{eq:gaus_lambda} into \eqref{eq:objective} gives $\F(\mu, \Sigma, \gamma, \theta)$ under the Gaussian family.

\subsubsection{Binomial}

Under the Binomial family with logistic link, $Y_i \overset{\text{iid}}{\sim} \text{Bernoulli}(p_i)$ for all $i=1,\dots,n$ where $p_i = \P(Y_i = 1 | x_i) = \exp(x_i^\top \beta) / (1 + \exp(x_i^\top \beta))$.  The log-likelihood is given by
$ 
    \ell(\D, \beta) 
    = 
	\sum_{i=1}^n  y_i \left( x_i^\top \beta \right) 
	- \log \left(1 + \exp(x_i^\top \beta) \right)
$
where $\D = \{(y_i, x_i)\}_{i=1}^n$ with $y_i \in \{0, 1\}$.

Unlike the Gaussian family, variational inference in this setting is challenging because of the intractability of the expected log-likelihood under the variational family. To overcome this issue several authors have proposed bounds or approximations to maintain tractability (see \cite{Depraetere2017a} for a review). Here we employ a bound introduced by \cite{Jakkola1997}, given as
\begin{equation} \label{c4:eq:jj_bound}
    \sigmoid(x) \geq \sigmoid(t) \exp \left\{ \frac{x-t}{2} - \frac{a(t)}{2} (x^2 - t^2) \right\}
\end{equation}
where $\sigmoid(x) = (1 + \exp(-x))^{-1}$ and  $a(t) = \frac{\sigmoid (t) - 1/2}{t}$, $x \in \R$ and $t \in \R$ is an additional parameter that must be optimized to ensure tightness of the bound. 

\red{Using \eqref{c4:eq:jj_bound} allows for the log-likelihood to be bounded by, 
\begin{align}
    \ell(\D; &\ \beta)
    \geq
	\sum_{i=1}^n  
	    y_i x_i^\top \beta 
	    + \log \sigmoid (t_i)
	    - \frac{x_i^\top \beta + t_i}{2}
	    - \frac{a(t_i)}{2} ((x_i^\top \beta)^2 - t_i^2)
    \label{eq:jj_likelihood}
\end{align}
where $t_i\in\R$ is a hyper-parameter for each observation.
Taking the expectation of \eqref{eq:jj_likelihood} with respect to the variational family gives
\begin{equation} \label{eq:binom_lambda}
\begin{aligned}
    \E_{Q'} \left[ \ell (\D; \beta) \right] 
    \geq 
    \varLambda(\mu, \Sigma, \gamma, \vartheta) 
    := &
	\sum_{i=1}^n  
     \left( \sum_{k=1}^M \gamma_k (y_i - 1/2) x_{i, G_k}^\top \mu_{G_k} \right)
	    - \frac{t_i}{2}
	    + \log \sigmoid (t_i) \\
	    &\qquad
         -\frac{a(t_i)}{2} \left( \left( \sum_{j=1}^p \sum_{l=1}^p (x_{ij} x_{il} \E_{Q'} \left[ \beta_j \beta_l \right] \right) - t_i^2 \right)
\end{aligned}
\end{equation}}
where $\vartheta = \{ t_1, \dots, t_n \}$. In turn substituting \eqref{eq:binom_lambda} and $\tilde{\varrho}(\vartheta) = 0$ into \eqref{eq:objective} gives the objective under the Binomial family.

\subsubsection{Poisson}

Finally, under the Poisson family with an exponential link function, $Y_i \overset{\text{iid}}{\sim} \text{Poisson} (\lambda_i)$ for all $i=1,\dots, n$ with $\lambda_i = \exp(x_i^\top \beta) > 0$. The log-likelihood is given by $\ell(\D; \beta) = \sum_{i=1}^n y_i x_i^\top \beta - \exp(x_i^\top \beta) - \log(y!)$, whose \red{expectation is tractable and given by
\begin{equation}
    \E_{Q'} \left[ \ell (\D; \beta) \right] = 
    \varLambda(\mu, \Sigma, \gamma, \vartheta) 
    := 
    \sum_{i=1}^n 
    \left( \sum_{k=1}^M \gamma_k y_i x_{i, G_k}^\top \mu_{G_k} \right)
    -
    M_{Q'}(x_i)
    - 
    \log(y!)
\end{equation}}
where $ M_{Q'}(x_i) = \prod_{k=1}^M M_{Q_k}(x_{i, G_k}) $ is the moment generating function under the variational family, with $M_{Q_k}(x_{i, G_k}) := \gamma_k M_{N_k}(x_{i, G_k}) + (1- \gamma_k)$ being the moment generating function for the $k$th group and $M_{N_k}(x_{i, G_k}) = \exp \left\{ x_{i, G_k}^\top \mu_{G_k} + \frac{1}{2} x_{i, G_k}^\top \Sigma_{G_k} x_{i, G_k} \right\}$. Unlike the previous two families, the Poisson family does not require any additional variational parameters; therefore, $\vartheta = \{ \}$ and $\tilde{\varrho}(\vartheta) = 0$.

\subsection{Coordinate ascent algorithm} \label{sec:algorithm} 

Recall the aim is to approximate the posterior $\Pi(\cdot | \D)$ by a distribution from a given variational family. This approximation is obtained via the maximization of the objective $\F$ derived in the previous section. To achieve this, a CAVI algorithm is proposed as outlined in \Cref{alg:cavi_gsvb_gaussain}. 

In this context, the objective introduced in \eqref{eq:objective} is written as $\F(\mu, \Sigma, \gamma, \vartheta) = \F(\mu_{G_k},\\ \mu_{G_k^c}, \Sigma_{G_k}, \Sigma_{G_k^c}, \gamma_k, \gamma_{-k}, \vartheta)$, highlighting the fact that optimization over the variational parameters occurs group-wise. Further, for each group $k$, while the optimization of the objective function over the inclusion probability, $\gamma_k$, can be done analytically, we use the Limited-memory Broyden–Fletcher–Goldfarb–Shanno optimization algorithm (L-BFGS) to update $\mu_{G_k}$ at each iteration of the CAVI procedure \citep{Nocedal1980}. Details for the optimization with respect to $\Sigma_{G_k}$ are presented in the following subsection. The hyperparameters, $\vartheta$, are updated using L-BFGS for the Gaussian family and analytically for those under the Binomial family. Finally, to assess convergence, the total absolute change in the parameters is tracked, terminating when this quantity falls below a specified threshold, set to $10^{-3}$ in our implementation. Other methods involve monitoring the absolute change in the ELBO; however, we found this prohibitively expensive to compute for this purpose.

\vspace{1em}
\begin{algorithm}[htp]
    \caption{Group sparse coordinate ascent variational inference}
    \label{alg:cavi_gsvb_gaussain}
    \begin{algorithmic}[0]
	\vspace{.3em}
	\State Initialize $ \mu, \Sigma, \gamma, \vartheta $
	\NoDo \While {not converged} \NoDo
	    \NoDo \For {$k \in \text{descending\_order}(\| \mu_{G_k} \|, k = 1,\dots, M) $}
	    \State $ \mu_{G_k} \leftarrow {\argmax}_{\mu_{G_k} \in \R^{m_k}}  
	    \  
        \F(\mu_{G_k}, \mu_{G_k^c}, \Sigma_{G_k}, \Sigma_{G_k^c}, \gamma_k=1, \gamma_{-k}, \vartheta)$
	    \State $ \Sigma_{G_k} \leftarrow {\argmax}_{\Sigma_{G_k} \in \R^{m_k \times m_k}} 
	    \ 
        \F(\mu_{G_k}, \mu_{G_k^c}, \Sigma_{G_k}, \Sigma_{G_k^c}, \gamma_k=1, \gamma_{-k}, \vartheta)$
	    \State $ \gamma_k \leftarrow {\argmax}_{\gamma_k \in [0, 1]} 
	    \ 
        \F(\mu_{G_k}, \mu_{G_k^c}, \Sigma_{G_k}, \Sigma_{G_k^c}, \gamma_k, \gamma_{-k}, \vartheta)$
	    \EndFor
	\State $ \vartheta \  \leftarrow {\argmax}_{\vartheta \in \varTheta}\ 
        \F(\mu_{G_k}, \mu_{G_k^c}, \Sigma_{G_k}, \Sigma_{G_k^c}, \gamma_k, \gamma_{-k}, \vartheta)$
	\EndWhile
	\State \Return $ \mu, \Sigma, \gamma, \vartheta$.
    \end{algorithmic}
\end{algorithm}

\subsubsection[Reparameterization of variance]{Reparameterization of $\Sigma_{G_k}$} \label{sec:reparam_trick} 

\red{
Our focus now turns to the optimization of $\F(\mu_{G_k}, \mu_{G_k^c}, \Sigma_{G_k}, \Sigma_{G_k^c}, \gamma_k=1, \gamma_{-k}, \vartheta)$ with respect to $\Sigma_{G_k}$. By using similar ideas to those of \cite{Seeger1999} and \cite{Opper2009}, it can be shown that only one free parameter is needed to describe the optimum of $\Sigma_{G_k}$ under the Gaussian and Binomial families. For the Poisson family, however, this is not the case, and so $\Sigma_{G_k}$ is parameterized by $U_k^\top U_k$, where $U_k \in \R^{m_k \times m_k}$ is an upper triangular matrix. Optimization is then performed on the upper triangular elements of $U_k$. 

Formally, the objective under the Gaussian family with respect to $\Sigma_{G_k}$ is given by
\begin{equation} \label{eq:obj_sig_gaus}
- \frac{a'}{2b'} \tr (X^\top_{G_k} X_{G_k} \Sigma_{G_k}) + \frac{1}{2} \log \det \Sigma_{G_k} - \lambda (\sum_{i \in G_k} \Sigma_{ii} + \mu^2_i)^{1/2} + C
\end{equation}
where $C$ is a constant that does not depend on $\Sigma_{G_k}$. Differentiating \eqref{eq:obj_sig_gaus} with respect to $\Sigma_{G_k}$, setting to zero, and rearranging gives
\begin{equation} \label{eq:sigma_free}
    \Sigma_{G_k} = 
    \left( 
	\frac{a'}{b'} X_{G_k}^\top X_{G_k} + 2 \nu_k I_{m_k}
    \right)^{-1}
\end{equation}
where $\nu_k = \frac{1}{2} \lambda (\sum_{i \in G_k} \Sigma_{ii} + \mu_{i}^2)^{-1/2}$. Thus, substituting $\Sigma_{G_k} = \left( \frac{a'}{b'} X^\top_{G_k} X_{G_k} + w_kI_{m_k} \right)^{-1}$, where $w_k \in \R$, into the objective and optimizing over $w_k$ is equivalent to optimizing over $\Sigma_{G_k}$. A similar reparameterization can be performed for the Binomial family by following the same procedure as above, and is given by
\begin{equation}
    \Sigma_{G_k} = (X^\top_{G_k} A_t X_{G_k} + w_k I)^{-1}
\end{equation}
where $w_k \in \R$ is the free parameter to be optimized and $A_t = \diag(a(t_1), \dots, a(t_n))$. Notably, this result follows due to the quadratic nature of the bound employed. 

These reparameterizations carry the benefit of requiring one free parameter to optimize $\Sigma_{G_k}$ rather than $m_k(m_k - 1)/2$. However, under this reparameterization, the inversion of an $m_k \times m_k$ matrix is required, which can be a time-consuming operation for large $m_k$. 
}

\subsubsection{Initialization and parameter update ordering}
\red{
As with any gradient-descent-based approach, our CAVI algorithm is sensitive to the choice of initial values. As such, we suggest initializing $\mu$ using the group LASSO with a small regularization parameter, as this ensures many of the elements are non-zero. A comprehensive evaluation of various initialization strategies is presented in Section \ref{c5:sec:initialization} of the Supplementary materials.

Regarding the covariance matrix $\Sigma_{G_k}$, this can be initialized by using the reparameterization outlined in \Cref{sec:reparam_trick} with an initial value of $w_k = 1$ for $k = 1, \dots, M$ for both the Gaussian and the Binomial families. For the Poisson family, we propose the use of an initial covariance matrix $\Sigma_{G_k} = \diag(0.2,\dots, 0.2)$. The inclusion probabilities are initialized as $\gamma = (0.5, \dots, 0.5)^\top$. For the additional hyperparameters $\vartheta$, we use: $a' = b' = 10^{-3}$ and 
$ t_i = \left( 
	\sum_{k=1}^M \gamma_k \left[
	    \langle \mu_{G_k},  x_{i, G_k} \rangle^2 + x_{i, G_k}^\top \Sigma_{G_k} x_{i, G_k}
	\right]
    \right)^{1/2}
$ for all $i=1,\dots, n$. 

We suggest updating the parameters in descending order of $\| \mu_{G_k} \|$, as in \cite{Ray2020}, this scheme yielded models with better performance in practice in contrast to the other ordering schemes. A numerical study where other ordering schemes is presented in Section \ref{c5:sec:initialization} of the Supplementary materials.
}
\section{Theoretical results for grouped linear regression}

\subsection{Notation and assumptions}

This section establishes frequentist theoretical guarantees for the proposed VB approach in sparse high-dimensional linear regression with group structure. The full proofs of the following results are provided in the Supplementary Material. 

  To simplify technicalities, the variance parameter $\tau^2$ is taken as known and equal to 1, giving model $Y = X\beta + \eps$ with $Y \in \R^n$, $X \in \R^{n \times p}$ and $\eps \sim N_n(0, I_n)$. Under suitable conditions, contraction rates for the variational posterior are established, which quantify its spread around the `ground truth' parameter $\beta_0 \in \R^p$ generating the data as $n,p\to \infty$. 

Recall that the covariates are split into $M$ pre-specified disjoint groups $G_1,\dots,G_M$ of size $|G_k|=m_k$ with $\sum_{k=1}^M m_k = p$ and maximal group size $\mm = \max_{k=1,\dots,M} m_k$. The above model can then be written as
\begin{equation}\label{eq:model_lin}
Y = \sum_{k=1}^M X_{G_k} \beta_{G_k} + \eps,
\end{equation}
with $\beta_{G_k} \in \R^{m_k}$ and $X_{G_k} \in \R^{n \times m_k}$. Let $P_\beta$ denote the law of $Y$ under \eqref{eq:model_lin}, $S_\beta \subseteq \{G_1,\dots,G_M\}$ be the set containing the indices of the non-zero \textit{groups} in $\beta \in \R^p$. For a vector $\beta \in \R^p$ and set $S$, we also write $\beta_S = (\beta_i)_{i\in G_k: G_k \in S}\in \R^{\sum_{G_k \in S}m_k}$ for its vector restriction to $S$. We write $\beta_0$ for the ground truth generating the data, $S_0 = S_{\beta_0}$ and $s_0 = |S_0|$ for its \textit{group-sparsity}. For a matrix $A \in \R^{m\times n}$, let $\|A\|_F^2 = \sum_{i=1}^m \sum_{j=1}^n A_{ij}^2 = \Tr(A^T A)$ be the Frobenius norm and define the group matrix norm of $X \in \R^{n\times p}$ by
$$\|X\| = \max_{k=1,\dots,M} \|X_{G_k}\|_F.$$
If all the groups are singletons, $\|X\|$ reduces to the same norm as in \cite{CSV15}. We further define the $\ell_{2,1}$-norm of a vector by $\|\beta\|_{2,1} = \sum_{k=1}^M \|\beta_{G_k}\|_2$. We assume that the prior slab scale $\lambda$ satisfies
\begin{equation}\label{eq:lambda2}
\underline{\lambda} \leq \lambda \leq 2\bar{\lambda}, \qquad \qquad \underline{\lambda}=\frac{\|X\|}{M^{1/\mm}}, \qquad \qquad \bar{\lambda}= 3 \|X\| \sqrt{\log M},
\end{equation}
mirroring the situation without grouping \citep{CSV15,RS22}.

The parameter $\beta$ in \eqref{eq:model_lin} is not estimable without additional assumptions on the design matrix $X$, for instance that $X^T X$ is invertible  for sparse subspaces of $\R^p$. These notions of invertibility can be precise via the following definitions, which are the natural adaptations of compatibility conditions to the group sparse setting \citep{BvdG2011,CSV15}.
\begin{definition}
A model $S \subseteq \{1,\dots,M\}$ has \textit{compatibility number}
$$\phi(S) = \inf \left\{ \frac{\|X\beta\|_2 |S|^{1/2}}{\|X\| \|\beta_S\|_{2,1}} : \|\beta_{S^c}\|_{2,1} \leq 7\|\beta_S\|_{2,1}, \beta_S \neq 0 \right\}.$$
\end{definition}
Compatibility considers only vectors whose coordinates are small outside $S$, and hence is a (weaker) notion of approximate rather than exact sparsity. For all $\beta$ in the above set, it holds that $\|X\beta\|_2 |S|^{1/2} \geq \phi(S) \|X\| \|\beta_S\|_{2,1}$, which can be interpreted as a form of continuous invertibility of $X$ for approximately sparse vectors in the sense that changes in $\beta_S$ lead to sufficiently large changes in $X\beta$ that can be detected by the data.  The number 7 is not important and is taken in Definition 2.1 of \cite{CSV15} to provide a specific numerical value; since we use several of their techniques, we employ the same convention. We next consider two further notions of invertibility for exact group sparsity.
\begin{definition}
\label{def:unif_compat}
The compatibility number for vectors of dimension $s$ is
$$\bar{\phi}(s) = \inf\left\{ \frac{\|X\beta\|_2\|S_\beta|^{1/2}}{\|X\|\|\beta\|_{2,1}} : 0\neq |S_\beta| \leq s \right\}.$$
\end{definition}
\begin{definition}
The smallest scaled sparse singular value of dimension $s$ is
$$\widetilde\phi (s) = \inf \left\{ \frac{\|X\beta\|_2}{\|X\| \|\beta\|_2} : 0\neq |S_\beta|\leq s \right\}.$$
\end{definition}
While $X^TX$ is not generally invertible in the high-dimensional setting, these last two definitions weaken this requirement to sparse vectors. These are natural extensions of the definitions in \cite{CSV15} to the group setting, and similar interpretations and relations to the usual sparse setting apply also here, see \cite{BvdG2011} or Section 2.2 in \cite{CSV15} for further discussion. 

The interplay of the group structure and individual sparsity can lead to multiple regimes see for instance \cite{Lounici2011} and \cite{BvdG2011}. To make our results more interpretable, we restrict to the main case of practical interest where the group sizes are not too large, and hence the group sparsity drives the estimation rate. 
\begin{customassumption}{(K)}\label{ass:theory}
There exists $K>0$ such that $\mm \log \mm \leq K \log M.$
\end{customassumption}
While related works make similar assumptions \citep{Bai2020}, introducing an explicit constant $K>0$ above allows us to clarify the uniformity in our results.

\subsection{Asymptotic results}

We now state our main result on variational posterior contraction for both \textit{prediction loss} $\|X(\beta-\beta_0)\|_2$ and the usual $\ell_2$-loss. Our results are uniform over vectors in sets of the form
\begin{align}\label{eq:Bn}
\mathcal{B}_{\rho_n,s_n} = \mathcal{B}_{\rho_n,s_n}(c_0) :=\{\beta_0 \in \R^p:\phi(S_{\beta_0}) \geq c_0, ~ |S_{\beta_0}|\leq s_n, ~ \widetilde{\phi}(\rho_n |S_{\beta_0}|) \geq c_0 \},
\end{align}
where $s_n \geq 1$, $c>0$ and $\rho_n \to\infty$ (arbitrarily slowly). \red{The sequence $\rho_n$ plays the role of an `arbitrarily large' constant in the smallest scaled sparse singular value above.}

\begin{theorem}[Contraction]\label{thm:contraction}
Suppose that Assumption \ref{ass:theory} holds, the prior \eqref{eq:prior} satisfies \eqref{eq:lambda2} and $s_n$ satisfies $\mm \log s_n = O(\log M)$. Then the variational posterior $\tilde{\Pi}$ based on either the variational family $\mathcal{Q}$ in \eqref{eq:family_1} or $\mathcal{Q}'$ in \eqref{eq:family_2} satisfies, with $s_0 = |S_{\beta_0}|$,
$$\sup_{ \beta_0\in \mathcal{B}_{\rho_n,s_n} } E_{\beta_0} \tilde{\Pi} \left( \beta: \|X(\beta-\beta_0)\|_2 \geq  \frac{H_0 \rho_n^{1/2} \sqrt{s_0\log M}}{\bar{\phi}(\rho_n s_0)} \right) \to 0$$
$$\sup_{ \beta_0\in \mathcal{B}_{\rho_n,s_n} } E_{\beta_0} \widetilde{\Pi} \left( \beta: \|\beta-\beta_0\|_2 \geq  \frac{H_0\rho_n^{1/2} \sqrt{s_0\log M}}{ \|X\| \widetilde{\phi}(\rho_n s_0)^2}  \right) \to 0 , $$
for any $\rho_n \to \infty$ (arbitrarily slowly), $\mathcal{B}_{\rho_n,s_n}$ defined in \eqref{eq:Bn} and where $H_0$ depends only on the prior.
\end{theorem}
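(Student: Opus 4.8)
The plan is to establish variational posterior contraction by combining a general variational contraction machinery (as developed by \cite{RS22} in the ungrouped case) with the group-sparse structure appearing in our prior \eqref{eq:prior} and variational families \eqref{eq:family_1}--\eqref{eq:family_2}. The backbone of the argument should be the standard two-part decomposition of the variational risk. Since $\tilde\Pi$ minimizes $\KL(Q\|\Pi(\cdot|\D))$ over $Q\in\Q$ (or $\Q'$), the key inequality is that for any fixed $Q^\star\in\Q$,
\begin{equation*}
\E_{\tilde\Pi}\big[\KL(P_{\beta_0}\|P_\beta)\big] \;\lesssim\; \E_{\beta_0}\!\left[\int \ell(\D;\beta_0)-\ell(\D;\beta)\, d\tilde\Pi(\beta)\right] + \KL(Q^\star\|\Pi) + (\text{const}),
\end{equation*}
so that the contraction rate is governed by (i) a bias term controlled by choosing a good test distribution $Q^\star$ concentrated near $\beta_0$, and (ii) the complexity term $\KL(Q^\star\|\Pi)$. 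In the Gaussian model $\KL(P_{\beta_0}\|P_\beta)=\tfrac12\|X(\beta-\beta_0)\|_2^2$, which directly yields the prediction-loss statement; the $\ell_2$-loss statement then follows by converting prediction loss to parameter loss using the smallest scaled sparse singular value $\widetilde\phi$, at the cost of the extra $\|X\|\widetilde\phi(\rho_n s_0)^2$ factor in the denominator and a truncation restricting attention to vectors of group-sparsity at most $\rho_n s_0$.

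First I would construct the test distribution $Q^\star\in\Q$: take $\gamma_k^\star$ close to $1$ on groups in $S_0$ and close to $0$ otherwise, with Gaussian slab means $\mu_{G_k}^\star=\beta_{0,G_k}$ and small slab variances (scaling like $1/\|X\|^2$ or finer, chosen so the slab mass concentrates at the right rate). Because $\Q\subset\Q'$, a bound proved for $\Q$ automatically transfers to $\Q'$, so it suffices to exhibit a good $Q^\star$ in the smaller family — this is what lets a single construction cover both families. Next I would bound $\KL(Q^\star\|\Pi)$ using the explicit expression \eqref{eq:penalty}: the inclusion-probability terms contribute an entropy cost of order $s_0\log M$ (this is where the group count $M$ enters, via $\bar w\asymp a_0/(a_0+b_0)$ and the prior hyperparameters), the slab-normalization and $\log\det\Sigma$ terms contribute lower-order factors controlled by Assumption \ref{ass:theory} bounding $\mm\log\mm$ against $\log M$, and the $\lambda\|\beta_{G_k}\|$ term contributes order $\lambda\|\beta_{0,S_0}\|_{2,1}$, which the slab-scale bounds \eqref{eq:lambda2} keep at order $\sqrt{s_0\log M}$. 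The dominant complexity cost is therefore $s_0\log M$, matching the numerator $\rho_n s_0\log M$ up to the slowly diverging $\rho_n$.

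The third ingredient is the testing/dimension-reduction step: I would show the variational posterior cannot charge models of group-sparsity much larger than $s_0$, by invoking the prior's exponential penalization of large models together with the evidence lower bound, so that effectively $\tilde\Pi$ is supported on $\{\beta:|S_\beta|\lesssim\rho_n s_0\}$. On this event the compatibility numbers $\phi(S)$, $\bar\phi(\rho_n s_0)$, and $\widetilde\phi(\rho_n s_0)$ — all bounded below on $\mathcal{B}_{\rho_n,s_n}$ — convert the KL/prediction bounds into the stated $\ell_{2,1}$ and $\ell_2$ rates, exactly as compatibility is used in \cite{CSV15}. The main obstacle I anticipate is the group-sparse adaptation of the complexity bound and the dimension control: one must verify that the multivariate double-exponential slab with scale $\lambda$ in the range \eqref{eq:lambda2} assigns enough mass near $\beta_{0,G_k}$ (a small-ball estimate in $\R^{m_k}$, where the $C_k$ normalizing constant and the group dimension $m_k$ appear) while simultaneously penalizing spurious groups strongly enough — and to make the constant $H_0$ genuinely uniform over $\mathcal{B}_{\rho_n,s_n}$ and independent of the group-size profile, which is precisely where the uniformity afforded by the explicit constant $K$ in Assumption \ref{ass:theory} is needed. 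Handling the interplay between $\mm$ and $\log M$ in these small-ball and entropy estimates, rather than any single inequality, is the crux of the proof.
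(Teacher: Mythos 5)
Your proposal has a genuine gap at its central step: the bound on the complexity term. You propose to control $\KL(Q^\star\|\Pi)$ for a test distribution $Q^\star$ whose slabs are Gaussians centered at $\beta_{0,G_k}$, and you claim the $\lambda\|\beta_{G_k}\|$ contribution is ``order $\lambda\|\beta_{0,S_0}\|_{2,1}$, which the slab-scale bounds \eqref{eq:lambda2} keep at order $\sqrt{s_0\log M}$.'' This confuses a bound on $\lambda$ with a bound on $\|\beta_0\|_{2,1}$: condition \eqref{eq:lambda2} constrains only the prior scale, while the parameter class $\mathcal{B}_{\rho_n,s_n}$ in \eqref{eq:Bn} imposes \emph{no} norm bound on $\beta_0$. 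Hence $\lambda\|\beta_0\|_{2,1}$ is unbounded over the supremum in the theorem, and any KL-to-the-prior bound with $Q^\star$ centered at $\beta_0$ necessarily inherits this term (the double-exponential slab density at $\beta_0$ is $e^{-\lambda\|\beta_{0,G_k}\|_2}$ up to normalization, so the term cannot be avoided by tuning the slab variances). Your variational-risk-bound route therefore proves only a weaker statement requiring $\lambda\|\beta_0\|_{2,1}=O(\rho_n s_0\log M)$, i.e.\ a boundedness-type restriction on the signal --- precisely the assumption the paper advertises removing. Relatedly, your ``testing/dimension-reduction step'' for showing $\tilde\Pi(|S_\beta|\gtrsim \rho_n s_0)\to 0$ is asserted but not given a mechanism: within the risk-bound framework one obtains moment bounds on divergences, not model-size control of the variational posterior, and this control is needed before $\widetilde{\phi}(\rho_n s_0)$ can convert prediction loss into the $\ell_2$ statement.

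The paper's proof avoids both problems by bounding the KL to the \emph{posterior} rather than to the prior, and by routing probability statements through Lemma \ref{lem:post_to_VB}: if the full posterior assigns mass at most $Ce^{-\delta_n}$ to $B_n^c$ on an event, then $\tilde\Pi(B_n^c)\lesssim \delta_n^{-1}\KL(\tilde\Pi\|\Pi(\cdot|Y))$ plus negligible terms, with $\delta_n = c\rho_n s_0\log M$. The required full-posterior tail bounds (Lemmas \ref{lem:dimension} and \ref{lem:contract_full}, adapted from the coordinate-sparse arguments of \cite{CSV15}) handle the $\lambda\|\beta_0\|_{2,1}$ factor by cancellation inside the posterior ratio on the event $\T_0$, via $\lambda\|\beta_0\|_{2,1}\leq 2\bar\lambda\|\beta-\beta_0\|_{2,1}+\lambda\|\beta\|_{2,1}$. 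The KL term is then bounded in Lemma \ref{lem:KL} by a \emph{data-dependent} element of $\mathcal{Q}$: a Gaussian centered at the least-squares estimator $\mu_{\tilde S}=(X_{\tilde S}^TX_{\tilde S})^{-1}X_{\tilde S}^TY$ on a model $\tilde S$ that carries posterior weight $\hat q_{\tilde S}\geq (2e)^{-1}M^{-\Gamma}$ on the event $\T_1$ of \eqref{eq:T1}; in the resulting normalizing-constant ratio only $\lambda E\|\beta_{\tilde S}-\beta_{0,\tilde S}\|_{2,1}=O(s_0\log M)$ survives, uniformly over \eqref{eq:Bn}. Your group-structure bookkeeping (the $\gamma_k$ entropy cost $s_0\log M$, the $\Delta_{m_k}$ and $\log\det$ terms controlled via Assumption \ref{ass:theory}, and $\mathcal{Q}\subset\mathcal{Q}'$ reducing both families to one construction) is correct and matches the paper's accounting; to repair the proof you would need to replace the prior-centered $Q^\star$ with a posterior-aware candidate of this type, or else add a boundedness hypothesis the theorem does not make.
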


Contraction rates for the full posterior based on a group spike and slab prior were established in \cite{NJG20}, as well as for the spike and slab group LASSO in \cite{Bai2020}. Our proofs are instead based on those in \cite{CSV15} and \cite{RS22}. We have extended their theoretical results from the coordinate sparse setting to the group sparse setting. This approach permits more explicit proofs and allows us to consider somewhat different assumptions from previous group sparse works \citep{NJG20,Bai2020}, for instance removing the boundedness assumption for the parameter spaces \red{or certain sample size conditions. In particular, we can cover some settings far from i.i.d. responses, in which case $\|X\|^2 \widetilde{\phi}(\rho_n s_0)^4$ can be thought of as the effective sample size of the problem (corresponding to roughly $m_{\max}n$ in compatible i.i.d. models as in the above references).}

\red{\begin{remark}[Unknown variance]\label{rem:noise_variance}
If the noise variance $\tau^2$ is unknown, standard Bayesian approaches either replace the unknown parameter by an estimate or assign it a hyperprior, such as an inverse Gamma prior as we do in the methodology section above. The hierarchical approach has been theoretically studied in the present group sparse setting with posterior consistency established for $\tau^2$, see \cite{NJG20,Bai2020}. However, such proofs are based on abstract testing conditions and typically require additional assumptions, such as uniform boundedness of the parameter $\beta$, ruling out very strong signals. Since such a proof approach gives the exponential tail decay required to transfer results from the true posterior to its variational counterpart (see Lemma \ref{lem:post_to_VB} below), it should in principle be possible to extend our VB results to the hierarchical setting at the expense of such additional assumptions.
\end{remark}}

\begin{remark}
The optimization problem \eqref{eq:optim} is in general non-convex, and hence there is no guarantee that CAVI (or any other algorithm) will converge to the global minimizer $\tilde{\Pi}$. However, an inspection of the proofs shows that the conclusions of Theorems \ref{thm:contraction} and \ref{thm:dimension} apply also to any element $Q^*\in \mathcal{Q} \subset \mathcal{Q}'$ in the variational families for which
$$0 \leq \KL(Q^*\|\Pi(\cdot|Y) )- \KL (\tilde{\Pi}\|\Pi(\cdot|Y)) = \mathcal{L}_{\tilde{\Pi}}(\mathcal{D}) - \mathcal{L}_{Q^*}(\mathcal{D})= O(s_0 \log M),$$ 
where $s_0$ is the true group sparsity and $\mathcal{L}_Q(\mathcal{D})$ is the ELBO. Thus, as long as the ELBO is within $O(s_0\log M)$ of its maximum, the resulting variational approximation will satisfy the above conclusions, even if it is not the global optimum. \red{While this is not a condition our algorithm can check in practice, it nonetheless suggests some heuristic robustness of the algorithmic output to not terminating at the global optimum.}
\end{remark}

The next result shows that the variational posterior puts most of its mass on models of size at most a multiple of the true number of groups, meaning that it concentrates on sparse sets.

\begin{theorem}[Dimension]\label{thm:dimension}
Suppose that Assumption \ref{ass:theory} holds, the prior \eqref{eq:prior} satisfies \eqref{eq:lambda2} and $s_n$ satisfies $\mm \log s_n =O(\log M)$. Then the variational posterior $\tilde{\Pi}$ based on either the variational family $\mathcal{Q}$ in \eqref{eq:family_1} or $\mathcal{Q}'$ in \eqref{eq:family_2} satisfies
$$\sup_{ \beta_0\in \mathcal{B}_{\rho_n,s_n} } E_{\beta_0} \tilde{\Pi} \left( \beta: |S_\beta| \geq \rho_n |S_{\beta_0}|  \right) \to 0$$
for any $\rho_n \to \infty$ (arbitrarily slowly) and $\mathcal{B}_{\rho_n,s_n}$ defined in \eqref{eq:Bn}.
\end{theorem}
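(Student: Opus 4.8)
The plan is to reduce the claim to a bound on the expected variational dimension and then apply Markov's inequality, reusing the machinery developed for Theorem~\ref{thm:contraction}. Under either $\Q$ or $\Q'$ the groups are switched on independently, $z_k\sim\mathrm{Bernoulli}(\tilde\gamma_k)$, so $|S_\beta|=\sum_{k=1}^M z_k$ and $\E_{\tilde\Pi}|S_\beta|=\sum_{k=1}^M\tilde\gamma_k$. Markov's inequality then gives $\tilde\Pi(|S_\beta|\geq\rho_n s_0)\leq(\rho_n s_0)^{-1}\sum_k\tilde\gamma_k$, so it suffices to exhibit an event $A_n$ with $\inf_{\beta_0\in\mathcal{B}_{\rho_n,s_n}}P_{\beta_0}(A_n)\to 1$ on which $\sum_k\tilde\gamma_k\leq C s_0$ for a constant $C$ depending only on the prior; splitting $E_{\beta_0}\tilde\Pi(\cdots)$ over $A_n$ and $A_n^c$ then yields a bound of order $C/\rho_n+o(1)\to 0$ uniformly over $\mathcal{B}_{\rho_n,s_n}$.

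To bound $\sum_k\tilde\gamma_k$ I would exploit the defining optimality $\L_{\tilde\Pi}(\D)\leq\L_{Q^*}(\D)$, valid in both families since the competitor $Q^*\in\Q\subset\Q'$ can be taken concentrated on the true support, with $\gamma_k^*=\I\{k\in S_0\}$, slab mean $\beta_{0,G_k}$ and a small slab variance. Writing $\L_Q(\D)=\E_Q[-\ell(\D;\beta)]+\KL(Q\|\Pi)$ and using, for the Gaussian model with $\tau^2=1$, the identity $\E_Q[-\ell(\D;\beta)]=-\ell(\D;\beta_0)-\langle\eps,X\E_Q(\beta-\beta_0)\rangle+\tfrac12\E_Q\|X(\beta-\beta_0)\|_2^2$, the constant $-\ell(\D;\beta_0)$ cancels. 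Together with the reference estimate $\L_{Q^*}(\D)+\ell(\D;\beta_0)=O(s_0\log M)$ already established in the proof of Theorem~\ref{thm:contraction}, this produces the master inequality $\KL(\tilde\Pi\|\Pi)+\tfrac12\E_{\tilde\Pi}\|X(\beta-\beta_0)\|_2^2\leq\langle\eps,X\E_{\tilde\Pi}(\beta-\beta_0)\rangle+O(s_0\log M)$.

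From the Beta--Bernoulli part of $\KL(\tilde\Pi\|\Pi)$ in \eqref{eq:penalty} I would extract an inclusion cost of order $(\sum_k\tilde\gamma_k)\log M$; here the hyperparameter choice making $\bar{w}\asymp M^{-1}$, so that $\log(1/\bar{w})\asymp\log M$, is essential, since it ensures that the $M-s_0$ nearly-excluded groups contribute only $O(1)$ in aggregate rather than swamping the estimate, while the continuous slab terms $-\tfrac{\gamma_k}{2}\log\det(2\pi\Sigma_{G_k})$ and $\gamma_k\lambda\,\E_{N_k}\|\beta_{G_k}\|$ are bounded below using \eqref{eq:lambda2} and Assumption~\ref{ass:theory}, following the corresponding estimates in \cite{CSV15} and \cite{RS22}. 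The cross term is controlled on $A_n$ by a multiplier (empirical-process) inequality, uniform over the $\binom{M}{s}$ candidate supports of size $s\lesssim\rho_n s_0$, which bounds $\langle\eps,X\E_{\tilde\Pi}(\beta-\beta_0)\rangle$ by $\tfrac12\E_{\tilde\Pi}\|X(\beta-\beta_0)\|_2^2$ (absorbed on the left) plus a term of order $(\sum_k\tilde\gamma_k+s_0)\log M$ carrying a small coefficient. Rearranging then leaves $(\sum_k\tilde\gamma_k)\log M\lesssim s_0\log M$, i.e. $\sum_k\tilde\gamma_k\lesssim s_0$, which is exactly the required bound on $A_n$.

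The main obstacle is this last empirical-process step: obtaining a multiplier inequality whose stochastic fluctuation is simultaneously of the right order $\sqrt{|S_\beta|\log M}$ in the group norm $\|X\|$ and uniform over all candidate supports, which is precisely what forces the group-sparsity regime and Assumption~\ref{ass:theory} (so that $\mm\log\mm\lesssim\log M$ and the group complexity does not dominate the dimension penalty). A secondary, more bookkeeping-heavy difficulty is verifying that the continuous slab contributions to the KL do not erode the per-group $\log M$ inclusion cost, which relies on the two-sided control of $\lambda$ in \eqref{eq:lambda2} together with the group-size restriction.
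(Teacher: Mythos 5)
Your route is genuinely different from the paper's, and as sketched it has two concrete gaps. The paper never bounds $\sum_k \tilde\gamma_k$ or touches the variational inclusion probabilities at all: it proves a posterior-level dimension bound with an \emph{exponentially} small tail, $E_{\beta_0}\Pi(|S_\beta| \geq \rho_n s_0 \,|\, Y)1_{\T_0} \leq Ce^{-c\rho_n s_0 \log M}$ (Lemma \ref{lem:dimension} with $L+1=\rho_n$), and then transfers this to $\tilde\Pi$ via Lemma \ref{lem:post_to_VB} with $\delta_n = c\rho_n s_0\log M$, using that some witness $Q$ in the family has $\KL(Q\|\Pi(\cdot|Y))1_{\T_1} = O(s_0\log M)$ (Lemma \ref{lem:KL}), so the transfer costs $O(s_0\log M)/(\rho_n s_0\log M) = O(1/\rho_n)$. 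Your Markov reduction to $\sum_k\tilde\gamma_k \lesssim s_0$ is a legitimate alternative strategy in principle, but your master inequality rests on a ``reference estimate'' that is not established in the paper and is in fact false uniformly over $\mathcal{B}_{\rho_n,s_n}$: what Lemma \ref{lem:KL} bounds is $\KL(Q^*\|\Pi(\cdot|Y))$, which differs from $\L_{Q^*}(\D)+\ell(\D;\beta_0)$ by the normalized evidence $\log\int\Lambda_\beta(Y)\,d\Pi(\beta)$, and by Lemma \ref{lem:evidence_lb} that term carries $-\lambda\|\beta_0\|_{2,1}$. Computing directly with your competitor (slab mean $\beta_{0,G_k}$, small variance), $\KL(Q^*\|\Pi)$ contains $\lambda\|\beta_0\|_{2,1}$ via the term $\E_{Q'}[\I_{z_k=1}\lambda\|\beta_{G_k}\|]$ in \eqref{eq:penalty}, so the right side of your master inequality is $\lambda\|\beta_0\|_{2,1}+O(s_0\log M)$ -- unbounded, since $\mathcal{B}_{\rho_n,s_n}$ imposes no bound on $\|\beta_0\|$ (the paper explicitly advertises removing boundedness of the parameter space). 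To proceed you must cancel this against the matching piece $\lambda\E_{\tilde\Pi}\|\beta\|_{2,1}$ inside $\KL(\tilde\Pi\|\Pi)$ and absorb the residual $\lambda\E_{\tilde\Pi}\|\beta-\beta_0\|_{2,1}$ using compatibility -- the Castillo--Schmidt-Hieber--van der Vaart splitting with $\phi(S_0)$ that the paper runs on p.~2007--8 of that reference inside Lemma \ref{lem:dimension} -- rather than deferring everything to a generic empirical-process bound.

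The second gap is structural and concerns exactly that deferred step. The vector $\E_{\tilde\Pi}(\beta-\beta_0)$ is not group-sparse: $\E_{\tilde\Pi}\beta = \sum_k\tilde\gamma_k\mu_{G_k}$ charges every group with $\tilde\gamma_k>0$, and $\tilde\Pi$ mixes over all $2^M$ supports, including those of size exceeding $\rho_n s_0$ -- but the class $\mathcal{B}_{\rho_n,s_n}$ supplies design control ($\widetilde\phi(\rho_n s_0)\geq c_0$) \emph{only up to dimension} $\rho_n s_0$, plus $\phi(S_0)\geq c_0$. A multiplier inequality ``uniform over the $\binom{M}{s}$ supports of size $s\lesssim\rho_n s_0$'' therefore cannot control the contribution of variational mass on larger supports, and restricting attention to draws with $|S_\beta|\leq \rho_n s_0$ presupposes the dimension bound you are proving -- a circularity. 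Note that the paper's argument needs neither uniformity over supports nor any design control beyond $\phi(S_0)$: on $\T_0$ the stochastic term is bounded \emph{per draw} by $|(Y-X\beta_0)^TX(\beta-\beta_0)|\leq\bar\lambda\|\beta-\beta_0\|_{2,1}$ as in \eqref{eq:cross_term}, and the per-extra-group cost of order $A_4\log M$ is supplied by the prior's own dimension decay $\pi_M(s)\leq A_2M^{-A_4}\pi_M(s-1)$ in \eqref{eq:prior_condition}, not by an empirical-process bound; the transfer lemma then requires only the witness KL estimate, never any control of $\tilde\Pi$'s support distribution. If you want to salvage your route, the cleanest fix is to replace your two problematic steps by precisely these ingredients -- at which point you have essentially reconstructed the paper's proof.
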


\red{Proofs of these results can be found in Section \ref{sec:prooofs}. Our results can be extended to the technically more involved case of group sparse binary logistic regression following similar ideas to \cite{Ray2020}, while they should also extend conceptually to sparse generalized linear models as in \cite{jeong2021} under additional assumptions such as uniform boundedness of the parameter, see Remark \ref{rem:noise_variance}.
}

\section{Numerical experiments} \label{sec:simulations}

In this section, a numerical evaluation of our method, referred to as Group Spike-and-slab Variational Bayes (GSVB), is presented. Where necessary, we distinguish between the two families, $\Q$ and $\Q'$ presented in \Cref{sec:prior}, by the suffixes `--D' and `--B' respectively, i.e., GSVB--D denotes the method under the variational family $\Q$. Notably, throughout all our numerical experiments, the prior parameters are set to $\lambda =1$, $\alpha_0=1$, $b_0=M$, and $a=b=10^{-3}$ for the inverse-Gamma prior on $\tau^2$ under the Gaussian family. Scripts to reproduce our results can be found at \url{https://github.com/mkomod/p3}. Furthermore, an R package implementing our methodology is available at \url{https://github.com/mkomod/gsvb}.

\red{
To evaluate the performance of GSVB, we compare it against MCMC and the Spike-and-Slab Group LASSO (SSGL) proposed by \cite{Bai2020}. 
The details of the MCMC sampler used in this study are outlined in \Cref{appendix:gibbs} and an implementation is available at \url{https://github.com/mkomod/spsl}.  Briefly, we ran the sampler for $500,\!000$ iterations with a burn-in period of $100,\!000$ iterations across four chains to report the results. Implementing MCMC with the spike-and-slab prior is known to be computationally challenging. To assess convergence, we applied the Gelman-Rubin diagnostic statistic \citep{Gelman1992}, specifically the multivariate potential scale reduction factor (PSRF) computed across the model coefficients. While most ($>95\%$) MCMC runs achieved convergence for the Gaussian and Binomial likelihood models (with PSRF values below 1.1), we were unable to achieve convergence across all settings in the Poisson model. 
Despite the lack of full convergence in the Poisson model, the simulation study demonstrates that the resulting posterior distribution effectively identifies the relevant groups and concentrates appropriately around the data-generating coefficients, making this sampler a relevant baseline for comparing our approach.

The Spike-and-Slab Group LASSO (SSGL) proposed by \cite{Bai2020} is a state-of-the-art Bayesian method that employs a similar prior as in \eqref{eq:prior}. However, unlike GSVB, the multivariate Dirac mass on zero is replaced with a multivariate double exponential distribution, giving a continuous mixture with one density acting as the spike and the other as the slab, parameterized by $\lambda_0$ and $\lambda_1$ respectively. Under this prior, \cite{Bai2020} derive an EM algorithm, which allows for fast updates; however, only MAP estimates are returned by default, meaning a posterior distribution for $\beta$ is not available. To mitigate this, debiasing methods can be used to obtain uncertainty quantification for MAP-based methods like SSGL. Notably, \cite{Bai2020} adopt ideas from a recent line of research based on debiasing estimates from high-dimensional regression~\citep{vanDeGeer2014}; these results yield an asymptotic distribution for the model coefficients, which in turn can be used to construct confidence intervals for the estimated coefficients.

Our study begins by first evaluating GSVB against MCMC and the debiased SSGL on smaller scale datasets. We then evaluate the performance against SSGL on large scale datasets as debiasing was too computationally expensive within this setting.}

\subsection{Simulation setup}

Data is simulated for $i=1,\dots,n$ observations, each having a response $y_i \in \R$ and $p$ continuous predictors $x_i \in \R^p$. The response is sampled independently from the respective family with mean given by $g(\beta_0^\top x_i)$, where $g$ is the link function (and variance applicable to the Gaussian family of $\tau^2=1$). The true coefficient vector $\beta_0 = (\beta_{0, G_1}, \dots, \beta_{0, G_M})^\top \in \R^p$ consists of $M$ groups, each of size $m$, i.e., $M \times m = p$. Of these groups, $s$ are chosen at random to be non-zero and have each of their element values sampled independently and uniformly from $[-\beta_{\max}, -0.2] \cup [0.2 , \beta_{\max}]$, where $\beta_{\max} = 1.5, 1.0$, and $0.45$ for the Gaussian, Binomial, and Poisson families, respectively. Finally, the predictors are generated from one of four settings:
\begin{description}
    \item[Setting 1:] $x_i \overset{\text{iid}}{\sim} N(0_p, I_p)$ 
    \item[Setting 2:] $x_i \overset{\text{iid}}{\sim} N(0_p, \Sigma)$, where $\Sigma_{ij} = 0.6^{|i - j|}$ for $i,j=1,\dots,p$.
    \item[Setting 3:] $x_i \overset{\text{iid}}{\sim} N(0, \Sigma)$, where $\Sigma$ is a block diagonal matrix where each block $A$ is a $50\times 50$ square matrix such that $A_{jl} = 0.6, j \neq l$ and $A_{jj} = 1$ otherwise. 
    \item[Setting 4:] $x_i \overset{\text{iid}}{\sim} N(0, \Sigma)$, where 
	$\Sigma = (1-\alpha) W^{-1} + \alpha V^{-1}$ with $W \sim \text{Wishart}(p+\nu, I_p)$ and $V$ is a block diagonal matrix of $M$ blocks, where each block $V_k$, for $k=1,\dots ,M$, is an $m_k \times m_k$ matrix given by $ V_k \sim \text{Wishart}(m_k + \nu, I_{m_k})$; we let $(\nu, \alpha) = (3, 0.9)$ so that predictors within groups are more correlated than variables between groups. 
\end{description}

To evaluate the performance of the methods, four different metrics are considered: 
    (i) the $\ell_2$-error, $\| \widehat{\beta} - \beta_0 \|_2$, between the true vector of coefficients and the estimated coefficient $\widehat{\beta}$ defined as the posterior mean where applicable, or the MAP estimate if this is returned,
    (ii) the area under the curve (AUC) of the receiver operator characteristic curve, which compares true positive and false positive rates for different thresholds of the group posterior inclusion probability, 
    (iii) the marginal coverage of the non-zero coefficients, which reports the proportion of times the true coefficient $\beta_{0, j}$ is contained in the marginal credible set 
    $\{ j : \beta_{0, j} \neq 0 \}$, 
    (iv) and the size of the marginal credible set for the non-zero coefficients, given by the Lebesgue measure of the set.
The last two metrics can only be computed when a distribution for $\beta$ is available, i.e., via MCMC, VB, \red{or debiasing}. For MCMC and GSVB, the $95\%$ 
marginal credible sets for each variable $j \in G_k$ for $k=1,\dots,M$ are given by: 
\begin{equation*}
    S_j = \begin{cases}
    \{0\} & \text{if } \gamma_k < \alpha \\
	\left[ \mu_{j} \pm \Sigma_{jj}^{1/2} \Phi^{-1}( \frac{\alpha_{\gamma_k}}{2})
	    \right]
	&\text{if } \gamma_k \geq \alpha  \text{ and }
	0 \notin 
	[\mu_{j} \pm \Sigma_{jj}^{1/2} \Phi^{-1}( \frac{\alpha_{\gamma_k}}{2})]\\
	\left[ 
	    \mu_{j} \pm \Sigma_{jj}^{1/2}
	    \Phi^{-1}( \frac{\alpha_{\gamma_k}}{2} + \frac{1 - \gamma_k}{2}) 
	\right] \cup \{ 0 \}
 &\text{otherwise}
	\\
    \end{cases}
\end{equation*}
where $\alpha_{\gamma_k} = 1 - \frac{\alpha}{\gamma_k}$ and $\Phi^{-1}$ is the quantile function of the standard Normal distribution.

\subsection{Small-scale simulations and comparison to MCMC} \label{sec:comp_mcmc} 

\red{
In this section, the performance of GSVB is evaluated for the Gaussian, Binomial, and Poisson families. Notably, we compare against MCMC, often considered the gold standard for Bayesian inference, to assess the quality of the variational posterior. In addition, we compare with the SSGL in the Gaussian setting as subsequent post-processing (via debiasing) can be performed to obtain uncertainty quantification.}

Within this comparison, we set $p = 1,\!000$, $m=5$, and vary the number of non-zero groups, $s$. As highlighted in Figure \ref{fig:gaus_mcmc_res}, GSVB-B performs excellently in nearly all settings, demonstrating comparable results to MCMC \red{and SSGL} in terms of $\ell_2$-error and AUC. This indicates that GSVB-B exhibits similar characteristics to \red{these methods}, both in terms of the selected groups and the posterior mean. As anticipated, all the methods exhibit better performance in simpler settings and show a decline in performance as the problem complexity increases.

Regarding coverage, while MCMC shows slightly better performance compared to GSVB, the proposed method still provides credible sets that capture a significant portion of the true non-zero coefficients (particularly GSVB-B). However, the credible sets of the variational posterior are sometimes not large enough to capture the true non-zero coefficients. This observation is further supported by the size of the marginal credible sets, with MCMC producing the largest sets, followed by GSVB-B and GSVB-D. These findings confirm the well-known fact that VB tends to underestimate the posterior variance \citep{Carbonetto2012, Blei2017, Zhang2019, Ray2020}. Interestingly, the set size is larger under $\Q'$, highlighting the fact that the mean field variational family ($\Q$) lacks the necessary flexibility to accurately capture the underlying structure in the data. Furthermore, this result indicates that the full marginal credible quantity improves through the consideration of the interactions within the group.

\red{
Finally, we notice that the coverage of the debiased SSGL is comparable to both MCMC and VB. This is expected as the debiasing procedure is designed to provide uncertainty quantification for the MAP estimate. However, the debiasing procedure and MCMC are computationally expensive. In particular, we found that these methods can be orders of magnitude slower than GSVB (\Cref{tab:runtime_mcmc_ssgl}), in turn, making the methods less practical for large-scale problems. 
}

\begin{figure}[htp]
    \centering
    \makebox[\textwidth][c]{
	\includegraphics[width=1.02\textwidth]{./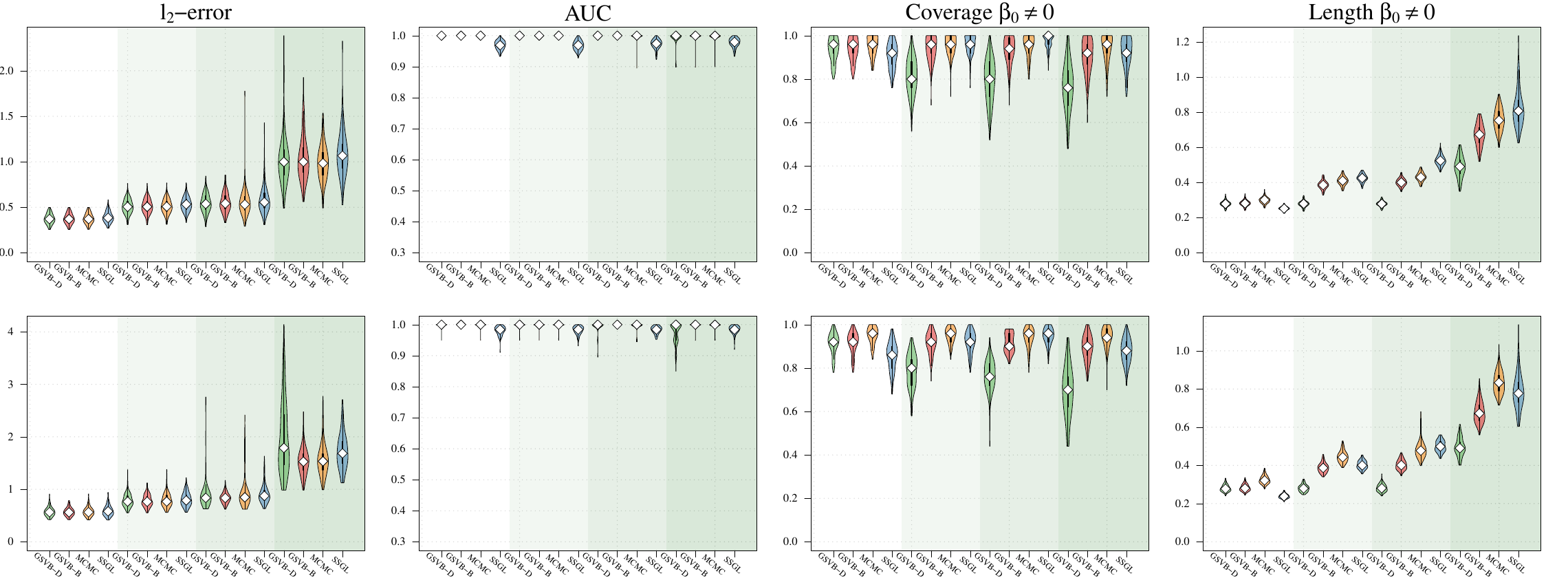}
    } %
    \makebox[\textwidth][c]{
	\includegraphics[width=1.02\textwidth]{./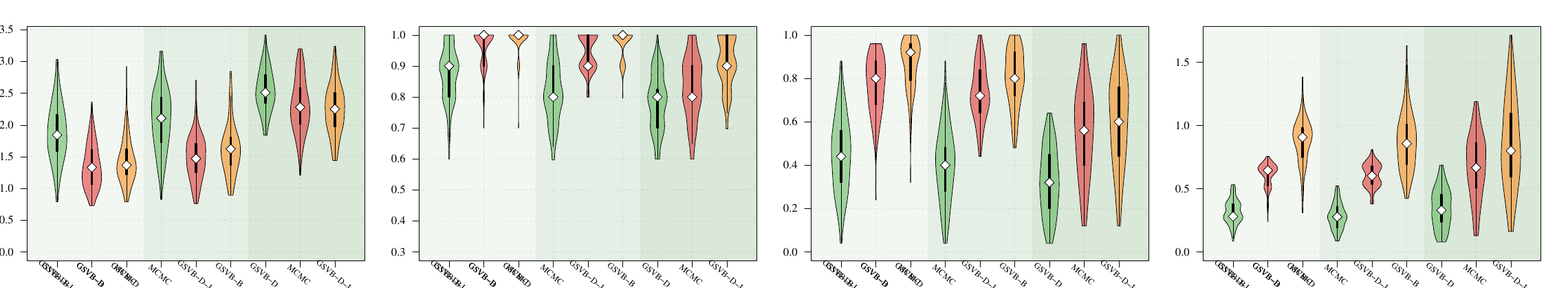}
    } %
    \makebox[\textwidth][c]{
	\includegraphics[width=1.02\textwidth]{./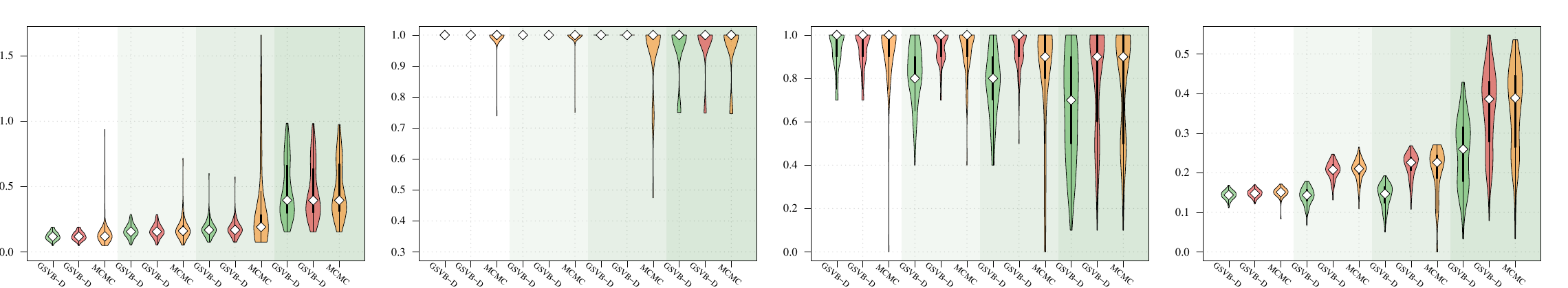}
    } %
    {\footnotesize
	\colorSquareB{setting1!15} Setting 1 \hspace{0.5em}
	\colorSquareB{setting2!15}  Setting 2 \hspace{0.5em}
	\colorSquareB{setting3!15}  Setting 3 \hspace{0.5em}
	\colorSquareB{setting4!15}  Setting 4 \hspace{0.5em}

    \vspace{0.0em}
	\colorSquareB{gsvb-d!50} GSVB-D \hspace{0.5em}
	\colorSquareB{gsvb-b!50} GSVB-B \hspace{0.5em}
	\colorSquareB{mcmc!50} MCMC \hspace{0.5em}
    \colorSquareB{ssgl!50} SSGL
    }
    \caption[Comparison of group sparse methods for small scale numerical experiments]
    {\red{Performance evaluation of GSVB, MCMC, and SSGL (with debiasing for the Gaussian family only) for Settings 1--4 with $p=1\!,000$ across 100 runs. For each method, the white diamond (\whitediamond) indicates the median of the metric, the thick black line (\blacklinethick) the interquartile range, and the black line (\blacklinethin) 1.5 times the interquartile range.  
    \textbf{Rows 1--2}: Gaussian family with $(n, m, s) = (200, 5, 5),
    (200, 5, 10)$. 
    \textbf{Row 3}: Binomial family with $(n, m, s) = (400, 5, 5)$.
    \textbf{Row 4}: Poisson family with  $(n, m, s) = (400, 5, 2)$. 
    Note that for the Binomial family, Setting 1 results in perfect separation of classes and is excluded from the study.
    }}
    \label{fig:gaus_mcmc_res}
\end{figure}

\red{
 \begin{table}[htp]
    \centering
    \makebox[1.0\textwidth][c]{
    \resizebox{0.90\textwidth}{!}{ %
    \renewcommand{\arraystretch}{1.0}
    \begin{tabular}{l l l l l l}
    \toprule
 & Method & Setting 1 & Setting 2 & Setting 3 & Setting 4 \\
    \midrule
\multirow{4}{1.5cm}{Gaus. s=5}
& GSVB--D & 
    5.1s (3.8s, 5.7s)            & 5.1s (3.7s, 5.8s)          & 5.5s (3.9s, 10.2s)          & 5.8s (4.2s, 9.3s) \\
& GSVB--B &
    1.4s (1.1s, 1.8s)            & 1.7s (1.3s, 2.5s)          & 1.8s (1.3s, 6.5s)           & 2.7s (1.7s, 5.8s) \\
& MCMC &
    37m 15s (36m 45s, 38m 45s) & 37m 15s (36m 45s, 39m 15s) & 37m (36m 30s, 37m 30s) & 37m 45s (36m 45s, 39m 15s) \\
& SSGL &
    6m 11s (6m 3s, 6m 16s)       & 6m 33s (6m 30s, 6m 38s)    & 6m 58s (6m 55s, 6m 2s)      & 6m 55s (6m 44s, 6m 9s) \\
\midrule
\multirow{4}{1.5cm}{Gaus. s=10}
& GSVB--D &
    10.2s (8.4s, 11.3s)          & 10.1s (8.2s, 10.9s)        & 16.9s (9.9s, 19.5s)         & 12.0s (8.6s, 18.9s) \\
& GSVB--B &
    2.9s (1.8s, 3.8s)            & 2.8s (2.3s, 3.5s)          & 8.8s (3.3s, 13.8s)          & 3.6s (2.2s, 12.5s) \\
& MCMC &
    51m 45s (50m 15s, 53m 45s) & 47m (46m 15s, 48m 30s) & 46m 45s (45m 45s, 48m) & 46m 45s (46m, 48m) \\
& SSGL &
    6m 8s (6m 6s, 6m 12s)        & 6m 37s (5m 26s, 6m 44s)    & 6m 2s (6m 51s, 6m 7s)       & 6m 57s (6m 41s, 6m 10s) \\
\midrule
\multirow{3}{1.5cm}{Binom. s=10}
& GSVB--D & 
    \multicolumn{1}{c}{--} & 5.0s (2.5s, 8.2s) & 4.5s (2.7s, 7.5s) & 3.7s (2.3s, 5.7s) \\
& GSVB--B & 
    \multicolumn{1}{c}{--} & 5.6s (3.8s, 7.4s) & 5.1s (3.8s, 7.7s) & 5.0s (3.2s, 7.3s) \\
& MCMC &
    \multicolumn{1}{c}{--} & 1h 11m (1h 8m, 1h 12m) & 1h 19m  (1h 15m, 1h 20m) & 1h 9m (1h 5m, 1h 12m) \\
\midrule
\multirow{3}{1.5cm}{Binom. s=10}
& GSVB--D & 
    3.6s (3.3s, 4.1s)          & 3.3s (2.9s, 3.7s)        & 3.1s (2.9s, 5.2s)        & 3.4s (2.4s, 4.1s)        \\
& GSVB--B & 
    12.5s (10.0s, 18.0s)       & 10.1s (7.9s, 14.5s)      & 15.5s (10.1s, 41.6s)     & 13.3s (10.5s, 24.0s) \\
& MCMC & 
    1h 9m (1h 6m, 1h 13m) & 1h 3m (1h 1m, 1h 4m) & 1h 19m (1h 13m, 1h 27m) & 1h 14m (1h 10m, 1h 18m)\\
    \bottomrule
    \end{tabular}
    }
    }
    \caption[Runtime comparison of GSVB, MCMC and the debiased SSGL]{Median (5\%, 95\% quartile) runtimes for numerical experiments presented in \Cref{fig:gaus_mcmc_res}. Here $n=200$ for the Gaussian family and $n=400$ for the Binomial and Poisson family. Note that under setting 1 for the Binomial family there is perfect separation of classes.}
    \label{tab:runtime_mcmc_ssgl}
\end{table}
}

\subsection{Large-scale simulations} \label{sec:large-sims}

Within this section, the performance of GSVB is evaluated on larger datasets. Here we compare against SSGL without debiasing, as the post-processing used to provide uncertainty quantification is not computationally scalable, requiring over 8 hours of compute for a single run, which is beyond our computational budget. However, taking only the MAP for SSGL means the method is scalable with $p$. In this section, $p$ is increased to $p=5,000$. Both the sample size and the number of active (non-zero) groups, $s$, are varied as illustrated in Figure \ref{fig:gaus_comp}.

Regarding the hyperparameters, we set $\lambda_1 = 1$ for SSGL, making the slab identical between the two methods. For the spike of SSGL, a value of $\lambda_0 = 100$ for the Gaussian and Poisson families and $\lambda_0 = 20$ under the Binomial family is used. These values were selected to ensure that sufficient mass is concentrated about zero without turning to cross-validation to select the value. Finally, we let $a_0 = 1$ and $b_0 = M$ for both methods.

Overall, GSVB performs comparatively or better than SSGL in most settings, obtaining a lower $\ell_2$-error and higher AUC (\Cref{fig:gaus_comp}). As expected, across the different methods, there is a decrease in performance as the difficulty of the setting increases, meaning all methods perform better when there is less correlation in the design matrix. We note that the runtime of SSGL is marginally faster than our method in Settings 1-3 and faster in Setting 4. This is explained by the fact that SSGL by default only provides a point estimate for $\beta$. We found that using debiasing to obtain uncertainty quantification was too computationally expensive within this setting, exceeding a runtime allowance of 8 hours.

Within this large-scale simulation, our method provides competitive uncertainty quantification. In particular, GSVB--B provides better coverage of the non-zero coefficients than GSVB--D, which can be justified by the set size. As in our comparison to MCMC, we notice that there is an increase in the posterior set size as the difficulty of the setting increases, i.e., when there is an increase in the correlation of the design matrix. Finally, the runtimes for each method are presented in \Cref{tab:runtime_comp}; generally, the runtime for GSVB is comparable or lower than SSGL. However, we notice that as the difficulty of the setting increases, the runtime for GSVB increases and can exceed the runtime of SSGL.

\vspace{1em}
\begin{figure}[htp]
    \centering
    \makebox[\textwidth][c]{
	\includegraphics[width=0.52\textwidth]{./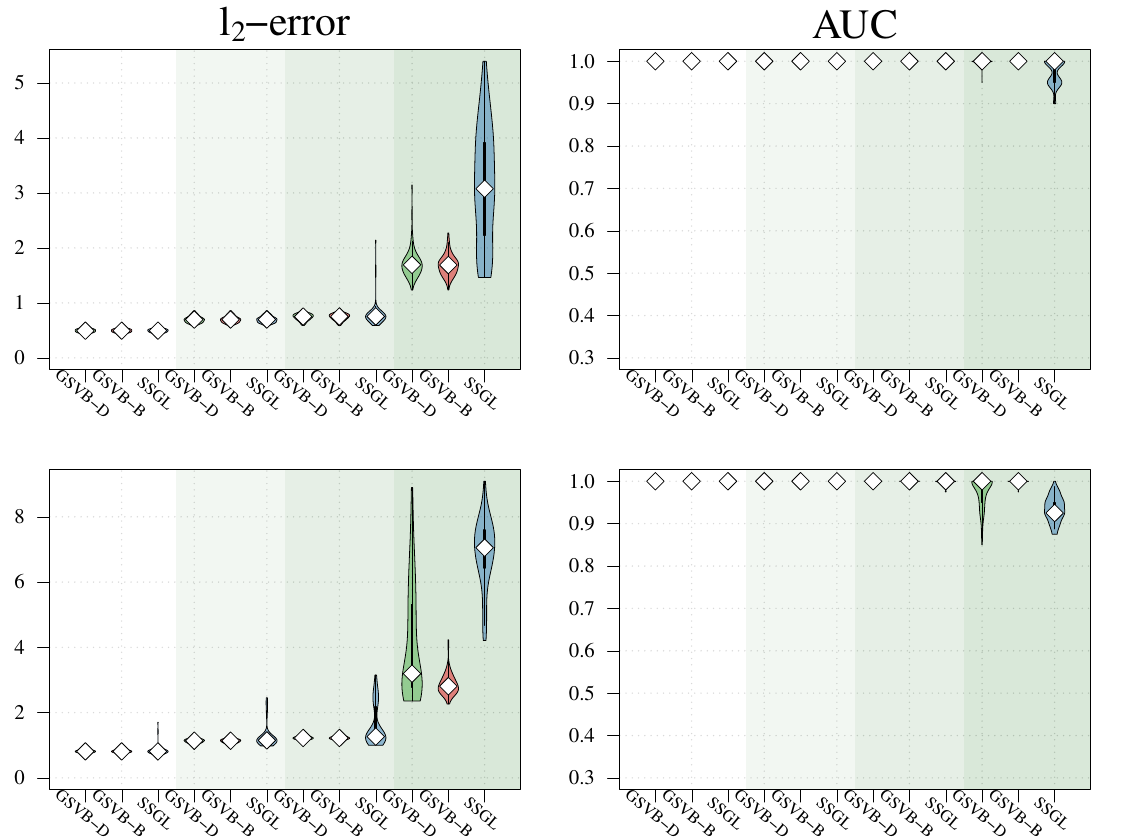}
	\includegraphics[width=0.52\textwidth]{./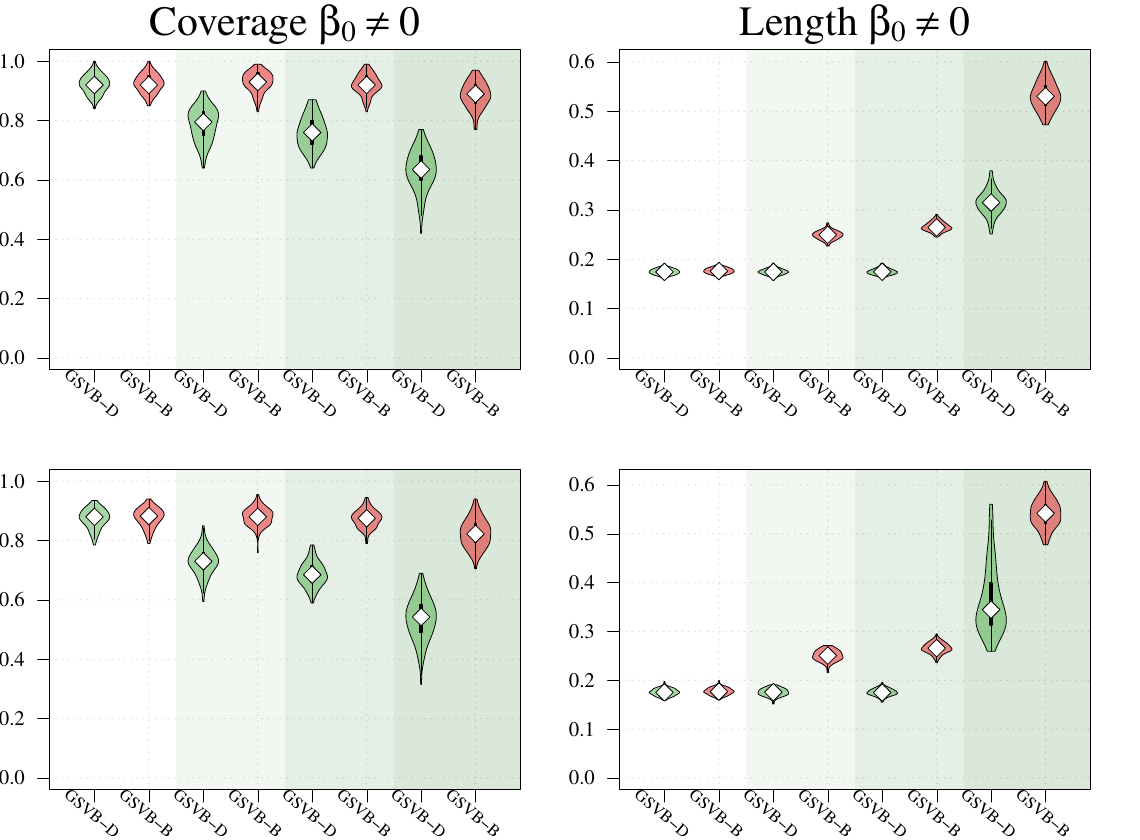}
    } %
    \makebox[\textwidth][c]{
	\includegraphics[width=0.53\textwidth]{./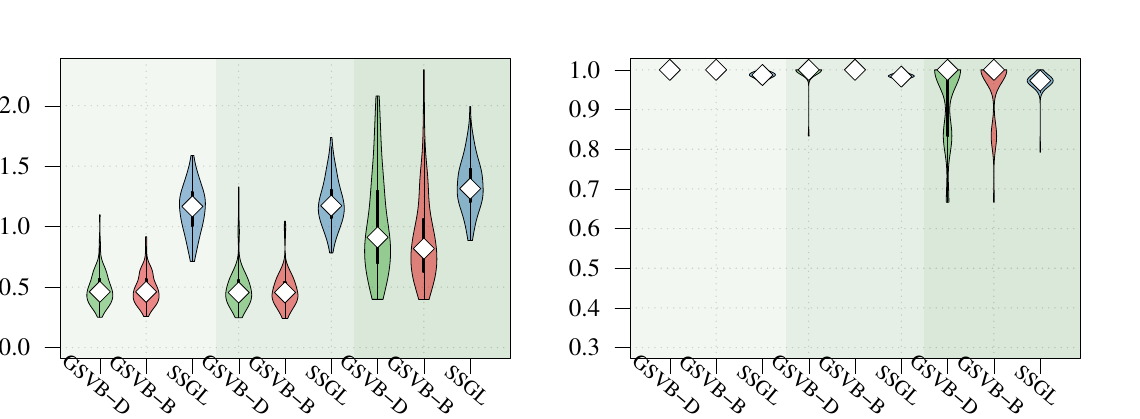}
	\includegraphics[width=0.53\textwidth]{./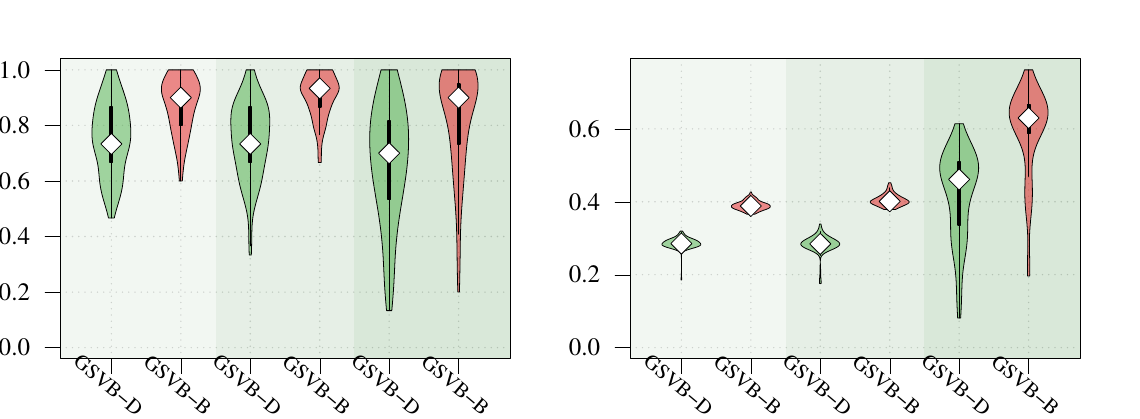}
    }
    \makebox[\textwidth][c]{
	\includegraphics[width=0.53\textwidth]{./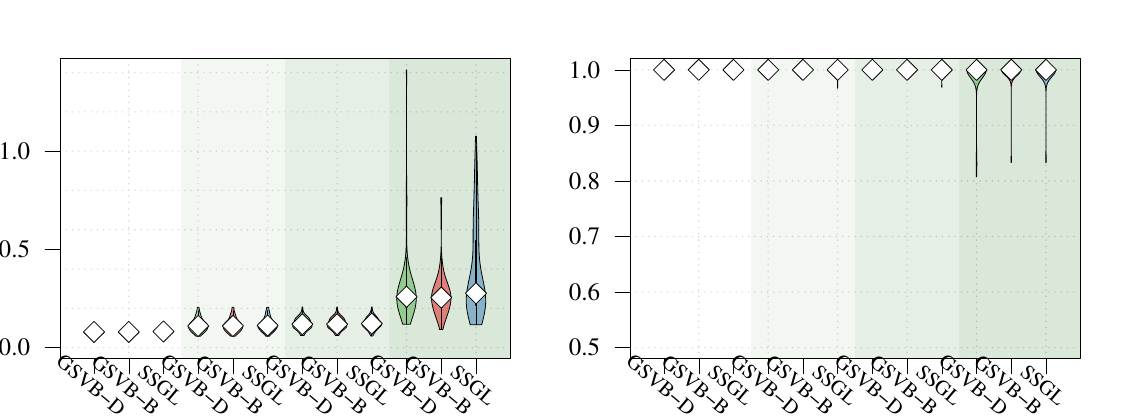}
	\includegraphics[width=0.53\textwidth]{./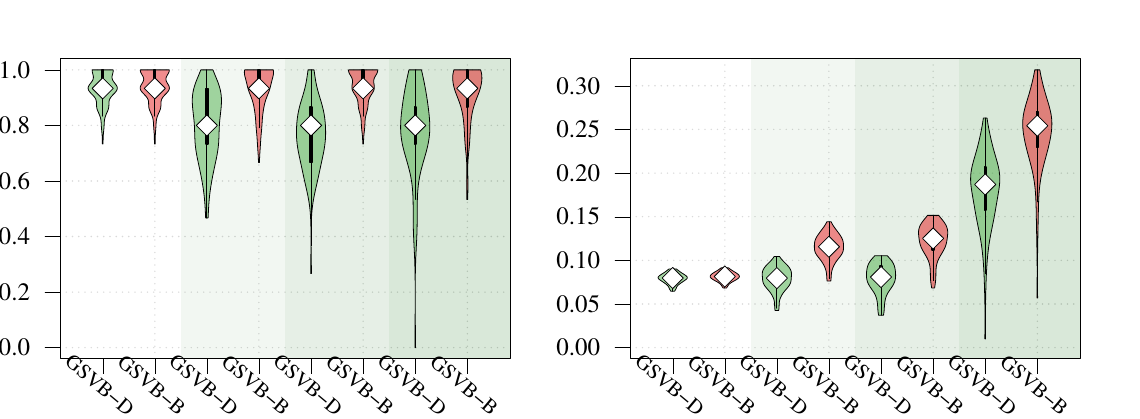}
    } %
    {\footnotesize
	\colorSquareB{setting1!15} Setting 1 \hspace{0.5em}
	\colorSquareB{setting2!15}  Setting 2 \hspace{0.5em}
	\colorSquareB{setting3!15}  Setting 3 \hspace{0.5em}
	\colorSquareB{setting4!15}  Setting 4 \hspace{0.5em}

	\colorSquareB{gsvb-d!50} GSVB-D \hspace{0.5em}
	\colorSquareB{gsvb-b!50} GSVB-B \hspace{0.5em}
	\colorSquareB{ssgl!50} SSGL \hspace{0.05em}
    }
    \caption
    [Comparison of group sparse methods for large scale numerical experiments]
    {\red{Performance evaluation of GSVB and SSGL for Settings 1--4 with $p=5\!,000$ across 100 runs. 
    For each method, the white diamond (\whitediamond) indicates the median of the metric, the thick black line (\blacklinethick) the interquartile range, and the black line (\blacklinethin) 1.5 times the interquartile range. \textbf{Rows 1 -- 2}: Gaussian family with $(n, m, s) = (500, 10, 10), 
    (500, 10, 20)$. 
    \textbf{Row 3}: Binomial family with $(n, m, s) = (1000, 5, 5)$. 
    \textbf{Row 4}: Poisson family with  $(n, m, s) = (1000, 5, 3)$. Note that for the Binomial family, Setting 1 results in perfect separation of classes and is excluded from the study.
    }}
    \label{fig:gaus_comp}
\end{figure}

\section{Application to real-world data} \label{c5:sec:real_data}

This section presents the analysis of three real-world datasets\footnote{
R scripts used to produce these results are available at \url{https://github.com/mkomod/p3}
}, the first two being linear regression problems wherein $p \gg n$ for which GSVB and SSGL are applied, and the third a logistic regression problem where $p < n$, for which the logistic group LASSO is applied in addition to GSVB and SSGL. As before, $a_0 = 1$, $b_0 = M$, and GSVB was run with a value of $\lambda=1$. For SSGL, we let $\lambda_1 = 1$, and $\lambda_0$ was chosen via ten-fold cross-validation on the training set. 

\red{
Throughout, we evaluated the predictive performance on held-out test sets alongside the number of features selected by each method. Beyond this, given that GSVB provides a distribution over $\beta$, we evaluated the posterior predictive (PP) coverage. Notably, the PP distribution is given by,
$$
\tilde{p} (y^\ast | x^\ast, \D) = \int_{\R^p \times \R^+} p(y^\ast | \beta, \tau^2, x^\ast, \D) \ d\tilde{\Pi}(\beta, \tau^2)
$$
where $x^\ast \in \R^p$ is a feature vector. For the Gaussian family, this expression can be simplified by integrating out the variance $\tau^2$. Following \cite{Murphy2007}, substituting $\xi = \tau^2$ and recalling the independence of $\tau^2$ and $\beta$ in our variational family, we have,
\begin{align}
    \tilde{p} (y^\ast | x^\ast, \D) 
    &\ \propto
    \int_{\R^p} \int_{\R^+} \xi^{-(a' + 1/2) -1} \exp \left( - \frac{(y^\ast - x^{\ast \top} \beta)^2 + 2b'}{2 \xi} \right)
    \ d\xi \ d\tilde{\Pi}(\beta)
    \nonumber
    \\
    &\ \propto
    \int_{\R^p} \left(\frac{(y^\ast - x^{\ast \top} \beta)^2}{2b'} + 1 \right)^{-(a' + 1/2)} \ d\tilde{\Pi}(\beta)
    \nonumber
\end{align}
Recognizing that the expression within the integral has the same functional form as a generalized $t$-distribution, whose density is denoted as $t(x; \mu, \sigma^2, \nu) = \Gamma((\nu + 1)/2) / (\Gamma(\nu / 2)\\ \sqrt{\nu \pi} \sigma) (1+ (x-\mu)^2/(\nu \sigma^2))^{-(\nu+1)/2}$, yields,
\begin{equation} \label{eq:post_pred} 
    \tilde{p} (y^\ast | x^\ast, \D) 
    = \int_{\R^p} t(y^\ast; x^{\ast \top} \beta, b'/a', 2a') \ d\tilde{\Pi}(\beta)
\end{equation}
As \eqref{eq:post_pred} is intractable, we instead sample from $p(y^\ast | x^\ast, \D)$ by sampling $\beta \sim \tilde{\Pi}$, and then sampling $y^\ast$ from $Y^\ast = \mu + \sigma t_\nu$ where $\mu = x^{\ast \top} \beta, \sigma = \sqrt{b'/a'}$, and $\nu = 2a'$, where $t_\nu$ denotes a t-distribution with $\nu$ degrees of freedom. To sample from $\tilde{\Pi}$, we sample $z_k \overset{\text{ind}}{\sim} \text{Bernoulli}(\gamma_k)$ and then $\beta_{G_k} \overset{\text{ind}}{\sim} N(\mu_{G_k}, \Sigma_{G_k})$ if $z_k=1$; otherwise, we set $\beta_{G_k} = 0_{m_k}$.

Although not considered for the logistic regression dataset, a similar procedure is available for the Binomial and Poisson models. This is given by sampling $\beta \sim \tilde{\Pi}$ and then sampling $y^* | x^*, \D$ from the respective distribution with mean given by $g({x^*}^\top \beta)$ where $x^* \in \R^p$ and $g$ is the link function. 
}


\newcommand{\mousesec}{Genetic determinants of low-density lipoprotein in mice}
\newcommand{\mousesecshort}{Genetic determinants of LDL-C in mice}

\subsection{\mousesec} \label{sec:mice}

The first dataset we analyze is from the Mouse Genome Database \citep{Blake2021} and consists of $p=10,\!346$ single nucleotide polymorphisms (SNPs)
collected from $n=1,\!637$ laboratory mice. Notably, each SNP, $x_{ij}$ for $j=1,\dots, p$ and $i=1,\dots,n$, takes a value of 0, 1, or 2. This value indicates how many copies of the risk allele are present. Alongside the genotype data, a phenotype (response) is also collected. The phenotype we consider is low-density lipoprotein cholesterol (LDL-C), a continuous value, which has been shown to be a major risk factor for conditions like coronary artery disease, heart attacks, and strokes \citep{Silverman2016a}.

To pre-process the original dataset, SNPs with a rare allele frequency, given by $\text{RAF}_j= \frac{1}{2n} \sum_{i=1}^n \left( \mathbb{I}(x_{ij} = 1) + 2 \mathbb{I}(x_{ij} = 2) \right)$ for $ j = 1, \dots, p$, below the first quartile were discarded. Further, due to the high collinearity, covariates with $| \corr(x_j, x_k) | \geq 0.97$ for $k > j$, $j=1, \dots, p-1$, were removed. After pre-processing, $3,\!341$ SNPs remained. These were used to construct groups by coding each $x_{ij} : \{0, 1, 2\} \mapsto \{ (0, 0), (0, 1), (1, 1)\}$, in turn giving groups of size $m=2$.
.

To evaluate the methods, ten-fold cross-validation is used. Specifically, methods are fit to the validation set, and the test set is used for evaluation. This is done by computing the: (i) MSE between the true and predicted value of the response, given by $\frac{1}{\tilde{n}}\| y - X \widehat{\beta} \|$ where $\tilde{n}$ is the size of the test set, (ii) posterior predictive coverage, which measures the proportion of times the response is included in the 95\% PP set, and (iii) the average size of the 95\% PP set. In addition,
(iv) the number of groups selected, given by $\sum_{k=1}^M \mathbb{I}(\gamma_k > 0.5)$, is also reported.

The performance of each method is presented in \Cref{tab:mgd_results}. The results highlight that GSVB obtains parsimonious models with a comparable MSE to SSGL. In addition, GSVB provides uncertainty quantification with a posterior predictive coverage of $0.945$ and $0.941$ for GSVB--D and GSVB--B, respectively. Finally, when the methods are fit to the full dataset, the SNPs selected by both methods are:
\texttt{rs13476241}, \texttt{rs13477968},
and
\texttt{rs13483814} for GSVB--D, with GSVB--B selecting
\texttt{rs13477939} 
in addition. Notably, \texttt{rs13477968} corresponds to the \textit{Ago3} gene, which has been shown to play a role in activating LDL receptors \citep{Madrigano2008}.

Finally, in genetic studies, $x^{\ast \top} \beta$ is referred to as the polygenic risk score and is commonly used to evaluate genetic risk. While frequentist or MAP methods offer only point estimates, \Cref{fig:mouse-pgrs} demonstrates that GSVB can offer a distribution for this score, highlighting the uncertainty associated with the estimate.

\vspace{1em}
\begin{table}[htp]
    \centering
    \makebox[\textwidth][c]{
    \resizebox{0.7\textwidth}{!}{ %
    \begin{tabular}{l c c c c }
	\toprule
    Method & MSE & Num. selected groups & PP. coverage & PP. length \\
    \midrule
    GSVB-D & 0.012 (0.002) & 3.50 (0.527) & 0.945 (0.027) & 0.425 (0.005) \\
    GSVB-B & 0.012 (0.002) & 4.00 (0.667) & 0.941 (0.029) & 0.421 (0.005) \\
    SSGL & 0.012 (0.002)   & 57.40 (5.481) & - & - \\
    \bottomrule
    \end{tabular}%
    }}
    \caption
    [Comparison of group sparse methods on genetic data]
    {\textit{\mousesecshort}, evaluation of methods on the held out genotype data. Reported is the 10-fold cross-validated MSE, the number of selected groups, the posterior predictive coverage, and mean posterior predictive interval length.
    }
    \label{tab:mgd_results}
\end{table}

\begin{figure}[htp]
    \centering
    \includegraphics[width=1.0\textwidth]{./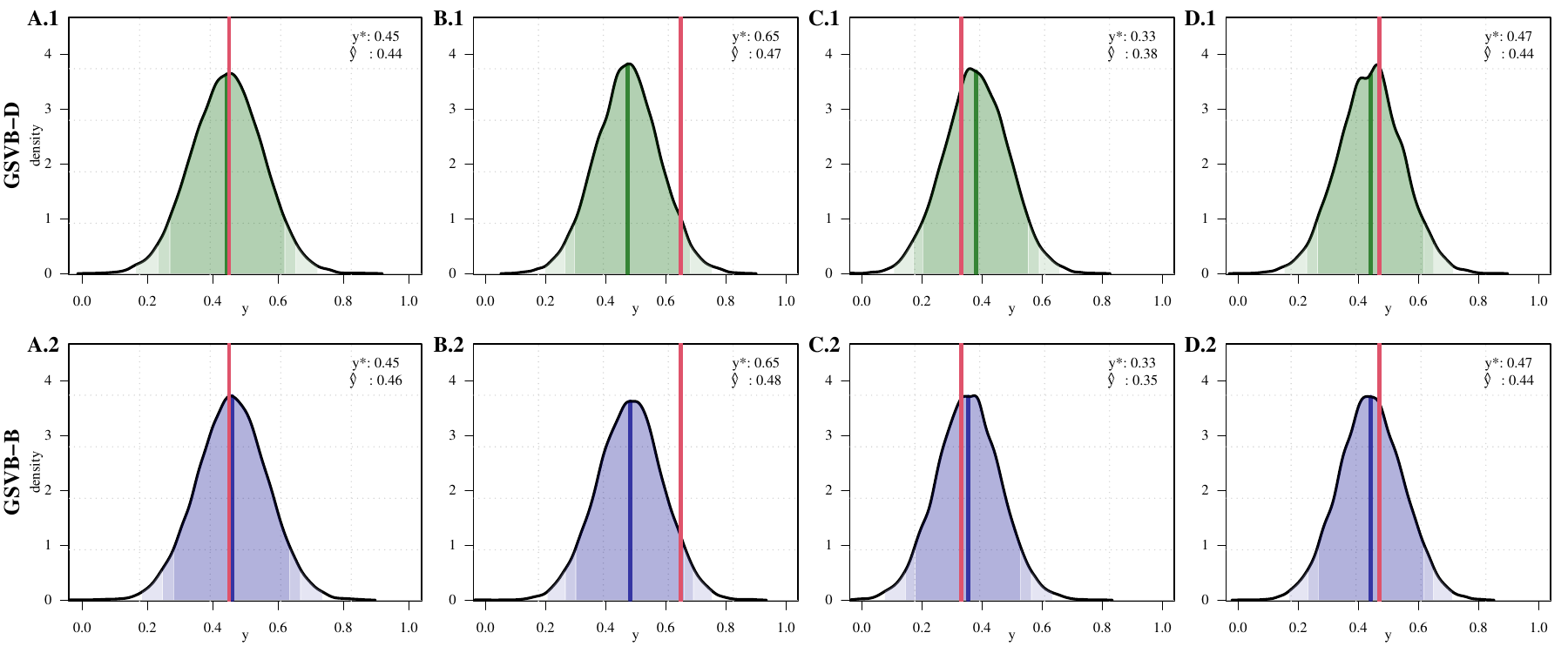}
    {\footnotesize
    \begin{tabular}{l c}
       \textbf{GSVB-D}:
       \hspace{0.4em} PP mean: \hspace{-0.4em} \thickgreen  
       \hspace{1em} Mass\%: \hspace{0.1em}
    	  \colorSquareB{p90!30} 90\% \hspace{0.3em}
    	  \colorSquareB{p95!30} 95\% \hspace{0.3em}
    	  \colorSquareB{p99!30} 99\%
       &  
       \multirow{2}{*}{\hspace{0.2em} \thickredline  \hspace{0.2em} Observed response 
       }
       \\
        \textbf{GSVB-B:}
        \hspace{0.4em} PP mean: \hspace{-0.4em} \thickblue
        \hspace{1em} Mass\%: \hspace{0.1em}
        \colorSquareB{p90B!30} 90\% \hspace{0.3em}
        \colorSquareB{p95B!30} 95\% \hspace{0.3em}
        \colorSquareB{p99B!30} 99\%
         & 
    \end{tabular}
    \vspace{-0.5em}
    } %
    \begin{minipage}{\textwidth}
    \caption
    [Posterior predictive distributions of variational group sparse methods on SNP data]
    {\textit{\mousesecshort}, variational posterior predictive distribution for GSVB-D (panels A.1 - D.1) and GSVB-B (panels A.2 - D.2) constructed for four observations in the held out test set (fold 1). Notably, the red line (\thickredline) represents the measured LDL-C level, with the adjacent text giving this value ($y^*$) and the posterior mean ($\widehat{y}$). The shading of the variational posterior indicates where $90\%$, $95\%$ and $99\%$ of the mass is contained.}
    \label{fig:mouse-pgrs}
    \end{minipage}
\end{figure}

\subsection{Bardet-Biedl syndrome gene expression study}
\label{sec:rat}

Here, we examine microarray data consisting of gene expression measurements for 120 laboratory rats\footnote{available at \url{ftp://ftp.ncbi.nlm.nih.gov/geo/series/GSE5nnn/GSE5680/matrix/}}. Originally, the dataset was studied by \cite{Scheetz2006a} and has subsequently been used to demonstrate the performance of several group selection algorithms \citep{Huang2010, Breheny2015, Bai2020}. Briefly, the dataset consists of normalized microarray data harvested from the eye tissue of 12-week-old male rats. The outcome of interest is the expression of TRIM32, a gene known to cause Bardet-Biedl syndrome, a genetic disease affecting multiple organs, including the retina.

To pre-process the original dataset, which consists of $31,\!099$ probe sets (the predictors), we follow \cite{Breheny2015} and select the $5,\!000$ probe sets that exhibit the largest variation in expression on the log scale. Further, following \cite{Breheny2015} and \cite{Bai2020}, a non-parametric additive model is used, wherein,
$
    y_i = \mu + \sum_{j=1}^p f_j(x_{ij}) + \epsilon_i
$
with $\epsilon_i \sim N(0, \tau^2)$ and $f_j : \R \rightarrow \R$. Here, $y_i \in \R$ corresponds to the expression of TRIM32 for the $i$th observation, and $x_{ij}$ the expression of the $j$th probe set for the $i$th observation. To approximate $f_j$, a three-term natural cubic spline basis expansion is used. The resulting processing gave a group regression problem with $n=120$ and $M=1,\!000$ groups of size $m=3$. Denoting by $\phi_{j, k}$ the $k$th basis for $f_j$, we have
\begin{equation} \label{eq:npam-basis}
    y_i = \mu + \sum_{j=1}^M \sum_{k=1}^m \phi_{j, k}(x_{ij}) + \epsilon_i\;.
\end{equation}

\red{
As before, the performance of the methods is evaluated using ten-fold cross-validation. Overall, GSVB performed excellently; in particular, GSVB--D obtained a comparable 10-fold cross-validated MSE to SSGL, however, with a much smaller model size, meaning the models produced are parsimonious while being comparably predictive (\Cref{tab:rgx}). Furthermore, the PP coverage is in line with the nominal level, particularly for GSVB--D, which has a larger PP coverage than GSVB--B, while also having a smaller PP interval length. 
 }

\blue{
As seen within \Cref{tab:rgx}, GSVB selects on average one group of probes per fold. Among the 10 cross-validation folds, the most commonly selected is the group of probes assigned to gene \textit{KLHL24}. This gene was also reported in the study of \cite{Bai2020} that analysed the same dataset with SSGL. While the proteins encoded by both \textit{TRIM32} and \textit{KLHL24} genes are ubiquitin ligases, to the best of our knowledge no biological link has been made between these two genes in the literature. } 



\vspace{2em}
\begin{table}[htp]
    \centering
    \makebox[\textwidth][c]{
    \resizebox{0.8\textwidth}{!}{ %
    \begin{tabular}{l c c c c}
        \toprule
        Method & MSE & Num. selected groups & PP. coverage & PP. length \\ 
        \midrule
        GSVB-D & 0.012 (0.009) & 1.00 (0.000) & 0.958 (0.071) & 0.465 (0.024) \\
        GSVB-B & 0.012 (0.009) & 1.00 (0.000) & 0.942 (0.069) & 0.532 (0.041) \\
        SSGL & 0.012 (0.011) & 27.80 (3.190) & -- & -- \\
        \bottomrule
    \end{tabular}%
    }}
    \vspace{0.5em}
    \caption[Comparison of group sparse methods on gene expression data]{\textit{Bardet-Biedl Syndrome Gene Expression Study}, evaluation metrics computed on the held-out test data. Average (sd.) over the 10 folds for the metrics are provided: mean squared error, number of selected groups, posterior predictive coverage, and mean posterior predictive interval length.} 
    \label{tab:rgx}
\end{table}

\subsection{MEMset splice site detection}
\label{sec:memset}

The following application is based on the prediction of short read DNA motifs, a task which plays an important role in many areas of computational biology. For example, in gene finding, wherein algorithms such as GENIE \citep{Burge1997} rely on the prediction of splice sites (regions between coding and non-coding DNA segments).

The MEMset donor data has been used to build predictive models for splice sites and consists of a large training and test set\footnote{available at \url{http://hollywood.mit.edu/burgelab/maxent/ssdata/}}. The original training set contains 8,415 true and 179,438 false donor sites, and the test set 4,208 true and 89,717 false donor sites. The predictors are given by 7 factors with 4 levels each (A, T, C, G); a more detailed description is given in \cite{Yeo2004}. Initially analyzed by \cite{Yeo2004}, the data has subsequently been used by \cite{Meier2008} to evaluate the logistic group LASSO. 

To create a predictive model, we follow \cite{Meier2008} and consider all $3$rd order interactions of the 7 factors, which gives a total of $M=64$ groups and $p=1,156$ predictors. To balance the sets, we randomly sub-sampled the training set without replacement, creating a training set of 1,500 true ($Y=1$) and 1,500 false donor sites. Regarding the test set, a balanced set of 4,208 true and 4,208 false donor sites was created. 

In addition to GSVB, we fit SSGL and group LASSO (used in the original analysis), for which we performed 5-fold cross-validation to tune the hyperparameters. To assess the different methods, we use the test data, dividing it into 10 folds, and reporting the: (i) precision $\left( \frac{\text{TP}}{\text{TP + FP}} \right)$, (ii) recall $\left( \frac{\text{TP}}{\text{TP + FN}} \right)$, (iii) F-score $\left( \frac{\text{TP}}{\text{TP + 0.5FP + 0.5FN}} \right)$,  (iv) AUC, and (v) the maximum correlation coefficient between true class membership and the predicted membership over a range of thresholds, $\rho_{\max}$, as used in \cite{Yeo2004} and \cite{Meier2008}. In addition, we report (vi) the number of selected groups, which for the group LASSO is given by the number which have a non-zero $\ell_{2,1}$ norm. 

\begin{table}[htp]
    \centering
\makebox[\textwidth][c]{
\resizebox{1.0\textwidth}{!}{ %
   \begin{tabular}{l c c c c c c}
	\toprule
	Method & Precision & Recall & F-score & AUC & $\rho_{\max}$ & Num. selected groups \\ 
    \midrule
 GSVB--D &
0.915 (0.013) & 0.951 (0.011) & 0.933 (0.009) & 0.975 (0.004) & 0.870 (0.016) &  10 \\
GSVB--B &
0.916 (0.011) & 0.958 (0.013) & 0.936 (0.008) & 0.975 (0.004) & 0.875 (0.016) &  12 \\
SSGL &
0.921 (0.012) & 0.950 (0.011) & 0.935 (0.009) & 0.977 (0.004) & 0.879 (0.015) &  48 \\
Group LASSO &
0.924 (0.013) & 0.952 (0.010) & 0.938 (0.010) & 0.977 (0.004) & 0.882 (0.016) &  60 \\
	\bottomrule
    \end{tabular}
}
}
    \caption[Comparison of group sparse methods on splice site detection data]
    {\textit{MEMset splice site detection}, evaluation metrics computed on the held-out test data. The test set was split into 10 folds, and the mean (sd.) of the metrics are computed.}
    \label{tab:memset}
\end{table}

\begin{figure}[htp]
    \centering
    \makebox[\textwidth][c]{
	   \includegraphics[width=1.1\textwidth]{./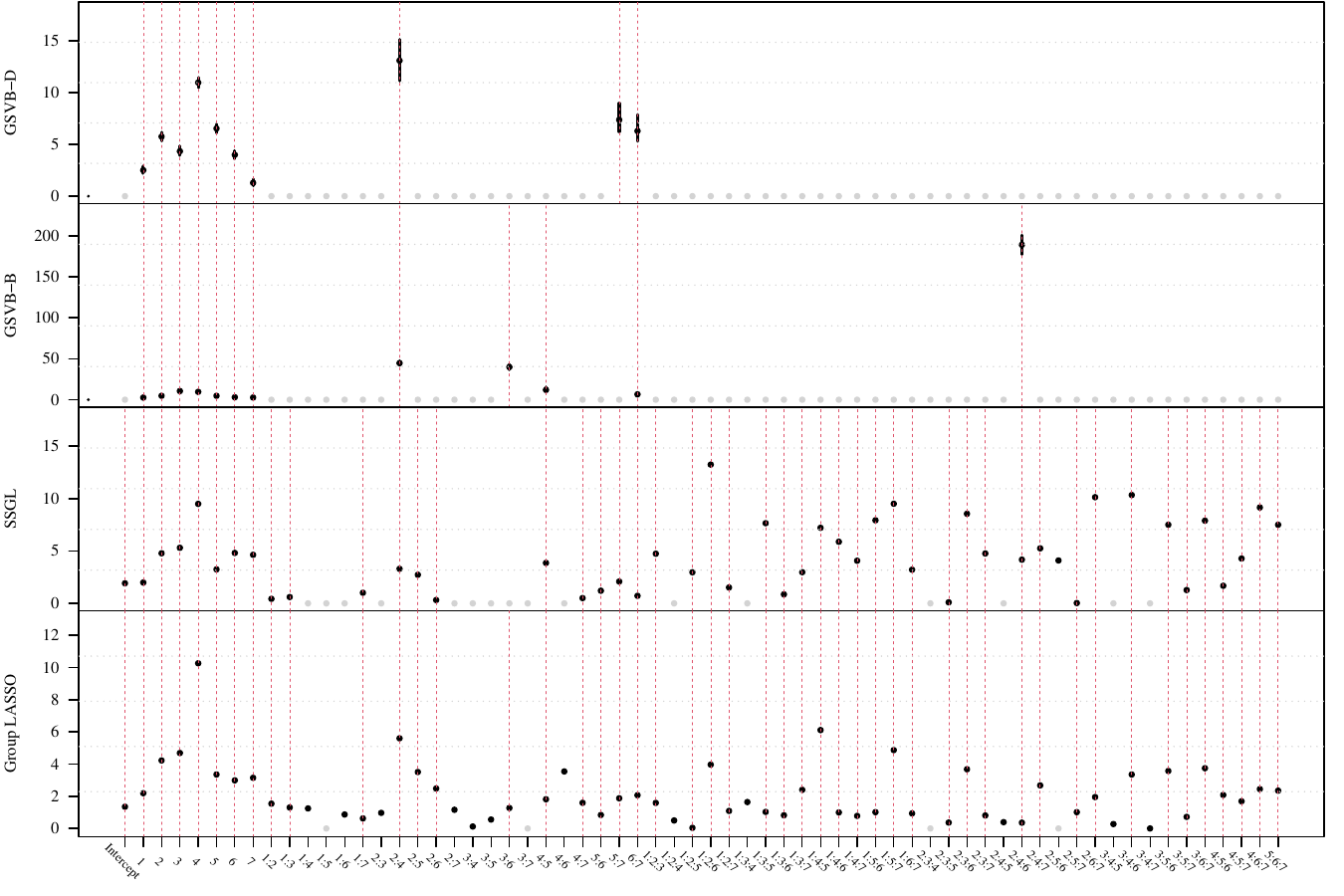}
    } %
    {\footnotesize
	\redline  \hspace{0.2em} Groups selected by more than one method \hspace{1em}
	  \greydot \hspace{0.2em}   Groups for which $\| \beta_{G_k} \| = 0$
   \hspace{0.5em}

	  \blackdot \hspace{0.2em}  Groups for which $\| \widehat{\beta}_{G_k} \| \neq 0$
   \hspace{1em}
	  \blackline \hspace{0.2em}  95\% credible intervals for $ \| \beta_{G_k} \|$ where available 
   \hspace{0.5em}
    }
    \caption[Selected groups by group sparse methods on splice site detection data]
    {\textit{MEMset splice site detection}, comparison of $\ell_2$-norms for each group of coefficients $\widehat{\beta}_{G_k}$ for $k=1,\dots,64$. The red dotted line 
    shows groups which have been selected by more than one method. The points indicate $\| \widehat{\beta}_{G_k} \|$ where $\widehat{\beta}_{G_k}$ is the posterior mean for GSVB, the MAP for SSGL, and the MLE estimate under the group LASSO. The points are colored in black ({\raisebox{1pt}{\footnotesize $\bullet$}}) when they are non-zero and grey 
    ({\raisebox{.6pt} {\color{lightgray} \small $\bullet$}}) otherwise. Finally, when available, the 95\% credible set for $\| \beta_{G_k} \| $ is given by the solid black line.
    }
    \label{fig:memset_norms}
\end{figure}

Overall, the methods performed comparably, with GSVB--D obtaining the largest recall and the group LASSO the largest precision, F-score, AUC, and $\rho_{\max}$ by a small margin (\Cref{tab:memset}). However, GSVB returned models that were of smaller size and therefore more parsimonious than those of SSGL and the group LASSO. This is further highlighted in \Cref{fig:memset_norms}, which showcases the fact that the models obtained by GSVB are far simpler, selecting groups 1--7 as well as the interactions between groups 2:4 and 6:7 for both methods, with GSVB--D selecting 5:7 and GSVB--B selecting 2:6, 4:6, and 2:4:6 in addition to these. Further, \Cref{fig:memset_norms} showcases the fact that uncertainty is available about the norms of the groups. 
\section{Discussion}

In this paper, we have introduced GSVB, a scalable method for group-sparse general linear regression. We have shown how fast coordinate ascent variational inference algorithms can be constructed and used to compute the variational posterior. We have provided theoretical guarantees for the proposed VB approach in the setting of grouped sparse linear regression by extending the theoretical work of \cite{CSV15} and \cite{RS22} for group sparse settings. 

Through extensive numerical studies, we have demonstrated that GSVB provides state-of-the-art performance, offering a computationally inexpensive substitute to MCMC, whilst also performing comparably or better than MAP methods. 
Furthermore, through our analysis of real-world datasets, we have highlighted the practical utility of our method. Demonstrating that GSVB provides parsimonious models with competitive predictive performance, and, as demonstrated in \Cref{sec:mice}, selects variables with established biological significance.


\blue{One limitation of the GSVB method lies in its sensitivity to the initialisation of the variational Bayes parameters. In particular, if a purely random initialization is used, the CAVI algorithm may fail to converge. This limitation reflects a broader challenge shared by many gradient-based optimization methods. We proposed a heuristic initialisation strategy that consistently yielded robust and satisfactory performance in our simulation studies and we recommend that practitioners adopt this initialisation scheme. However, further examination of different initialization and update strategies could be an insightful avenue for future work which might lead to further improvements in runtime and performance}

\red{
Furthermore, different to MAP and frequentist methods, GSVB provides scalable uncertainty quantification, which serves as a powerful tool in several application areas. Nevertheless, VB methods are known to underestimate the posterior variance. Methods have been proposed to correct this issue, see, for example \cite{Giordano2018}, however to our knowledge these methods are yet to be extended to the high-dimensional setting, which may prove an interesting avenue for future research.
} 

\blue{Finally, the proposed method can be extended to the sparse-group setting, enabling a bi-level selection of variables within selected groups, as it has been done in the setting of sparse group lasso \citep{Breheny2009, Simon2013}. Bayesian bi-level selection has been explored in the literature by \cite{Xu2015} who proposed spike-and-slab priors for sparse group lasso. GSVB can be potentially adapted for bi-selection by extending the prior and variational family to include a spike-and-slab prior at the coefficient level within each group. For example, instead of using a multivariate Laplace distribution in \eqref{eq:prior}, one could use a spike-and-slab mixture instead. }




\bibliography{refs.bib}

\newpage
\appendix
\clearpage

\pagenumbering{arabic}
\setcounter{algorithm}{0}
\setcounter{page}{1}
\setcounter{table}{0}
\setcounter{figure}{0}
\numberwithin{equation}{section}


\section{Co-ordinate ascent algorithms} \label{appendix:gsvb_derivations}

Derivations are presented for the variational family $\Q'$, noting that $\Q \subset \Q'$, hence the update equations under $\Q$ follow directly from those under $\Q'$. Recall, the family is given as,
\begin{equation}
    \Q' =
    \left\{ Q'(\mu, \Sigma, \gamma) = 
	\bigotimes_{k=1}^M Q_k'(\mu_{G_k}, \Sigma_{G_k}, \gamma_k) :=
	\bigotimes_{k=1}^M 
	\left[ 
	    \gamma_k\ N\left(\mu_{G_k}, \Sigma_{G_k} \right) + 
	    (1-\gamma_k) \delta_0
	\right] 
    \right\} 
\end{equation}
where $\Sigma \in \R^{p \times p}$ is a covariance matrix for which $\Sigma_{ij} = 0$, for $i \in G_k, j \in G_l, k \neq l$ (i.e. there is independence between groups) and $\Sigma_{G_k} = (\Sigma_{ij})_{i, j \in G_k} \in \R^{m_k \times m_k}$ denotes the covariance matrix of the $k$th group.

\subsection{Gaussian Family}

Under the Gaussian family, $Y_i \overset{\text{iid}}{\sim} N(x_i^\top \beta, \tau^2)$, and the $\log$-likelihood is given by,
\begin{equation*}
    \ell(\D; \beta, \tau^2) = - \frac{n}{2}\log(2\pi\tau^2) - \frac{1}{2\tau^2} \| y - X \beta \|^2.
\end{equation*}
Recall, we have chosen to model $\tau^2$ by an inverse-Gamma prior 
which has density,
\begin{equation*}
    \frac{b^a}{\Gamma(a)} x^{-a - 1} \exp \left(\frac{-b}{x} \right)
\end{equation*}
where $a, b >0$ and in turn we extend $\Q'$ to $\Q'_\tau = \Q' \times \{ \IG(a', b') : a', b > 0\}$.

Under this variational family, the expectation of the negative log-likelihood is given by,
{\allowdisplaybreaks
\begin{align}
    \E_{Q'_\tau} 
    & \left[ 
	- \ell(\D; \beta) 
    \right]  
= 
    \E_{Q'_\tau} 
    \left[ 
	\frac{n}{2} \log(2\pi \tau^2) 
	+ \frac{1}{2\tau^2} \| y - X \beta \|^2
    \right] 
    \nonumber \\
= &\
    \E_{Q'_\tau} 
    \left[ 
	\frac{n}{2} \log(2\pi \tau^2) +
	\frac{1}{2\tau^2} \left( 
	    \|y \|^2 + \| X \beta \|^2 - 2 \langle y, X\beta \rangle 
	\right)
    \right] 
    \nonumber \\
= &\
    \frac{n}{2}(\log(2\pi) + \log(b') - \kappa(a'))
    \nonumber \\
+ &\
    \frac{a'}{2b'}
    \Bigg(
	\| y \|^2 
	+
	\left(
	\sum_{i=1}^p \sum_{j=1}^{p} 
	    (X^\top X)_{ij} 
	    \E_{Q'_\tau} \left[ \beta_i \beta_j \right] 
	\right)
	 - 2 \sum_{k=1}^M \gamma_k \langle y, X_{G_k} \mu_{G_k} \rangle
    \Bigg)
\end{align}
} %
where the expectation 
\begin{equation*}
    \E_{Q'_\tau} \left[ \beta_i \beta_j \right] = \begin{cases}
	\gamma_k \left( \Sigma_{ij} + \mu_{i} \mu_{j} \right) & \quad i,j \in G_k \\
	\gamma_k \gamma_h \mu_{i}\mu_{j}       & \quad i \in G_k, j \in G_h, h \neq k
    \end{cases}
\end{equation*}

\section{Gibbs Sampler} \label{appendix:gibbs}

We present a Gibbs sampler for the Gaussian family of models, noting that the samplers for the Binomial and Poisson family use the same principles. We begin by considering a slight alteration of the prior given in \eqref{eq:prior}. Formally, 
\begin{equation}
\begin{aligned}
    \beta_{G_k} \overset{\text{ind}}{\sim} &\ \Psi(\beta_{G_k}; \lambda) \\
z_k | \theta_k \overset{\text{ind}}{\sim} &\ \text{Bernoulli}(\theta_k) \\
    \theta_k \overset{\text{iid}}{\sim} &\ \text{Beta}(a_0, b_0) \\
\end{aligned}
\end{equation}
for $k=1,\dots,M$ and $\tau^2 = \xi \overset{\text{iid}}{\sim} \Gamma^{-1}(\xi; a, b)$. Under this prior writing the likelihood as,
\begin{equation}
    p(\D | \beta, z, \xi) = \prod_{i=1}^n \phi \left(y_i; f \left( \sum_{k = 1}^M z_k \langle x_{G_k}, \beta_{G_k} \rangle \right), \xi \right)
\end{equation}
where $\phi(\cdot; \mu, \sigma^2)$ is the density of the Normal distribution with mean $\mu$ and variance $\sigma^2$, and $f$ is the link function (which in this case is the identity), yields the posterior
\begin{equation} \label{eq:posterior_mcmc} 
    p(\beta, z, \xi | \D) \propto p(\D | \beta, z, \xi) \pi(\xi) \prod_{k=1}^M \pi(\beta_{G_k}) \pi(z_k | \theta_k) \pi(\theta_k).
\end{equation}
Notably, the posterior is equivalent to our previous formulation.

To sample from \eqref{eq:posterior_mcmc}, we construct a Gibbs sampler as outlined in \Cref{alg:mcmc_sampler_gaus}. Ignoring the superscript for clarity, the distribution $ \theta_k | \D, \beta, z, \theta_{-k}, \xi $ is conditionally independent of $ \D, \beta, z_{-k}, \xi$ and $\theta_{-k}$. Therefore, $\theta_k$ is sampled from $\theta_k | z_k$, which has a $\text{Beta}(a_0 + z_k, b_0 + 1 - z_k)$ distribution. Regarding $z_j^{(i)}$, the conditional density
\begin{align} \label{eq:mcmc_zj} 
    p(z_k | \D, \beta, z_{-k}, \theta, \xi) \propto &\ p(\D | \beta, z_{-k}, z_k, \theta, \xi) \pi(z_k | \beta, z_{-k}, \theta, \xi). \nonumber \\
    = &\ p(\D; \beta, z, \xi) \pi(z_k | \theta_k). 
\end{align}
As $z_k$ is discrete, evaluating the RHS of \eqref{eq:mcmc_zj} for $z_k = 0 $ and $z_k = 1 $, gives the unnormalised conditional probabilities. Summing gives the normalisation constant and thus we can sample $ z_k $ from a Bernoulli distribution with parameter 
\begin{equation}
    p_k = \frac
	{p(z_k = 1 | \D, \beta, z_{-k}, \theta, \xi)}
	{p(z_k = 0 | \D, \beta, z_{-k}, \theta, \xi) + p(z_k = 1 | \D, \beta, z_{-k}, \theta, \xi)}.
\end{equation}
Finally, regarding $ \beta_{G_k, j}^{(i)} $ we use a Metropolis-Hastings within Gibbs step, wherein a proposal $ \beta_{G_k, j}^{(i)} $ is sampled from a random-walk proposition kernel $ K $, where in our implementation $K(x | \vartheta, \varepsilon) = N(x; \vartheta, \sqrt{2} \left( 10^{1 - \varepsilon} \right)^{1/2})$. The proposal is then accepted with probability $A$ or rejected with probability $1-A$, in which case $ \beta_{G_k, j}^{(i)} \leftarrow \beta_{G_k, j}^{(i-1)} $, where $A$ is given by,
\begin{equation*}
    A = \min \left(1, \frac
	{p(\D; \beta_{G_k^c}, \beta_{G_k, -j}, \beta_{G_k, j}^{(i)}, z, \xi) \pi(\beta_{G_k, j}^{(i)} | \beta_{G_k, -j}) } 
	{p(\D; \beta_{G_k^c}, \beta_{G_k, -j}, \beta_{G_k, j}^{(i-1)}, z, \xi) \pi(\beta_{G_k, j}^{(i-1)} | \beta_{G_k, -j}) } 
	\frac
	    {K(\beta_j^{(i-1)} |\beta_j^{(i)}, z_k^{(i)})}
	    {K(\beta_j^{(i)} |\beta_j^{(i-1)}, z_k^{(i-1)})}
    \right)
\end{equation*}

\begin{algorithm}[ht]
    \caption{MCMC sampler for the Gaussian family GSpSL regression}
    \label{alg:mcmc_sampler_gaus}
    \begin{algorithmic}[0]
	\vspace{.3em}
	\State Initialize $\beta^{(0)}, z^{(0)}, \theta^{(0)}, \xi^{(0)}$
	\NoDo \For {$i = 1,\dots, N $}
	    \NoDo \For {$k = 1,\dots, M $}
	    \State $\theta^{(i)}_{k} \overset{\text{iid.}}{\sim} \text{Beta}(a_0 + z_k^{(i-1)}, b_0 + 1 - z_k^{(i-1)})$
	    \EndFor
	    \NoDo \For {$k = 1,\dots, M $}
		\State $z^{(i)}_{k} \overset{\text{ind.}}{\sim} \text{Bernoulli}(p_k)$
	    \EndFor
	    \NoDo \For {$k = 1,\dots, M $}
		\NoDo \For {$j = 1,\dots, m_k $}
		\State $\beta_{G_k, j}^{(i)} \sim p(\beta_{G_k, j}^{(i)} | \D, z^{(i)}, \beta^{(i)}_{G_{1:k-1}}, \beta_{G_{k, 1:j-1}}^{(i)}, \beta_{G_{k, j+1:m_k}}^{(i-1)}, \beta_{G_{k+1:M}}^{(i-1)}, \xi^{(i-1)})$
		\EndFor
	    \EndFor
	    \State Sample $\xi^{(i)} \overset{\text{iid.}}{\sim} \Gamma^{-1}(a + 0.5n, b + 0.5 \| y - \sum_{k=1}^M z_k X_{G_k} \beta_{G_k}^{(i)} \| ^2 )$
	\EndFor
	\State \Return $ \{ \beta^{(i)}, z^{(i)}, \theta^{(i)}, \xi^{(i)} \}_{i=1}^N$.
    \end{algorithmic}
\end{algorithm}

\section{Numerical Study}

\subsection{Runtime}

\begin{table}[htp]
    \centering
\makebox[\textwidth][c]{
\resizebox{0.90\textwidth}{!}{ %
\renewcommand{\arraystretch}{1.0}
    \begin{tabular}{c l l l l l}
\toprule
 & Method & Setting 1 & Setting 2 & Setting 3 & Setting 4 \\
\midrule
\multirow{3}{1.5cm}{Gaus. m=10, s=10}
& GSVB--D &
51.1s (46.8s, 56.1s)    & 50.0s (46.0s, 59.2s)    & 55.2s (49.8s, 5m 47s)   & 5m 42s (2m 57s, 15m 33s) \\
& GSVB--B &
51.8s (47.2s, 56.7s)    & 51.6s (46.5s, 1m 2s)    & 58.3s (49.9s, 5m 51s)   & 5m 48s (2m 1s, 15m 28s) \\
& SSGL &
1m 3s (59.5s, 1m 13s)   & 50.8s (41.6s, 59.5s)    & 1m 1s (58.1s, 1m 16s)   & 1m 24s (1m 11s, 2m 3s)  \\
\midrule
\multirow{3}{1.5cm}{Gaus. m=10, s=20}
& GSVB--D &
2m 53s (2m 42s, 2m 2s)  & 2m 51s (2m 41s, 2m 5s)  & 7m 34s (2m 48s, 8m 59s) & 15m 39s (5m 42s, 1h 17m) \\
& GSVB--B &
2m 53s (2m 42s, 2m 3s)  & 2m 53s (2m 43s, 2m 9s)  & 7m 42s (2m 49s, 8m 10s) & 18m 43s (5m 1s, 1h 32m) \\
& SSGL &
2m 42s (1m 23s, 2m 3s)  & 1m 16s (59.2s, 2m 40s)  & 2m 6s (2m 37s, 2m 30s)  & 2m 16s (2m 31s, 6m 32s) \\
\midrule
\multirow{3}{1.5cm}{Binom. m=5, s=5}
& GSVB--D &
\multicolumn{1}{c}{--} & 3m 38s (2m 50s, 4m 45s) & 2m 24s (2m 31s, 4m 9s)  &  2m 47s (1m 14s, 4m 31s)\\
& GSVB--B &
\multicolumn{1}{c}{--} & 3m 23s (2m 27s, 5m 49s) & 3m 9s (2m 13s, 5m 3s)   &  2m 15s (2m 37s, 4m 14s)\\
& SSGL &
\multicolumn{1}{c}{--} & 2m 54s (2m 44s, 3m 42s) & 3m 33s (2m 57s, 3m 26s) &  2m 40s (2m 31s, 2m 3s)\\
\midrule
\multirow{3}{1.5cm}{Pois. m=5, s=3}
& GSVB--D &
16.3s (9.8s, 21.5s)      & 11.2s (8.2s, 20.8s)     & 16.0s (10.4s, 43.0s)      & 12.9s (8.6s, 48.7s)    \\
& GSVB--B &
2m 26s (2m 59s, 3m 20s)  & 2m 55s (1m 29s, 3m 20s) & 3m 53s (2m 49s, 9m 25s)   & 3m 33s (2m 4s, 9m 11s) \\
& SSGL &
4m 11s (3m 49s, 8m 38s) & 4m 34s (2m 11s, 13m 4s) & 2m 29s (2m 56s, 16m 33s) & 2m 32s (1m 25s, 3m 2s)\\
\bottomrule
    \end{tabular}
    }
    }
    \caption[Runtime comparison of GSVB and SSGL]{Median (5\%, 95\% quartile) runtimes for numerical experiments presented in \Cref{fig:gaus_comp}. 
    }
    \label{tab:runtime_comp}
\end{table}

\subsection{Evaluation of initialization schemes and update ordering} \label{c5:sec:initialization}

\red{
Here the impact of the initialization and update order on the performance of GSVB is investigated. To do so, data is simulated for $(n, p, m, s) = (200, 1000, 5, 10)$, $(200, 1000, 10, 10)$ and $(100, 200, 1, 20)$, under the Gaussian family for settings 1 -- 4. 
Three different initialization schemes: (i) the group LASSO solution with small $\lambda$, (ii) random initialization where $\beta_j \sim N(0, 0.5^2)$, (iii) the ridge solution were considered. Additionally, three different update orders were considered: (i) sequential, (ii) random, (iii) magnitude where the groups are updated in order of decreasing $\| \mu_{G_k} \|$. 

In \Cref{fig:sen_init}, where the update order is fixed to the magnitude ordering (the best scheme amongst those considered), we see that group LASSO initialization consistently outperformed the other initialization schemes, yielding the lowest $\ell_2$-error and highest AUC across the different settings. We notice that random initialization may not be suitable as GSVB failed to converge in some settings. Additionally, in \Cref{fig:sen_order}, where we fix the initialization to the group LASSO solution and vary the update order, we see that the magnitude-based ordering yielded the best results, whereas the random update ordering scheme performed the worst. Overall, the results suggest that the magnitude ordering in combination with the group LASSO initialization is the most robust and should be used when applying the algorithm in practice. 
}

\begin{figure}[htp]
    \centering
    \vspace*{\fill}
    \makebox[\textwidth][c]{
	\includegraphics[width=1.02\textwidth]{./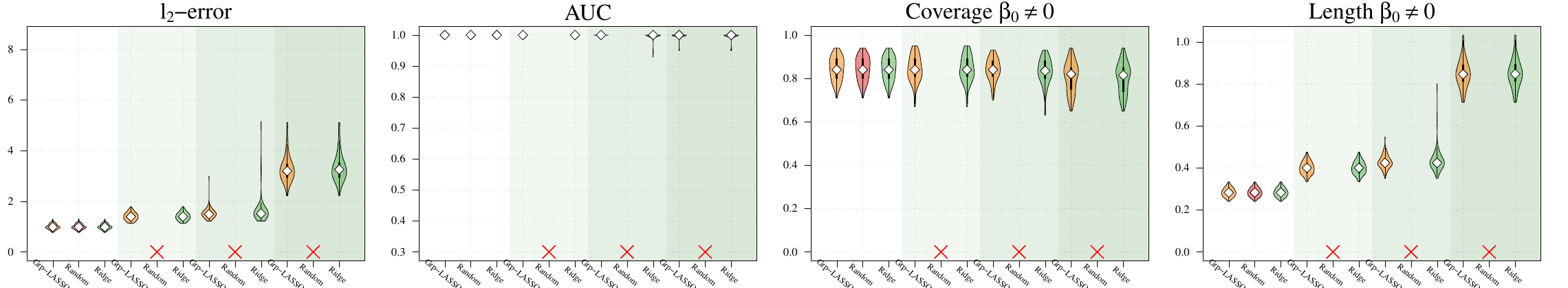}
    } %
    \makebox[\textwidth][c]{
	\includegraphics[width=1.02\textwidth]{./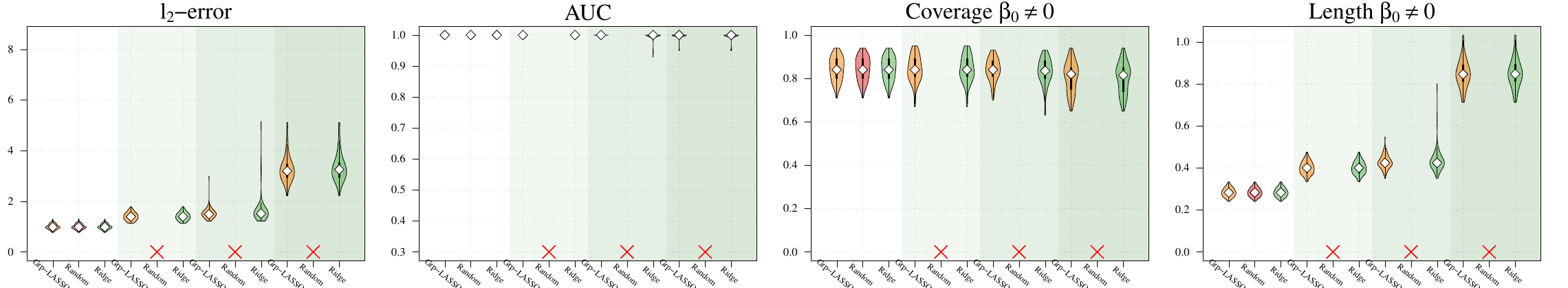}
    } %
    \makebox[\textwidth][c]{
	\includegraphics[width=1.02\textwidth]{./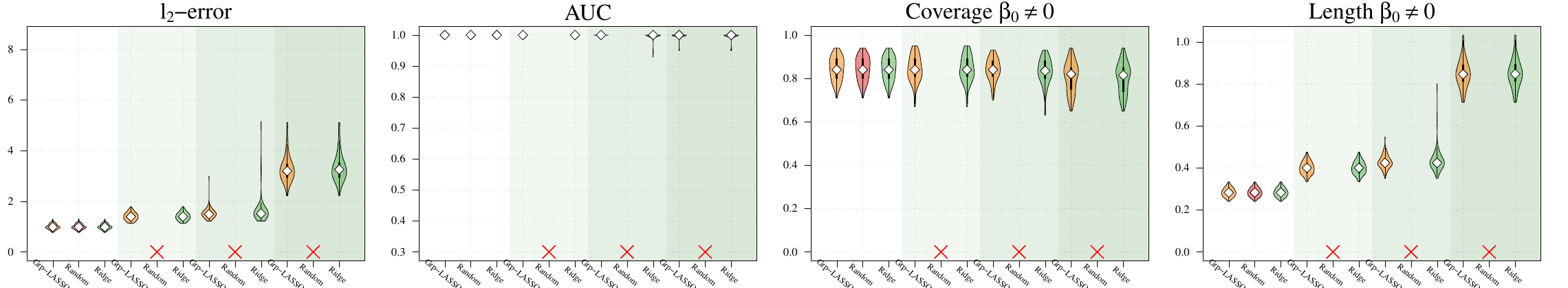}
    } %
    {\footnotesize
    \vspace{0.5em}
	\colorSquareB{setting1!15}  Setting 1 \hspace{0.5em}
	\colorSquareB{setting2!15}  Setting 2 \hspace{0.5em}
	\colorSquareB{setting3!15}  Setting 3 \hspace{0.5em}
	\colorSquareB{setting4!15}  Setting 4 \hspace{0.5em}
	
    \vspace{-0.5em}
	{\color{dnf} \large $\times$} 50+ runs did not converge \hspace{0.5em}
	\colorSquareB{mcmc!50} Group LASSO \hspace{0.5em}
	\colorSquareB{gsvb-b!50} Random \hspace{0.5em}
	\colorSquareB{gsvb-d!50}   Ridge
    }
    \caption[Comparison of initialization schemes for GSVB]{\red{Performance evaluation of different initialization schemes for the Gaussian family using the magnitude update ordering across 100 runs. \textbf{Row 1}: $(n, p, m, s) = (200, 1000, 5, 10)$, \textbf{Row 2:} $(n,p,m,s)=(200,1000,10,10)$, and \textbf{Row 3:} $(n,p,s,m)=(100,200,1,20)$.  
    For each method, the white diamond (\whitediamond) indicates the median of the metric, the thick black line (\blacklinethick) the interquartile range, and the black line (\blacklinethin) 1.5 times the interquartile range.
    }}
    \label{fig:sen_init}
    \vspace*{\fill}
\end{figure}

\begin{figure}[htp]
    \centering
    \makebox[\textwidth][c]{
	\includegraphics[width=1.02\textwidth]{./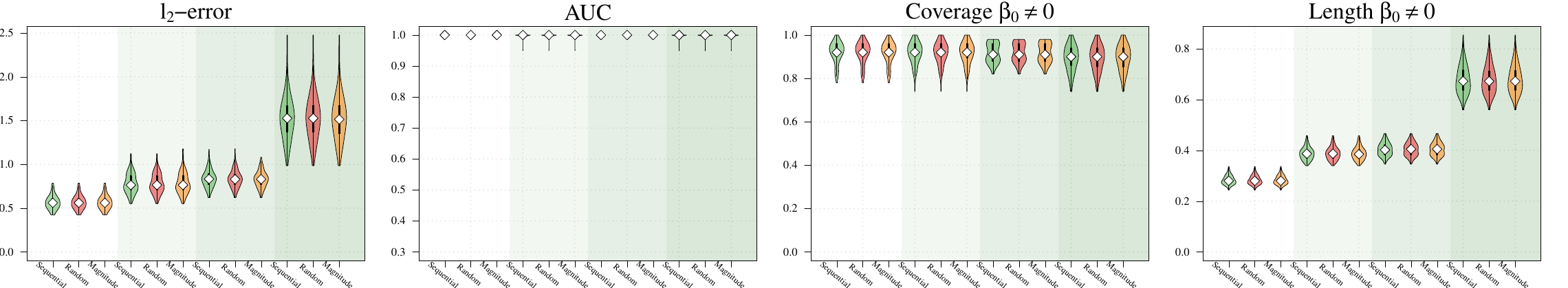}
    } %
    \makebox[\textwidth][c]{
	\includegraphics[width=1.02\textwidth]{./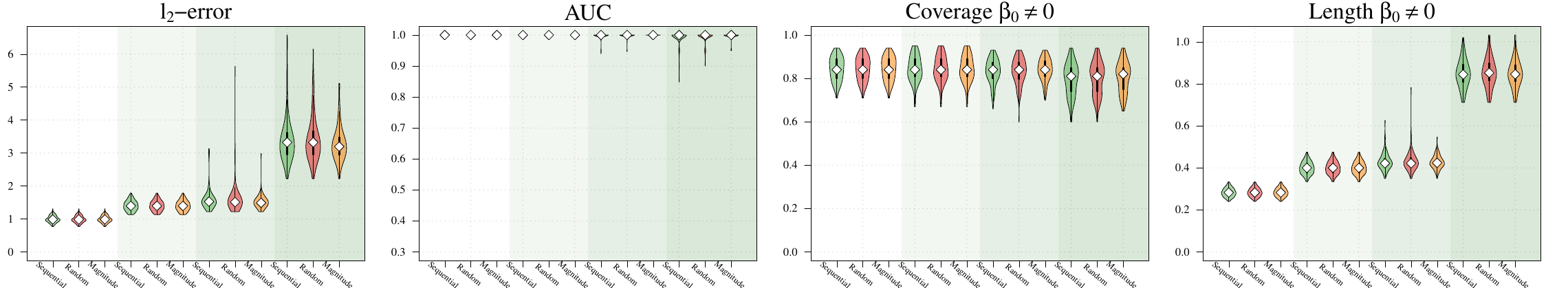}
    } %
    \makebox[\textwidth][c]{
	\includegraphics[width=1.02\textwidth]{./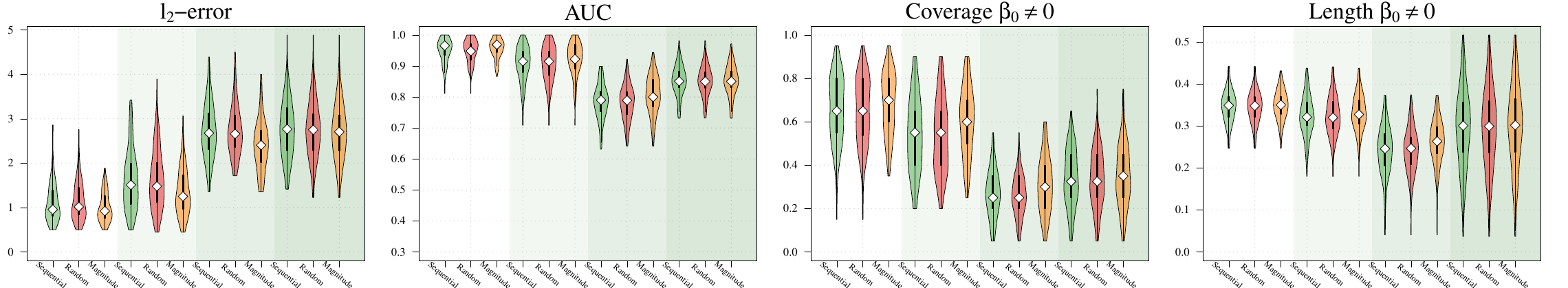}
    } %
    {\footnotesize
    \vspace{0.5em}
	\colorSquareB{setting1!15}  Setting 1 \hspace{0.5em}
	\colorSquareB{setting2!15}  Setting 2 \hspace{0.5em}
	\colorSquareB{setting3!15}  Setting 3 \hspace{0.5em}
	\colorSquareB{setting4!15}  Setting 4 \hspace{0.5em}
	
    \vspace{-0.5em}
	\colorSquareB{gsvb-d!50} Sequential \hspace{0.5em}
	\colorSquareB{gsvb-b!50} Random \hspace{0.5em}
	\colorSquareB{mcmc!50}   Magnitude
    }
    \caption[Comparison of parameter update ordering schemes for GSVB]{\red{Performance evaluation of different ordering schemes for the Gaussian family using the group LASSO as initialization across 100 runs. \textbf{Row 1}: $(n, p, m, s) = (200, 1000, 5, 10)$, \textbf{Row 2:} $(n,p,m,s)=(200,1000,10,10)$, and \textbf{Row 3:} $(n,p,s,m)=(100,200,1,20)$.  For each method, the white diamond (\whitediamond) indicates the median of the metric, the thick black line (\blacklinethick) the interquartile range, and the black line (\blacklinethin) 1.5 times the interquartile range.
    }}
    \label{fig:sen_order}
\end{figure}

\subsection{Effect of coefficient magnitudes}

\red{
The effect of the magnitude of the coefficients on the performance of GSVB is examined. For this we consider the Gaussian family with $(n, p, m, s) = (200, 1000, 5, 10)$ and vary the magnitude of the coefficients to be $0.05 - 2.0$, sampling the non-zero coefficients from $U(-\beta_{\max}, \beta_{\max})$. The results, presented in \Cref{fig:gaus_bmax} show that the performance of GSVB can be sensitive to the magnitude of the coefficients, with the performance of GSVB-D being more sensitive than GSVB-B, and the performance of both methods improving as the magnitude of the coefficients increases. 
}

\begin{figure}[H]
    \centering
    \makebox[\textwidth][c]{
	\includegraphics[width=1.0\textwidth]{./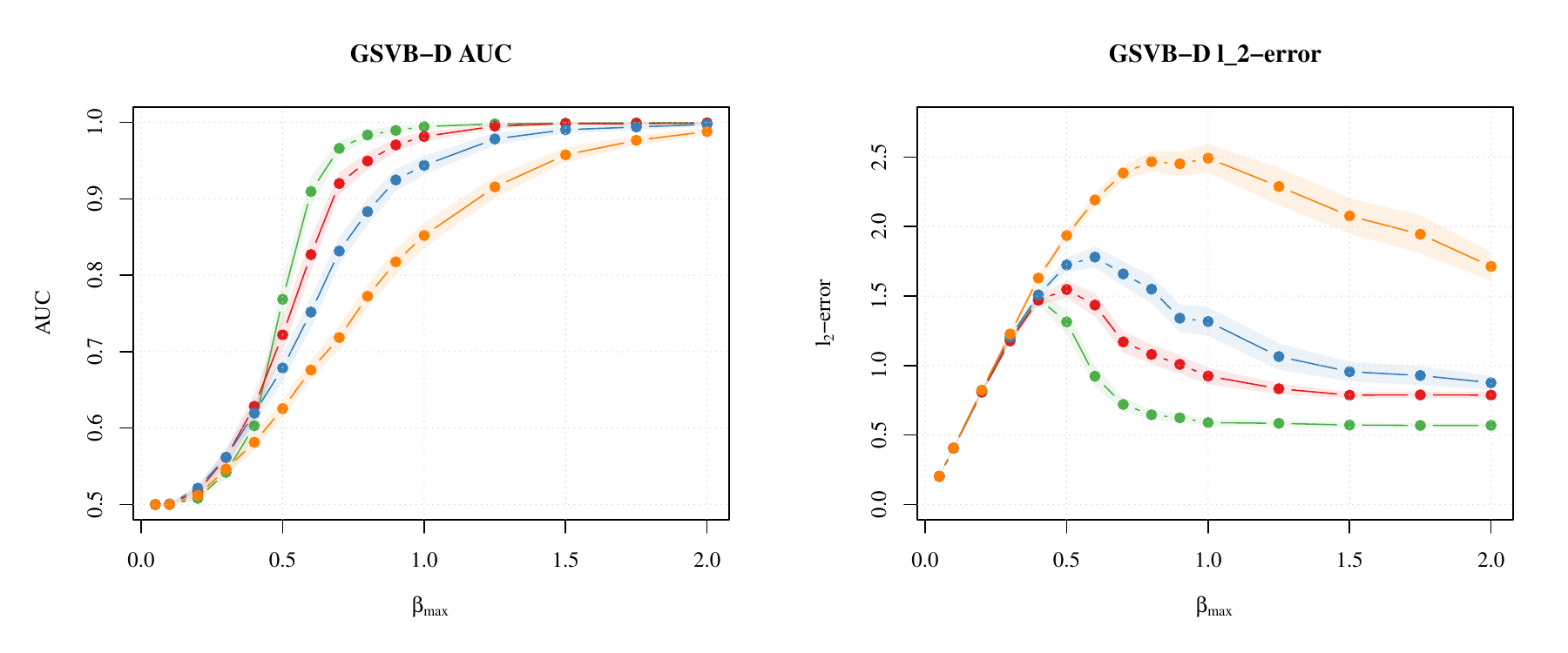}
    } %
    \vspace{-0.5em}
    \makebox[\textwidth][c]{
    \includegraphics[width=1.0\textwidth]{./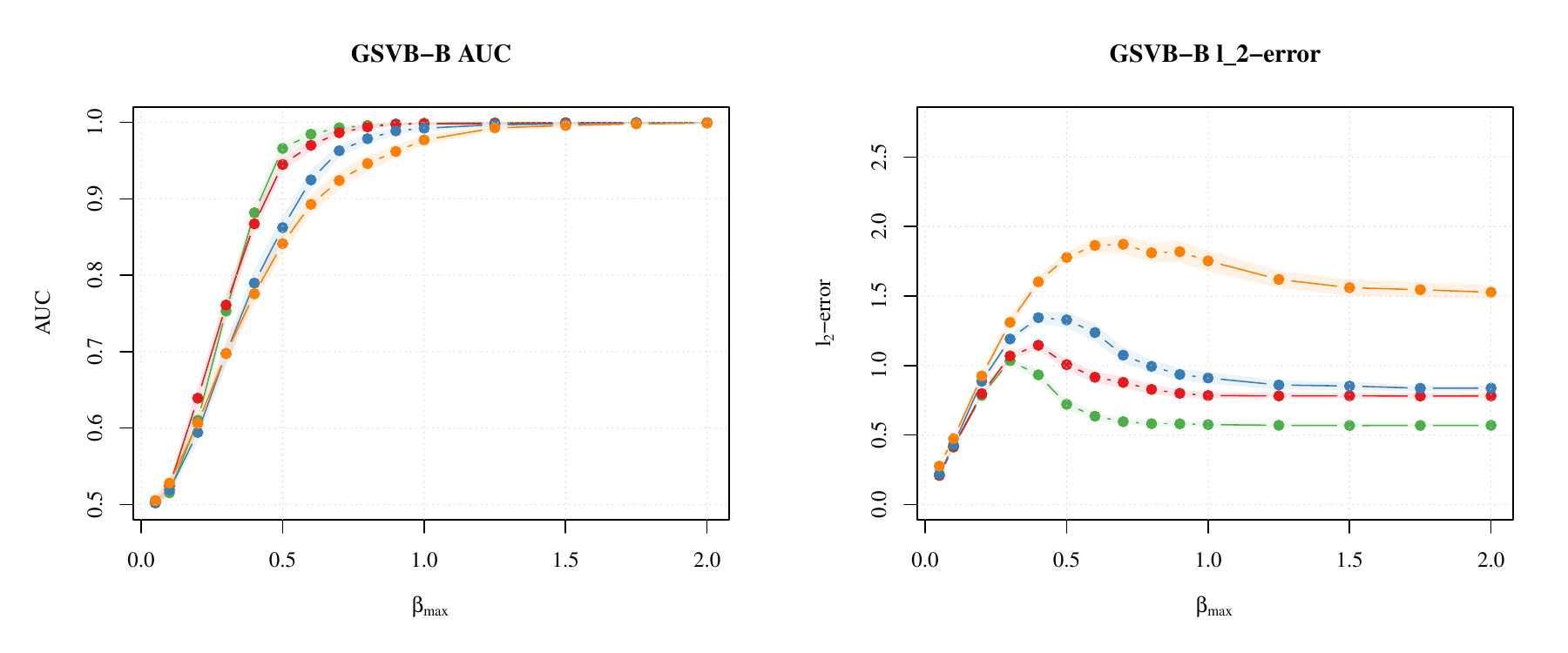}
    } %
    {\footnotesize
	\thickcolorline{gsvb-d}  Setting 1 \hspace{0.5em}
	\thickcolorline{gsvb-b}  Setting 2 \hspace{0.5em}
	\thickcolorline{ssgl}  Setting 3 \hspace{0.5em}
	\thickcolorline{mcmc}  Setting 4 \hspace{0.5em}
    }
    \caption{Performance of GSVB-D (\textbf{row 1}) and GSVB-B (\textbf{row 2}) for Settings 1--4 with $(n, p, m, s)=(200, 1,\!000, 5, 10)$ across 100 runs. Here the non-zero coefficients are sampled $U(-\beta_{\max}, \beta_{\max})$, with 
    $\beta_{\max}$ being varied from 0.05 to 2.0. The thick line presents the mean and the shaded area shows the $95\%$ interval of the metric.
    }
    \label{fig:gaus_bmax}
\end{figure}

\subsection{Performance with large groups}

The performance of GSVB with large groups is examined. For this we fix $(n, p, s) = (1,\!000, 5,\!000, 10)$ and vary the group size to be $m=10,20,25,50,100$. The results, presented in \Cref{fig:gaus_large_groups} highlight that the performance of the method is excellent for groups of size $m=10,20,25$, and begins to suffer when groups are of size $m=50$ and $100$, particularly in settings 3 and 4 wherein GSVB--B did not run.

\begin{figure}[htp]
    \centering
    \makebox[\textwidth][c]{
	\includegraphics[width=0.51\textwidth]{./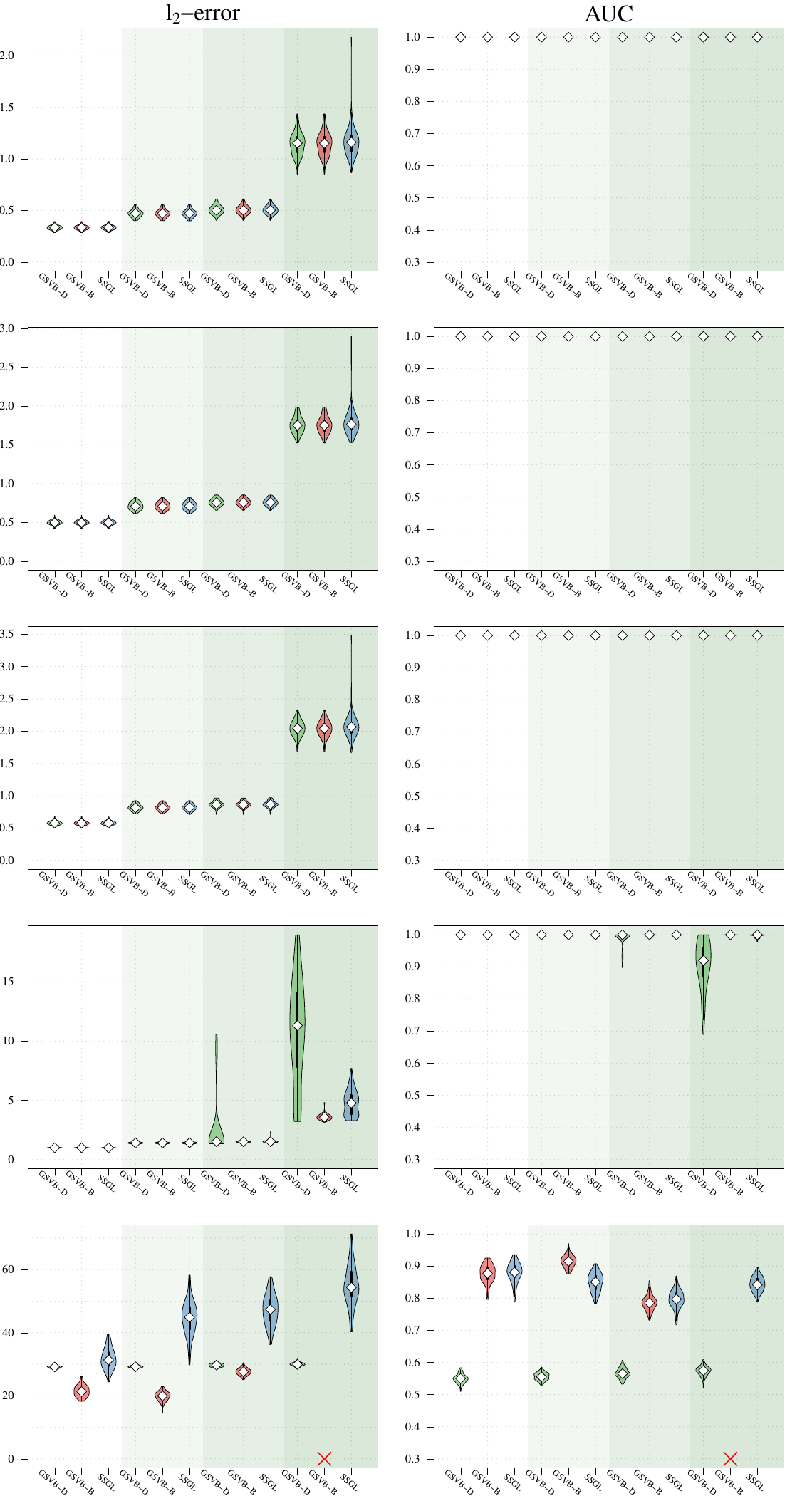}
	\includegraphics[width=0.51\textwidth]{./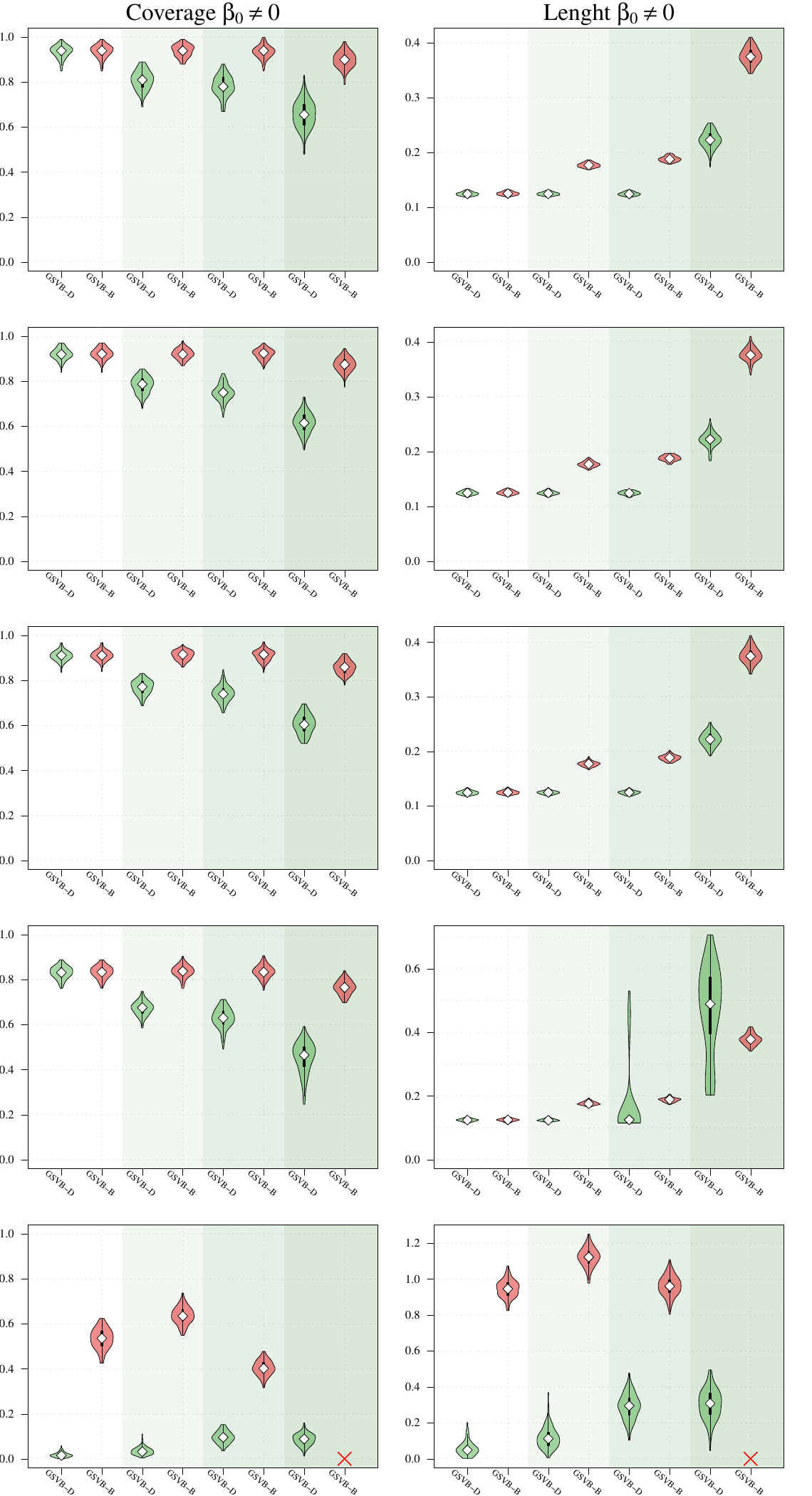}
    } %
    {\footnotesize
	\colorSquareB{setting1!15} Setting 1 \hspace{0.5em}
	\colorSquareB{setting2!15}  Setting 2 \hspace{0.5em}
	\colorSquareB{setting3!15}  Setting 3 \hspace{0.5em}
	\colorSquareB{setting4!15}  Setting 4 \hspace{0.5em}
	
	\vspace{-0.5em}
	  {\color{dnf} \large $\times$} 95+ runs did not converge \hspace{0.5em}
	\colorSquareB{gsvb-d!50} GSVB-D \hspace{0.5em}
	\colorSquareB{gsvb-b!50} GSVB-B \hspace{0.5em}
	\colorSquareB{ssgl!50} SSGL
    }
    \caption{Performance evaluation of GSVB and SSGL for Settings 1--4 with $(n, p, s)=(1\!,000, 5\!,000, 10)$ across 100 runs. For each method the white diamond (\whitediamond) indicates the median of the metric, the thick black line (\blacklinethick) the interquartile range, and the black line (\blacklinethin) 1.5 times the interquartile range.  
    \textbf{Rows 1--5}: Gaussian family with increasing group sizes of $m = 10, 20, 25, 50, 100$.
    }
    \label{fig:gaus_large_groups}
\end{figure}

\section{Proofs of asymptotic results}\label{sec:prooofs}

\subsection{A general class of model selection priors and an overview of the proof}

Our theoretical results apply to a wider class of model selection priors than the group spike and slab prior \eqref{eq:prior} underlying our variational approximation. In this section, we thus consider this more general class of model selection priors \cite{CSV15,NJG20,RS22} defined hierarchically via:
\begin{equation}
\begin{split}
s \sim \pi_M(s) & \\
S||S|=s \sim \text{Unif}_{M,s} & \\
\beta_{G_k} \sim^{ind} \begin{cases}
\Psi_{\lambda,m_k}(\beta_{G_k}), ~~ & G_k \in S,\\
\delta_0, & G_k\not\in S,
\end{cases}
\end{split}
\end{equation}
where $\pi_M$ is a prior on $\{0,1,\dots,M\}$, $\text{Unif}_{M,s}$ is the uniform distribution on subsets $S \subseteq \{1,\dots,M\}$ of size $s$ and
\begin{equation}\label{eq:Psi}
\Psi_{\lambda,m_k}(\beta_{G_k})  = \Delta_{m_k}  \lambda^{m_k} \exp \left( - \lambda \|\beta_{G_k}\|_2 \right)
\end{equation}
is a density on $\R^{m_k}$ with $\Delta_{m_k} = 2^{-m_k} \pi^{(1-m_k)/2} \Gamma((m_k+1)/2)^{-1}$. This can be concisely written as
\begin{align}\label{eq:prior_concise}
(S,\beta) \mapsto \pi_M(|S|) \frac{1}{ {M\choose |S|} } \delta_0(S^c) \prod_{G_k \in S} \Psi_{\lambda,m_k} (\beta_{G_k}).
\end{align}
Following \cite{CSV15,NJG20,RS22}, we assume as usual that
\begin{align}\label{eq:prior_condition}
A_1 M^{-A_3} \pi_M(s-1) \leq \pi_M(s) \leq A_2 M^{-A_4} \pi_{M}(s-1), \qquad s = 1,\dots,M.
\end{align}
We further recall the assumption \eqref{eq:lambda} made on the scale parameter:
\begin{equation}\label{eq:lambda}
\underline{\lambda} \leq \lambda \leq 2\bar{\lambda}, \qquad \qquad \underline{\lambda}=\frac{\|X\|}{M^{1/\mm}}, \qquad \qquad \bar{\lambda}= 3 \|X\| \sqrt{\log M}.
\end{equation}
The group spike and slab prior fits within this framework by taking $\pi_M = \text{Bin}(M,\tfrac{a_0}{a_0+b_0})$, and hence the following results immediately imply Theorems \ref{thm:contraction} and \ref{thm:dimension}. Recall the parameter set
$$\mathcal{B}_{\rho_n,s_n} =\{\beta_0 \in \R^p:\phi(S_{\beta_0}) \geq c_0, ~ |S_{\beta_0}|\leq s_n, ~ \widetilde{\phi}(\rho_n |S_{\beta_0}|) \geq c_0 \}$$
defined in \eqref{eq:Bn}.

\begin{theorem}[Contraction]\label{thm:contraction_general}
Suppose that Assumption \ref{ass:theory} holds, the prior satisfies \eqref{eq:prior_condition}-\eqref{eq:lambda} and $s_n$ satisfies $\mm \log s_n \leq K' \log M$ for some $K'>0$. Then the variational posterior $\tilde{\Pi}$ based on either the variational family $\mathcal{Q}$ in \eqref{eq:family_1} or $\mathcal{Q}'$ in \eqref{eq:family_2} satisfies, with $s_0 = |S_{\beta_0}|$,
$$\sup_{ \beta_0\in \mathcal{B}_{\rho_n,s_n} } E_{\beta_0} \tilde{\Pi} \left( \beta: \|X(\beta-\beta_0)\|_2 \geq  \frac{H_0 \rho_n^{1/2} \sqrt{s_0\log M}}{\bar{\phi}(\rho_n s_0)} \right) \to 0$$
$$\sup_{ \beta_0\in \mathcal{B}_{\rho_n,s_n} } E_{\beta_0} \widetilde{\Pi} \left( \beta: \|\beta-\beta_0\|_{2,1} \geq  \frac{H_0 \rho_n s_0 \sqrt{\log M}}{\|X\| \bar{\phi}(\rho_n s_0)^2} \right) \to 0, $$
$$\sup_{ \beta_0\in \mathcal{B}_{\rho_n,s_n} } E_{\beta_0} \widetilde{\Pi} \left( \beta: \|\beta-\beta_0\|_2 \geq  \frac{H_0\rho_n^{1/2} \sqrt{s_0\log M}}{ \|X\| \widetilde{\phi}(\rho_n s_0)^2}  \right) \to 0 , $$
for any $\rho_n \to \infty$ (arbitrarily slowly), $\mathcal{B}_{\rho_n,s_n}$ defined in \eqref{eq:Bn} and where $H_0$ depends only on the prior.
\end{theorem}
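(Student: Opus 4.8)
The plan is to prove Theorem~\ref{thm:contraction_general} in two stages: first establish the requisite prior-mass, testing and dimension bounds for the \emph{true} posterior $\Pi(\cdot|\D)$ in the group-sparse setting, adapting the arguments of \cite{CSV15} and \cite{RS22}; then transfer these to the variational posterior $\tilde\Pi$ via the evidence decomposition. The starting point for the transfer is the identity, valid for every $Q$ in the variational family,
\[
\log \Pi_\D = -\L_Q(\D) + \KL(Q \| \Pi(\cdot|\D)),
\]
which, since $\tilde\Pi = \argmin_{Q} \L_Q(\D)$, yields for any feasible competitor $Q^\ast$ the sandwich
\[
\KL(\tilde\Pi \| \Pi(\cdot|\D)) \le \KL(Q^\ast \| \Pi(\cdot|\D)) = \L_{Q^\ast}(\D) + \log\Pi_\D.
\]
The whole argument then reduces to (i) a lower bound on the evidence $\log\Pi_\D$, (ii) exhibiting a $Q^\ast\in\Q$ with $\L_{Q^\ast}(\D)$ close to $-\log\Pi_\D$ so that the right-hand side is $O(s_0\log M)$, and (iii) converting a small $\KL(\tilde\Pi\|\Pi(\cdot|\D))$ into contraction of $\tilde\Pi$.

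For the classical ingredients I would first bound $\Pi_\D$ from below by restricting the prior \eqref{eq:prior_concise} to the correct support $S_0$ and a small $\ell_2$-ball around $\beta_{0,S_0}$; here the multivariate Laplace slab \eqref{eq:Psi}, together with the scale window \eqref{eq:lambda} and the prior-mass condition \eqref{eq:prior_condition}, must be shown to charge this ball with log-probability $-O(s_0\log M)$. This is exactly where the group sizes enter, through the normalizing constants $\Delta_{m_k}$ and the volume of an $m_k$-ball, and Assumption~\ref{ass:theory} ($\mm\log\mm \le K\log M$) is what keeps each active group's contribution at order $\log M$. Next I would construct tests separating $\beta_0$ from $\{\beta:\|X(\beta-\beta_0)\|_2 \ge r_n\}$ using the Gaussian tail and orthogonal-projection arguments of \cite{CSV15}, with type-II errors controlled on models of bounded dimension via the compatibility numbers $\phi(S)$ and $\bar{\phi}(\rho_n s_0)$; the dimension control (the analogue of Theorem~\ref{thm:dimension}) follows by playing the $M^{-A_4}$ penalty in \eqref{eq:prior_condition} off against the evidence lower bound, so that models of size $\ge \rho_n s_0$ are exponentially disfavoured.

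To transfer to $\tilde\Pi$ I would take $Q^\ast\in\Q\subset\Q'$ supported on $S_0$, with inclusion weights $\gamma_k$ set to $1-1/M$ for $G_k\in S_0$ and to a small value otherwise, and Gaussian slabs centred at $\beta_{0,G_k}$ with a small isotropic variance $\sigma_n^2 I_{m_k}$. Evaluating $\L_{Q^\ast}(\D)=\varrho(\mu,\Sigma,\gamma)+\E_{Q^\ast}[-\ell(\D;\beta)]$ and choosing $\sigma_n^2$ to balance the entropy term $-\tfrac{1}{2}\log\det(2\pi\Sigma_{G_k})$ against the quadratic fit term $\tfrac{1}{2}\|X(\beta-\beta_0)\|_2^2$, one obtains $\KL(\tilde\Pi\|\Pi(\cdot|\D)) = O(s_0\log M)$ on an event of probability tending to one. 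A Donsker--Varadhan change-of-measure bound then gives, for the test $\phi_n$ above and the bad set $A_n=\{\|X(\beta-\beta_0)\|_2\ge \rho_n^{1/2} r_n\}$, a bound of the form $\tilde\Pi(A_n)\le \phi_n + \exp(-c\rho_n s_0\log M)\,[\KL(\tilde\Pi\|\Pi(\cdot|\D))+1]$, whose expectation under $P_{\beta_0}$ vanishes; the extra factor $\rho_n\to\infty$ is precisely what is needed to dominate the $O(s_0\log M)$ approximation error. Finally, intersecting with the dimension event $\{|S_\beta|\le \rho_n s_0\}$ and invoking Definition~\ref{def:unif_compat} and the sparse singular value $\widetilde\phi$ converts the prediction-loss contraction into the stated $\|\cdot\|_{2,1}$ and $\ell_2$ rates.

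I expect the genuine difficulty to lie in the variational-error step, namely bounding $\KL(Q^\ast\|\Pi(\cdot|\D))$ uniformly over $\beta_0\in\B_{\rho_n,s_n}$. The mismatch between the Gaussian slab of $\Q$ and the heavier-tailed multivariate Laplace slab \eqref{eq:Psi} forces a careful, group-wise choice of $\sigma_n^2$: small enough that the Gaussian behaves like a point mass (so the fit term is $O(s_0\log M)$), yet large enough that neither the $\log\det$ entropy nor the penalty $\lambda(\sum_{i\in G_k}\Sigma_{ii}+\mu_i^2)^{1/2}$ appearing in $\varrho$ blows up, and this balance can only be struck uniformly in $m_k$ under Assumption~\ref{ass:theory}. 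Tracking all constants so that they depend only on $c_0$, $K$ and the $A_j$ in \eqref{eq:prior_condition}, and not on the particular $\beta_0$, is what makes the suprema over $\B_{\rho_n,s_n}$ legitimate.
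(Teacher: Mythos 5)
Your overall architecture is the same as the paper's: exponential tail bounds for the full posterior (evidence lower bound as in Lemma \ref{lem:evidence_lb}, dimension bound as in Lemma \ref{lem:dimension}, contraction as in Lemma \ref{lem:contract_full}), a KL-transfer device (the paper's Lemma \ref{lem:post_to_VB}, from \cite{RS22}), and the observation that the slack $\delta_n = c\rho_n s_0\log M$ with $\rho_n\to\infty$ dominates a $\KL$ of order $s_0\log M$. Two smaller deviations: the paper, following \cite{CSV15}, does not construct tests for the full-posterior step but bounds the Bayes ratio directly on the event $\T_0$ via the Gaussian moment generating function — your test-based route is a legitimate alternative; and your displayed transfer inequality $\tilde\Pi(A_n)\le \phi_n + \exp(-c\rho_n s_0\log M)[\KL+1]$ has the dependence wrong — the correct Donsker--Varadhan-type bound \emph{divides} the KL by $\delta_n$ rather than multiplying it by $e^{-\delta_n}$ (your version would be trivially, and falsely, strong), though your subsequent interpretation matches the correct form.

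The genuine gap is in your choice of competitor $Q^\ast$: you support it on the \emph{true} model $S_0$ with Gaussian slabs centered at $\beta_{0,G_k}$. Since $Q^\ast$ then only charges the $S_0$-component of the posterior mixture \eqref{eq:post}, the KL decomposes as $\log(1/\hat{q}_{S_0})$ plus a KL against the restricted posterior $\Pi_{S_0}(\cdot|Y)$, so your plan implicitly requires $\hat{q}_{S_0}\ge e^{-O(s_0\log M)}$ with high probability, \emph{uniformly} over $\beta_0\in\mathcal{B}_{\rho_n,s_n}$. But $\mathcal{B}_{\rho_n,s_n}$ imposes no beta-min condition and no bound on $\|\beta_0\|$: when some groups carry arbitrarily small signal, the complexity penalty \eqref{eq:prior_condition} makes the posterior favour strict submodels of $S_0$ (each dropped group gains a prior factor of order $M^{A_4}$ at essentially no likelihood cost), and any attempt to lower-bound the $S_0$-contribution via the Lemma \ref{lem:evidence_lb} computation carries the uncontrolled factor $e^{-\lambda\|\beta_0\|_{2,1}}$, which would have to be cancelled by a matching \emph{upper} bound on the evidence that you have not supplied. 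The paper's Lemma \ref{lem:KL} (following Lemma B.2 of \cite{RS22}) avoids all of this: on the event $\T_1$ a pigeonhole argument produces a \emph{data-dependent} model $\tilde{S}$ with $|\tilde{S}|\le\Gamma$, $\|\beta_{0,\tilde{S}^c}\|_2\le\eps$ and, crucially, $\hat{q}_{\tilde{S}}\ge (2e)^{-1}M^{-\Gamma}$, and the Gaussian is centered at the least-squares fit $\mu_{\tilde{S}}=(X_{\tilde{S}}^TX_{\tilde{S}})^{-1}X_{\tilde{S}}^TY$ rather than at $\beta_0$, with design-matched diagonal variances $(D_{G_k})_{ii}=1/(X_{G_k}^TX_{G_k})_{ii}$ instead of your isotropic $\sigma_n^2$ (these make the trace term in the Gaussian-vs-Gaussian KL collapse to exactly $\tilde{p}$, resolving the entropy/fit balancing you flag as the hard point). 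Without either this data-dependent construction or a separate proof that $\hat{q}_{S_0}$ is polynomially-in-$M^{s_0}$ bounded below over the whole parameter class, your transfer step does not close.
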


\begin{theorem}[Dimension]\label{thm:dimension_general}
Suppose that Assumption \ref{ass:theory} holds, the prior satisfies \eqref{eq:prior_condition}-\eqref{eq:lambda} and $s_n$ satisfies $\mm \log s_n \leq K' \log M$ for some $K'>0$. Then the variational posterior $\tilde{\Pi}$ based on either the variational family $\mathcal{Q}$ in \eqref{eq:family_1} or $\mathcal{Q}'$ in \eqref{eq:family_2} satisfies
$$\sup_{ \beta_0\in \mathcal{B}_{\rho_n,s_n} } E_{\beta_0} \tilde{\Pi} \left( \beta: |S_\beta| \geq \rho_n |S_{\beta_0}|  \right) \to 0$$
for any $\rho_n \to \infty$ (arbitrarily slowly) and $\mathcal{B}_{\rho_n,s_n}$ defined in \eqref{eq:Bn}.
\end{theorem}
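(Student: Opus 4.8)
The plan is to exploit the mean-field product structure together with the defining optimality of $\tilde{\Pi}$. Under both $\mathcal{Q}$ and $\mathcal{Q}'$ the group $G_k$ is active with probability $\gamma_k$ independently, so $|S_\beta|$ is a sum of independent Bernoulli variables and $E_{\tilde{\Pi}}|S_\beta| = \sum_{k=1}^M \gamma_k$. Markov's inequality then gives $\tilde{\Pi}(|S_\beta|\ge \rho_n s_0)\le (\rho_n s_0)^{-1}\sum_{k}\gamma_k$, so it suffices to prove that, on a $P_{\beta_0}$-event of probability tending to $1$ uniformly over $\mathcal{B}_{\rho_n,s_n}$, the variational inclusion masses satisfy $\sum_k \gamma_k = O(s_0)$; letting $\rho_n\to\infty$ then closes the argument. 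This reduction is identical for the two families, since the law of $|S_\beta|$ depends on $\gamma$ alone and not on the slab covariance.

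To bound $\sum_k\gamma_k$ I would start from the ELBO inequality $\mathcal{L}_{\tilde{\Pi}}(\mathcal{D})\le \mathcal{L}_{Q}(\mathcal{D})$, valid for every $Q\in\mathcal{Q}$. Subtracting the $\beta$-independent constant $\ell(\mathcal{D};\beta_0)$ and writing the excess loss $r(\beta)=\tfrac12\|X(\beta-\beta_0)\|_2^2-\eps^\top X(\beta-\beta_0)$, this becomes
\[
\int r\, d\tilde{\Pi} + \KL(\tilde{\Pi}\|\Pi)\ \le\ \int r\, dQ + \KL(Q\|\Pi).
\]
Choosing $Q=\bar{Q}$ a slab supported on the true model $S_0$, centred near $\beta_0$ with small variances, bounds the right-hand side: the integral is negligible because $\beta\approx\beta_0$ there, while $\KL(\bar{Q}\|\Pi)\lesssim s_0\log M$ follows from the per-model prior mass implied by \eqref{eq:prior_condition}--\eqref{eq:lambda} and the slab normalisation $\Delta_{m_k}$, with Assumption \ref{ass:theory} ensuring each active group contributes only $O(\log M)$. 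Thus the left-hand side is $\lesssim s_0\log M$ on a high-probability event.

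The crux is to show that the left-hand side cannot be small unless $\sum_k\gamma_k=O(s_0)$. The negative ELBO penalises each included group through the prior's geometric decay \eqref{eq:prior_condition} (explicitly the term $\gamma_k\log(\gamma_k/\bar{w})$ with $\bar{w}\asymp M^{-1}$ in \eqref{eq:penalty}), so that, heuristically, each unit of inclusion mass costs $\sim\log M$. The opposing force is the data-fit integral $\int r\, d\tilde{\Pi}$, which rewards larger models; I would bound its negative part by a uniform control of the empirical process $\eps^\top X(\beta-\beta_0)$, obtained by peeling over the model dimension and a union bound over the $\binom{M}{s}$ active sets of size $s$, so that a model of size $s$ can improve the fit by at most $Cs\log M$. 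Balancing the $\log M$-per-group penalty against this $\log M$-per-group fit---using $\widetilde{\phi}$ and the compatibility numbers encoded in $\mathcal{B}_{\rho_n,s_n}$ to keep the quadratic part of $r$ nonnegative at the relevant scale---yields a net lower bound of order $(\sum_k\gamma_k - Cs_0)\log M$ for the left-hand side.

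Combining the upper and lower bounds gives $(\sum_k\gamma_k - Cs_0)\log M\lesssim s_0\log M$, hence $\sum_k\gamma_k = O(s_0)$, which feeds back into the Markov step of the first paragraph. The main obstacle I anticipate is the uniform control of $\sup\eps^\top X(\beta-\beta_0)$ in the grouped design: one must handle the $m_k$-dimensional continuous slabs simultaneously with the combinatorial choice of active groups, and it is exactly here that Assumption \ref{ass:theory} ($\mm\log\mm\le K\log M$) is needed, guaranteeing that each active group adds only $O(\log M)$ to the metric entropy so the bound scales as $s\log M$ rather than $s\,\mm\log\mm$. A secondary delicate point is separating the complexity gain from the data-fit improvement over the continuum of admissible $\gamma$ configurations, so that the net coefficient of $(\sum_k\gamma_k)\log M$ is strictly positive; this is ultimately what forces the variational posterior onto sparse models.
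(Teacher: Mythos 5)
Your architecture is genuinely different from the paper's: the paper never works with $\sum_k \gamma_k$ directly, but instead (i) proves an exponentially small dimension bound for the \emph{true} posterior on the event $\mathcal{T}_0$ (Lemma \ref{lem:dimension}, following Theorem 10 of \cite{CSV15}), (ii) transfers it to the variational posterior via the KL-transfer result Lemma \ref{lem:post_to_VB} with $\delta_n = c\rho_n s_0 \log M$, and (iii) verifies $\KL(\tilde{\Pi}\|\Pi(\cdot|Y)) = O(s_0\log M)$ on $\mathcal{T}_1$ by exhibiting a Gaussian element of $\mathcal{Q}$ centred at the least-squares solution on a data-dependent model $\tilde{S}$ (Lemma \ref{lem:KL}). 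Your Markov reduction to $\sum_k\gamma_k = O(s_0)$ is itself correct (under both families $|S_\beta|$ is indeed a sum of independent Bernoulli$(\gamma_k)$ variables), but the self-contained ELBO-balance argument you build on it has two genuine gaps, both at points you gesture at but do not resolve.

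First, the crux---a lower bound of the form $c\bigl(\sum_k\gamma_k - Cs_0\bigr)\log M$ with $c>0$ on $\int r\,d\tilde{\Pi} + \KL(\tilde{\Pi}\|\Pi)$---does not follow from the bounds you propose. The per-group prior cost is $\asymp A_4\log M$ under \eqref{eq:prior_condition} (or $\asymp \log M$ for $\bar{w}\asymp M^{-1}$ in \eqref{eq:penalty}), while your empirical-process control of $\eps^\top X(\beta-\beta_0)$, whether by peeling plus a union bound over $\binom{M}{s}$ models or by Cauchy--Schwarz on $\mathcal{T}_0$ followed by compatibility, allows a fit improvement of $\asymp s\log M/\phi^2$ per group with an unfavourable constant; nothing in the hypotheses makes $A_4 > C/\phi(S_0)^2$, so the net coefficient of $(\sum_k\gamma_k)\log M$ can be negative and the balance collapses. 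The paper's Lemma \ref{lem:dimension} escapes precisely this trap by paying the compatibility cost \emph{once}, at $S_0$ only (the term $8s_0\bar{\lambda}\lambda/(\|X\|^2\phi(S_0)^2)$), while retaining a residual $-\frac{\lambda}{4}\|\beta-\beta_0\|_{2,1}$ in the exponent that is integrated against the double-exponential slab of the prior, capping the gain of each extra group at $4^{m_k} = e^{O(\mm)} \ll M^{A_4}$ (this is where Assumption \ref{ass:theory} and the size conditions on $M$ enter). That cancellation exploits the exact $e^{-\lambda\|\beta\|_{2,1}}$ slab form inside the posterior integral and has no evident analogue when the data-fit term is integrated against the Gaussian slabs of $\mathcal{Q}'$, which is what your left-hand side requires.

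Second, your upper bound fails uniformly over $\mathcal{B}_{\rho_n,s_n}$: for $\bar{Q}$ supported on $S_0$ and centred near $\beta_0$, $\KL(\bar{Q}\|\Pi)$ contains the slab term $\lambda E_{\bar{Q}}\|\beta\|_{2,1} \approx \lambda\|\beta_0\|_{2,1}$, and since \eqref{eq:Bn} imposes no bound on $\|\beta_0\|_{2,1}$ (the paper explicitly advertises removing boundedness assumptions) while $\lambda$ may be as large as $6\|X\|\sqrt{\log M}$, this is not $O(s_0\log M)$. Dropping the nonnegative slab-KL terms from your left-hand side, as your lower bound does, leaves this contribution uncancelled, so you would only obtain $\sum_k\gamma_k \lesssim s_0 + \lambda\|\beta_0\|_{2,1}/\log M$. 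The paper avoids this by comparing to the \emph{posterior} rather than the prior: in Lemma \ref{lem:KL} the comparison density shares the factor $e^{-\lambda\|\beta\|_{2,1}}$ so only the difference $\lambda E\|\beta_{\tilde{S}}-\beta_{0,\tilde{S}}\|_{2,1}$ survives, and the matching $e^{-\lambda\|\beta_0\|_{2,1}}$ cancellation for the true-posterior bounds is carried out in the evidence lower bound, Lemma \ref{lem:evidence_lb}. Repairing your route along these lines essentially forces you back to the paper's transfer architecture.
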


To prove Theorems \ref{thm:contraction_general} and \ref{thm:dimension_general}, we use the following result which relates the VB probability of sets having exponentially small probability under the true posterior.

\begin{lemma}[Theorem 5 of \cite{RS22}]\label{lem:post_to_VB}
Let ${B}_n$ be a subset of the parameter space, $A_n$ be events and $Q_n$ be distributions for $\beta$. If there exists $C>0$ and $\delta_n>0$ such that
\begin{align*}
E_{\beta_0} \Pi(\beta \in {B}_n^c|Y) 1_{A_n} \leq Ce^{-\delta_n},
\end{align*}
then
$$E_{\beta_0} Q_n(\beta \in {B}_n^c) 1_{A_n} \leq \frac{2}{\delta_n} \left[ E_{\beta_0} \KL (Q_n\|\Pi(\cdot|Y))1_{A_n} + Ce^{-\delta/2} \right].$$
\end{lemma}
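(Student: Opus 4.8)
The plan is to deduce the inequality from the Donsker--Varadhan (Gibbs) variational representation of relative entropy, which states that for any two probability measures $Q \ll P$ and any bounded measurable function $f$,
$$E_Q[f] \leq \KL(Q\|P) + \log E_P[e^{f}].$$
I would apply this inequality pointwise in the data $Y$, taking $Q = Q_n$, $P = \Pi(\cdot|Y)$, and the bounded test function $f = \tfrac{\delta_n}{2}\,1_{B_n^c}$. The calibration of the exponential tilt at $\delta_n/2$ rather than at $\delta_n$ is the one genuinely delicate choice: it is exactly what later converts the hypothesis $Ce^{-\delta_n}$ into a term decaying like $Ce^{-\delta_n/2}$ while leaving a tractable prefactor $2/\delta_n$. (If $\KL(Q_n\|\Pi(\cdot|Y))=\infty$ the claimed bound is trivial, so one may assume absolute continuity throughout.)

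With this choice the left-hand side equals $\tfrac{\delta_n}{2}\,Q_n(\beta \in B_n^c)$, while the exponential moment on the right is
$$E_{\Pi(\cdot|Y)}[e^{f}] = \Pi(\beta \in B_n | Y) + e^{\delta_n/2}\,\Pi(\beta \in B_n^c | Y) \leq 1 + e^{\delta_n/2}\,\Pi(\beta \in B_n^c | Y).$$
Applying $\log(1+x)\le x$ then gives $\log E_{\Pi(\cdot|Y)}[e^{f}] \le e^{\delta_n/2}\,\Pi(\beta \in B_n^c | Y)$, so that pointwise in $Y$,
$$\tfrac{\delta_n}{2}\,Q_n(\beta \in B_n^c) \leq \KL(Q_n\|\Pi(\cdot|Y)) + e^{\delta_n/2}\,\Pi(\beta \in B_n^c | Y).$$

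To finish, I would multiply this inequality by the data-dependent indicator $1_{A_n}$, take the outer expectation $E_{\beta_0}$ over $Y \sim P_{\beta_0}$, and invoke the hypothesis $E_{\beta_0}\,\Pi(\beta\in B_n^c|Y)\,1_{A_n} \leq Ce^{-\delta_n}$. Since $e^{\delta_n/2}\cdot Ce^{-\delta_n} = Ce^{-\delta_n/2}$, this yields
$$\tfrac{\delta_n}{2}\,E_{\beta_0}\,Q_n(\beta \in B_n^c)\,1_{A_n} \leq E_{\beta_0}\,\KL(Q_n\|\Pi(\cdot|Y))\,1_{A_n} + Ce^{-\delta_n/2},$$
and multiplying through by $2/\delta_n$ gives precisely the stated bound.

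The argument is short and carries no serious analytic difficulty; the only points needing care are routine. One must justify that the variational inequality may be invoked $Y$-by-$Y$, treating both $Q_n$ and $\Pi(\cdot|Y)$ as fixed measures before integrating over the data, and check the measurability of $Y \mapsto Q_n(\beta\in B_n^c)$ and $Y \mapsto \KL(Q_n\|\Pi(\cdot|Y))$ so that the outer expectations are well defined. Because the test function is bounded, no integrability obstruction arises. Thus the substance of the proof is entirely the choice of tilt at $\delta_n/2$, with everything else being bookkeeping.
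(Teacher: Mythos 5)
Your proof is correct. The paper does not prove this lemma itself---it is imported as Theorem 5 of \cite{RS22}---and the argument there is essentially the one you give: apply the Donsker--Varadhan (Gibbs) inequality pointwise in $Y$ with the tilt $f=\tfrac{\delta_n}{2}1_{B_n^c}$, bound $\log E_{\Pi(\cdot|Y)}[e^{f}]\le e^{\delta_n/2}\,\Pi(B_n^c|Y)$ via $\log(1+x)\le x$, then multiply by $1_{A_n}$ and integrate; your calibration at $\delta_n/2$ reproduces the stated bound exactly, including the $Ce^{-\delta_n/2}$ inside the bracket (the $Ce^{-\delta/2}$ in the statement is a typo for $Ce^{-\delta_n/2}$).
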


{\color{red}Applying Lemma \ref{lem:post_to_VB} with $Q_n = \tilde{\Pi}$ the VB posterior,} the proof thus reduces to finding events $A_n$ with $P_{\beta_0}(A_n) \to 1$ on which:
\begin{enumerate}
\item The true posterior places only exponentially small probability outside $B_n$, that is $\Pi (B_n^c|Y) \leq Ce^{-\delta_n}$ for some rate $\delta_n \to \infty$,
\item The $\KL$-divergence between the VB posterior $\tilde{\Pi}$ and the full posterior is $o(\delta_n)$.
\end{enumerate}
In our setting, we shall take $\delta_n = C s_0 \log M$. Full posterior results are dealt with in Section \ref{sec:contraction_proofs}, the $\KL$-divergence in Section \ref{sec:KL_proofs} and the proofs of Theorems \ref{thm:contraction_general} and \ref{thm:dimension_general} are completed in Section \ref{sec:proof_completion}.

\subsection{Asymptotic theory for the full posterior}\label{sec:contraction_proofs}

We now establish contraction rates for the full computationally expensive posterior distribution, keeping track of the exponential tail bounds needed to apply Lemma \ref{lem:post_to_VB}. While the proofs in this section largely follow those in Castillo et al. \cite{CSV15}, the precise arguments adapting these results to the group sparse setting are rather technical and hence we provide them for convenience.  We first establish a Gaussian tail bound in terms of the group structure.
\begin{lemma}\label{lem:T0}
The event
\begin{equation}\label{eq:T0}
\T_0 = \left\{\max_{k=1,\dots,M} \|X_{G_k}^T(Y-X\beta_0)\|_2 \leq 3\|X\| \sqrt{\log M} \right\}
\end{equation}
satisfies $\sup_{\beta_0 \in \R^p} P_{\beta_0}(\T_0^c) \leq 2/M$.
\end{lemma}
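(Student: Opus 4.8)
The plan is to exploit that under the model \eqref{eq:model_lin} the residual at the truth is exactly the noise, $Y - X\beta_0 = \eps \sim N_n(0,I_n)$, whose law does not depend on $\beta_0$; this delivers the uniformity over $\beta_0 \in \R^p$ for free, so it suffices to bound the probability of $\T_0^c$ for the single standard Gaussian vector $\eps$. Fixing a group $G_k$, the object of interest is the Euclidean norm of the Gaussian vector $X_{G_k}^T\eps \in \R^{m_k}$, and the whole argument reduces to a tail bound for $\|X_{G_k}^T\eps\|_2$ combined with a union bound over $k=1,\dots,M$.

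First I would observe that the map $g_k(\eps) := \|X_{G_k}^T\eps\|_2$ is Lipschitz on $\R^n$ with constant at most $\|X\|$: indeed $|g_k(\eps)-g_k(\eta)| \le \|X_{G_k}^T(\eps-\eta)\|_2 \le \|X_{G_k}\|_{\mathrm{op}}\|\eps-\eta\|_2$, and the spectral norm is dominated by the Frobenius norm, $\|X_{G_k}\|_{\mathrm{op}} \le \|X_{G_k}\|_F \le \max_j \|X_{G_j}\|_F = \|X\|$. Next I would control the mean by Jensen's inequality and a trace computation, $\E g_k(\eps) \le (\E\|X_{G_k}^T\eps\|_2^2)^{1/2} = (\Tr(X_{G_k}^TX_{G_k}))^{1/2} = \|X_{G_k}\|_F \le \|X\|$. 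Both the fluctuations and the centering are thus governed by the single scale $\|X\|$, which is precisely why the group matrix norm is defined as the maximal groupwise Frobenius norm.

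With these two ingredients I would invoke dimension-free Gaussian concentration (the Borell--TIS inequality) for the Lipschitz functional $g_k$ of the standard Gaussian vector $\eps$, giving $P_{\beta_0}(g_k(\eps) \ge \|X\| + t) \le e^{-t^2/(2\|X\|^2)}$ for every $t>0$. Choosing $t = \|X\|(3\sqrt{\log M}-1)$ matches the threshold $3\|X\|\sqrt{\log M}$ in \eqref{eq:T0}, and a short computation shows $(3\sqrt{\log M}-1)^2/2 \ge 2\log M - \log 2$ for all $M\ge 2$ (setting $u=\sqrt{\log M}$, the quadratic $5u^2-6u+1+2\log 2$ has negative discriminant), so each groupwise tail is at most $2M^{-2}$. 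A union bound over the $M$ groups then yields $P_{\beta_0}(\T_0^c) \le M\cdot 2M^{-2} = 2/M$, uniformly in $\beta_0$.

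The step requiring the most care is the tail bound for $\|X_{G_k}^T\eps\|_2$: unlike the coordinate-sparse setting of \cite{CSV15}, where $X_{G_k}^T\eps$ is scalar and a one-dimensional Gaussian tail suffices, here it is a vector whose dimension $m_k$ may be large, so a naive coordinatewise argument would lose a factor growing with the group size. The resolution is to treat $\|X_{G_k}^T\eps\|_2$ as a Lipschitz functional and apply dimension-free concentration, which avoids any dependence on $m_k$ and hence does not require Assumption \ref{ass:theory}; an equivalent route would be a Laurent--Massart tail bound for the weighted chi-square $\|X_{G_k}^T\eps\|_2^2 = \sum_j \lambda_j Z_j^2$, using $\sum_j \lambda_j = \|X_{G_k}\|_F^2 \le \|X\|^2$ and $\max_j \lambda_j \le \|X\|^2$.
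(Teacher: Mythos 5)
Your proof is correct and follows essentially the same route as the paper's: both reduce to $\eps \sim N_n(0,I_n)$ (which gives the uniformity over $\beta_0$ for free), apply a union bound over the $M$ groups, and invoke a dimension-free concentration inequality for $\|X_{G_k}^T\eps\|_2$ around its mean, controlled by $\|X\|$ via the same Jensen/trace computation $(E\|X_{G_k}^T\eps\|_2)^2 \leq \Tr(X_{G_k}^TX_{G_k}) \leq \|X\|^2$. The only difference is cosmetic: you use Borell--TIS with the operator-norm Lipschitz constant and a slightly sharper threshold split $t = \|X\|(3\sqrt{\log M}-1)$, whereas the paper cites Corollary 3 of \cite{PS86}, giving $P(\|W\|_2 - E\|W\|_2 \geq x) \leq 2\exp(-x^2/(2E\|W\|_2^2))$, and takes $x = 2\|X\|\sqrt{\log M}$ -- interchangeable tools yielding the identical bound $2/M$.
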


\begin{proof}
Since $Y-X\beta_0 = \eps \sim N_n(0,I_n)$ under $P_{\beta_0}$, applying a union bound over the group structure yields $P_{\beta_0} (\max_k \|X_{G_k}^T (Y-X\beta_0)\|_2 > t) \leq \sum_{k=1}^M P_{\beta_0} ( \|X_{G_k}^T \eps\|_2 >t)$. Recall that for a multivariate normal $W \sim N_m (0,\Sigma)$, we have $P(\|W\|_2 - E\|W\|_2 \geq x) \leq 2 \exp(-x^2 /(2E\|W\|_2^2))$ by Corollary 3 of \cite{PS86}. Since $X_{G_k}^T \eps \sim N_{m_k} (0,X_{G_k}^T X_{G_k})$, this implies
$$(E\|X_{G_k}^T \eps\|_2)^2 \leq E\|X_{G_k}^T \eps\|_2^2 = \Tr(X_{G_k}^T X_{G_k} )  \leq  \|X\|^2,$$
and hence $P( \|X_{G_k}^T \eps\|_2 \geq  \|X\| +x) \leq 2 \exp(-x^2/(2 \|X\|^2))$ for any $x>0$. Substituting this bound with $x= 2 \|X\| \sqrt{\log M}$ into the above union bound thus yields
$$P_{\beta_0} (\max_k \|X_{G_k}^T (Y-X\beta_0)\|_\infty > 3 \|X\| \sqrt{\log M}) \leq 2M e^{-2\log M} = 2M^{-1}.$$
\end{proof}

Let $\ell_n(\beta)$ denote the log-likelihood of the $N_n(X\beta,I_n)$-distribution. For any $\beta \in \R^p$, we have log-likelihood ratio
\begin{equation}\label{eq:ll}
\Lambda_{\beta}(Y) = e^{\ell_n(\beta)-\ell_n(\beta_0)} = e^{-\frac{1}{2}\|X(\beta-\beta_0)\|_2^2 + (Y-X\beta_0)^TX(\beta-\beta_0)}.
\end{equation}
The next result establishes an almost sure lower bound on the denominator of the Bayes formula. It follows Lemma 2 of \cite{CSV15}, but must be adapted to account for the uneven prior normalizing factors coming from the group structure.

\begin{lemma}\label{lem:evidence_lb}
Suppose Assumption \ref{ass:theory} holds and that $\beta_0 \in \R^p$ satisfies $\mm \log s_0 \leq K' \log M$ for $K'>0$. Then it holds that with $P_{\beta_0}$-probability one,
\begin{align*}
\int \Lambda_\beta(Y) d\Pi(\beta) &\geq C \pi_m(s_0)  s_0!  e^{-\lambda \|\beta_0\|_{2,1}} e^{-c s_0 \log M},
\end{align*}
where $C,c>0$ depend only on $K,K'$.
\end{lemma}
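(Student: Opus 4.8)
The plan is to bound the evidence from below by restricting the prior to the single event that the selected model equals the true support $S_0$ and then integrating only over the slab coordinates of that model. Writing $d_0 = \sum_{G_k \in S_0} m_k$ and using the concise prior form \eqref{eq:prior_concise}, restricting to $S=S_0$ and substituting $\beta_{S_0} = \beta_{0,S_0} + u$ (with $\beta_{S_0^c}=0$) gives, since $X\beta - X\beta_0 = X_{S_0}u$ and $Y - X\beta_0 = \eps$,
\begin{equation*}
\int \Lambda_\beta(Y)\, d\Pi(\beta) \geq \frac{\pi_M(s_0)}{\binom{M}{s_0}} \int_{\R^{d_0}} e^{-\frac12 \|X_{S_0}u\|_2^2 + \eps^T X_{S_0} u} \prod_{G_k \in S_0} \Delta_{m_k}\lambda^{m_k} e^{-\lambda \|\beta_{0,G_k} + u_{G_k}\|_2}\, du .
\end{equation*}
The triangle inequality $\|\beta_{0,G_k} + u_{G_k}\|_2 \leq \|\beta_{0,G_k}\|_2 + \|u_{G_k}\|_2$ factors out the term $e^{-\lambda\|\beta_0\|_{2,1}}$ in the statement, leaving an integral of $e^{-\frac12\|X_{S_0}u\|_2^2 + \eps^T X_{S_0}u}$ against the product slab density in $u$.

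The second step removes the noise $\eps$ to obtain an \emph{almost sure} bound. As the remaining integrand apart from $e^{\eps^T X_{S_0} u}$ is even in $u$, substituting $u\mapsto -u$ shows the integral equals its reflection, hence its own average, and $\tfrac12(e^x+e^{-x})=\cosh(x)\geq 1$ yields
\begin{equation*}
\int_{\R^{d_0}} e^{-\frac12\|X_{S_0}u\|_2^2 + \eps^T X_{S_0}u}\, d\Pi_{\mathrm{sl}}(u) \geq \int_{\R^{d_0}} e^{-\frac12\|X_{S_0}u\|_2^2}\, d\Pi_{\mathrm{sl}}(u),
\end{equation*}
where $\Pi_{\mathrm{sl}}$ is the product slab law with independent groups $u_{G_k}\sim\Psi_{\lambda,m_k}$. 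This holds for every realization of $\eps$, delivering the required $P_{\beta_0}$-almost sure statement.

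It then remains to bound the deterministic integral below by $e^{-c s_0\log M}$. I would restrict to the $\ell_{2,1}$-ball $B=\{u:\|u\|_{2,1}\leq\rho\}$ with $\rho=\sqrt{s_0}/\|X\|$. On $B$ we have $\|X_{S_0}u\|_2 \leq \sum_{G_k\in S_0}\|X_{G_k}\|_F\|u_{G_k}\|_2 \leq \|X\|\,\|u\|_{2,1}\leq\sqrt{s_0}$, so $e^{-\frac12\|X_{S_0}u\|_2^2}\geq e^{-s_0/2}$ and the integral is at least $e^{-s_0/2}\Pi_{\mathrm{sl}}(B)$. The crux is the small-ball estimate $\Pi_{\mathrm{sl}}(B)\geq e^{-c's_0\log M}$. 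Here I would use that $\lambda\|u_{G_k}\|_2\sim\mathrm{Gamma}(m_k,1)$, so that for any $t_k>0$,
\begin{equation*}
\Pi_{\mathrm{sl}}(\|u_{G_k}\|_2\leq t_k) \geq \frac{(\lambda t_k)^{m_k}}{\Gamma(m_k+1)}\, e^{-\lambda t_k},
\end{equation*}
take the equal allocation $t_k=\rho/s_0$, and sum the logarithms across the independent groups. Since $\lambda\rho\asymp\sqrt{s_0\log M}$ and $\lambda\rho/s_0\asymp\sqrt{(\log M)/s_0}$, the three contributions are controlled by $-\lambda\rho\gtrsim -s_0\log M$; by $\sum_{G_k\in S_0} m_k\log(\lambda\rho/s_0)\geq -Cd_0(1+\log s_0)\gtrsim -s_0\log M$, using $d_0\leq s_0\mm$ together with $\mm\log s_0\leq K'\log M$ and $\mm\lesssim\log M$; and by $-\sum_{G_k\in S_0}\log\Gamma(m_k+1)\geq -s_0\log(\mm!)\geq -s_0\,\mm\log\mm\gtrsim -s_0\log M$ via Assumption \ref{ass:theory}. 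This is the main obstacle: unlike the ungrouped case of \cite{CSV15}, the group sizes enter through $d_0$, $\log(\mm!)$ and $\log(\lambda\rho/s_0)$, and it is precisely the two size–sparsity conditions $\mm\log\mm\leq K\log M$ and $\mm\log s_0\leq K'\log M$ that keep each term $O(s_0\log M)$.

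Finally I would assemble the pieces. Combining the above with the elementary bound $\binom{M}{s_0}^{-1}\geq s_0!\,M^{-s_0}=s_0!\,e^{-s_0\log M}$ gives
\begin{equation*}
\int \Lambda_\beta(Y)\, d\Pi(\beta) \geq \pi_M(s_0)\, s_0!\, e^{-\lambda\|\beta_0\|_{2,1}}\, e^{-s_0\log M}\, e^{-s_0/2}\, e^{-c's_0\log M},
\end{equation*}
and absorbing the $e^{-s_0\log M}$, $e^{-s_0/2}$ and $e^{-c's_0\log M}$ factors into a single $e^{-cs_0\log M}$ (with $C,c$ depending only on $K,K'$) yields the claimed bound.
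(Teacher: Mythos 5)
Your proposal is correct, and it shares the paper's overall skeleton (both adapt Lemma~2 of \cite{CSV15}): restrict the prior to the single model $S=S_0$, pull out $e^{-\lambda\|\beta_0\|_{2,1}}$ by the triangle inequality, remove the noise term by symmetry of the slab --- your reflection-plus-$\cosh$ bound and the paper's Jensen argument under the symmetrized measure $\bar{\mu}$ are interchangeable --- and finish with a deterministic small-ball estimate. Where you genuinely diverge is at the crux, the small-ball bound. The paper compares the multivariate double exponential slab to a product of univariate Laplace densities via $\|\cdot\|_2 \leq \|\cdot\|_1$, applies the $\ell_1$ small-ball inequality (6.2) of \cite{CSV15} on the set $\{\|X\|\|b_{S_0}\|_1 \leq 1\}$, and then must control the normalizing-constant ratio $\prod_{G_k \in S_0}\Delta_{m_k}2^{m_k}$ by Stirling's approximation --- precisely the step the paper flags as its deviation from \cite{CSV15}. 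You instead use the exact radial law of the slab \eqref{eq:Psi}, namely $\lambda\|u_{G_k}\|_2 \sim \mathrm{Gamma}(m_k,1)$; this identification is correct (the Legendre duplication formula shows $\Delta_{m_k}$ is exactly the constant making the radial density $\mathrm{Gamma}(m_k,\lambda)$), and it yields the per-group bound $(\lambda t_k)^{m_k}e^{-\lambda t_k}/\Gamma(m_k+1)$ with no Stirling analysis and no normalizing-constant bookkeeping. The hypotheses then enter through $\log\Gamma(m_k+1) \leq \mm\log\mm \leq K\log M$ and $d_0 \log s_0 \leq s_0\mm\log s_0 \leq K' s_0 \log M$, playing the same role as the paper's bound $\mm^{-p_0/2}/p_0! \geq e^{-2\mm s_0 \log(\mm s_0)}$; your route is arguably cleaner and more self-contained, while the paper's stays closer to existing machinery. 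One slip to fix: your intermediate inequality $\sum_{G_k\in S_0} m_k\log(\lambda\rho/s_0) \geq -Cd_0(1+\log s_0)$ fails when $\lambda$ sits at the lower limit $\underline{\lambda} = \|X\|M^{-1/\mm}$ permitted by \eqref{eq:lambda}, since then $\log(\lambda\rho/s_0) \geq -\frac{\log M}{\mm} - \frac{1}{2}\log s_0$ carries a term $-\frac{\log M}{\mm}$ that is not controlled by $1+\log s_0$. The conclusion is unaffected, however, because $d_0 \cdot \frac{\log M}{\mm} \leq s_0 \mm \cdot \frac{\log M}{\mm} = s_0\log M$, so the correct chain gives $\sum_{G_k\in S_0} m_k\log(\lambda\rho/s_0) \geq -\bigl(1+\tfrac{K'}{2}\bigr)s_0\log M$ from exactly the ingredients you cite.
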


The mild condition $\mm \log s_0 \leq K' \log M$, which will be assumed throughout, relates the true sparsity with the maximal group size. As in Assumption \ref{ass:theory}, the constant $K'$ is used to provide uniformity, but this can be ignored at first reading.

\begin{proof}
The bound trivially holds true for $s_0=0$, hence we assume $s_0\geq 1$. Write $p_0:= \sum_{G_k \in S_0} m_k$ to be maximal number of non-zero coefficients in $\beta_0$. In an abuse of notation, we shall sometimes interchangeably use $\beta_{S_0}$ for both the vector in $\R^{p_0}$ and the vector in $\R^p$ with the entries in $S_0^c$ set to zero. Using the form \eqref{eq:prior_concise} of the prior,
\begin{align*}
\int \Lambda_\beta(Y) d\Pi(\beta) &\geq \frac{\pi_M(s_0)}{{M \choose s_0} } \int_{\R^{p_0}} \Lambda_\beta(Y) \prod_{G_k \in S_0} \Psi_{\lambda,m_k}(\beta_{G_k}) d\beta_{G_k} .
\end{align*}
By the change of variable $b_{S_0} = \beta_{S_0}-\beta_{0,S_0}$ and the form of the log-likelihood \eqref{eq:ll}, the last display is lower bounded by
\begin{align*}
 \frac{\pi_M(s_0)}{{M \choose s_0} } e^{-\lambda \|\beta_0\|_{2,1}} \int_{\R^{p_0}} e^{-\frac{1}{2}\|X_{S_0} b_{S_0}\|_2^2 + (Y-X\beta_0)^T X_{S_0} b_{S_0} } \prod_{G_k \in S_0} \Psi_{\lambda,m_k}(b_{G_k}) db_{G_k} .
\end{align*}
Define the measure $\mu$ on $\R^{p_0}$ by $d\mu(b_{S_0}) = e^{-\frac{1}{2}\|X_{S_0}b_{S_0}\|_2^2} \prod_{G_k \in S_0} \Psi_{\lambda,m_k}(b_{G_k}) db_{G_k}$. Let $\bar{\mu} = \mu/\mu(\R^{p_0})$ denote the normalized probability measure with corresponding expectation $E_{\bar{\mu}}$. Defining $Z(b_{S_0}) = (Y-X\beta_0)^T X_{S_0} b_{S_0}$, Jensen's inequality implies $E_{\bar{\mu}}e^Z \geq e^{E_{\bar{\mu}}Z} = 1$, since $E_{\bar{\mu}}Z=0$ as $\bar{\mu}$ is a symmetric probability distribution about zero. Thus the last display is lower bounded by $ \pi_m(s_0)  e^{-\lambda \|\beta_0\|_{2,1}}\mu(\R^{p_0})/{M \choose s_0}$, which equals
\begin{align}\label{eq:lb1}
\frac{\pi_M(s_0)}{{M \choose s_0} } e^{-\lambda \|\beta_0\|_{2,1}} \int_{\R^{p_0}} e^{-\frac{1}{2}\|X_{S_0} b_{S_0}\|_2^2  } \prod_{G_k \in S_0} \Psi_{\lambda,m_k}(b_{G_k}) db_{G_k} .
\end{align}
Using the group structure, $\|X b\|_2 = \| \sum_{k=1}^M X_{\cdot G_k} b_{G_k}\|_2 \leq \sum_{k=1}^M \|X_{\cdot G_k}\|_2 \|b_{G_k}\|_2 \leq \|X\| \|b\|_{1}$ since $\|b\|_2 \leq \|b\|_1$. Using the form \eqref{eq:Psi} of the density $\Psi_{\lambda,m_k}$ and recalling that
$$\int_{\|\beta_S\|_1\leq r} (\lambda/2)^{|S|} e^{-\lambda \|\beta_S\|_1} d\beta_S \geq e^{-\lambda r} (\lambda r)^{|S|}/|S|!$$
by (6.2) in \cite{CSV15}, the integral in the last display is bounded below by
\begin{equation}\label{eq:integral}
\begin{split}
& e^{-1/2} \int_{\|X\| \|b_{S_0}\|_1 \leq 1} \prod_{G_k \in S_0} \Delta_{m_k} \lambda^{m_k} e^{-\lambda \|\beta_{G_k}\|_1} d\beta_{G_k} \\
& \qquad \geq e^{-1/2} e^{-\lambda /\|X\|} \frac{ \lambda ^{p_0}}{\|X\|^{p_0} p_0! } \prod_{G_k \in S_0} \Delta_{m_k} 2^{m_k}.
\end{split}
\end{equation}
Deviating from \cite{CSV15}, we must now take careful account of the normalizing constants $\Delta_{m_k}$.

Recall the form of the normalized constants $\Delta_{m_k} = 2^{-m_k} \pi^{(1-m_k)/2} \Gamma((m_k+1)/2)^{-1}$.
The non-asymptotic upper bound in Stirling's approximation for the Gamma function gives for $z\geq 2$:
$\Gamma(z) \leq \sqrt{2\pi(z-1)} \left(\frac{z-1}{e} \right)^{z-1} e^{\frac{1}{12(z-1)}}.$
Taking $z = (m_k+1)/2 \geq 2$ for $m_k\geq 3$, 
\begin{align}\label{eq:Gamma_lb}
\Gamma((m_k+1)/2) \leq 2 \sqrt{\pi(m_k-1)} \left( \frac{m_k-1}{2e}\right)^{(m_k-1)/2} \leq  2e^{-1/2} \sqrt{\pi} m_k^{m_k/2} \frac{1}{(2e)^{(m_k-1)/2}},
\end{align}
where we have used that $(\frac{m_k-1}{m_k})^{m_k/2} \leq \lim_{x\to\infty}  (1-\frac{1}{x})^{x/2} =  e^{-1/2} $ since the function in the limit is strictly increasing on $(1,\infty)$. One can directly verify that the upper bound \eqref{eq:Gamma_lb} also holds for $m_k=1,2$. Using \eqref{eq:Gamma_lb},
\begin{align*}
\prod_{G_k \in S_0} \Delta_{m_k} 2^{m_k} = \prod_{G_k \in S_0} \frac{\pi^{(1-m_k)/2} }{\Gamma((m_k+1)/2)} & \geq \frac{1}{(\sqrt{2}e^{-1})^{s_0} (2\pi e)^{p_0/2}} \prod_{G_k \in S_0} m_k^{-m_k/2} 
& \geq c^{p_0} \ \mm^{-p_0/2}
\end{align*}
for some universal constant $c>0$. Using this last display, we lower bound \eqref{eq:integral} by a constant multiple of
\begin{align*}
e^{-\lambda /\|X\|} \left(\frac{ \lambda}{\|X\|}\right)^{p_0} \frac{1}{p_0!} c^{p_0} \mm^{-p_0/2}.
\end{align*}
Using \eqref{eq:lambda}, if $\lambda /\|X\| \leq 1/2$, then $e^{-\lambda /\|X\|} (\lambda /\|X\|)^{p_0} \geq e^{-1/2} M^{-p_0/\mm} \geq e^{-1/2} e^{-s_0 \log M}$, while if $\lambda /\|X\| \geq 1/2$, then $e^{-\lambda /\|X\|} (\lambda /\|X\|)^{p_0} \geq e^{-6 \sqrt{\log M}} 2^{-p_0} \geq e^{-C(K) s_0\log M}$ 
since $\mm \leq K\log M$ by assumption. Since also $\mm^{-p_0/2}/p_0! \geq e^{-2p_0 \log p_0} \geq e^{-2\mm s_0 \log (\mm s_0)}$, the last display is lower bounded by $e^{-Cs_0 \log M}$ under the lemma's hypotheses. The result then follows by substituting this lower bound for the integral in \eqref{eq:lb1} and using that ${M\choose s_0} \leq M^{s_0}/s_0! = e^{s_0\log M}/s_0!$.
\end{proof}

The next result follows Theorem 10 of \cite{CSV15}.

\begin{lemma}[Dimension]\label{lem:dimension}
Suppose that Assumption \ref{ass:theory} holds and the prior satisfies \eqref{eq:prior_condition} and \eqref{eq:lambda}. Further assume $M>0$ is large enough that $\log M\geq \max \{ \frac{4\mm \log 4}{A_2}, \frac{4\log A_2}{A_2}\}$ and $M \geq (4^{\mm+1/2} A_2)^{1/A_4}$. Then for any $\beta_0 \in \R^p$ such that $\mm \log s_0 \leq K' \log M$ and any $L>0$,
\begin{align*}
& E_{\beta_0} \Pi \left( \left. \beta: |S_\beta| \geq  (L+1)s_0 \right| Y \right)1_{\T_0} \\
& \qquad \quad \leq C(K,K') \exp \left\{ \left( c(K,K',A_2) + \frac{144}{\phi(S_0)^2} -\frac{LA_4}{2} \right) s_0 \log M\right\},
\end{align*}
where $s_0 = |S_{\beta_0}|$ and $\T_0$ is the event \eqref{eq:T0}.
\end{lemma}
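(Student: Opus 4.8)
The plan is to follow the proof of Theorem~10 in \cite{CSV15}, adapting every step to the group structure and carefully tracking the group normalizing constants $\Delta_{m_k}$. Write $R=(L+1)s_0$ and let $p_S=\sum_{G_k\in S}m_k$ denote the number of coefficients in model $S$. Using the concise form \eqref{eq:prior_concise} of the prior, Bayes' formula expresses the posterior mass as the ratio
$$\Pi(|S_\beta|\ge R\mid Y)=\frac{\sum_{s\ge R}\frac{\pi_M(s)}{\binom{M}{s}}\sum_{|S|=s}N_S}{\int\Lambda_\beta(Y)\,d\Pi(\beta)},\qquad N_S:=\int_{\R^{p_S}}\Lambda_\beta(Y)\prod_{G_k\in S}\Psi_{\lambda,m_k}(\beta_{G_k})\,d\beta_{G_k},$$
with $\Lambda_\beta$ the likelihood ratio \eqref{eq:ll}. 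For the denominator I would invoke Lemma~\ref{lem:evidence_lb} directly, giving the $P_{\beta_0}$-almost sure lower bound $C\pi_M(s_0)s_0!\,e^{-\lambda\|\beta_0\|_{2,1}}e^{-cs_0\log M}$. The whole problem then reduces to an upper bound on the numerator valid on the event $\T_0$.

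For each $S$ I would bound $N_S$ on $\T_0$. By \eqref{eq:ll} and the group tail bound defining $\T_0$, the stochastic term obeys $(Y-X\beta_0)^TX(\beta-\beta_0)=\sum_k(X_{G_k}^T\eps)^T(\beta-\beta_0)_{G_k}\le\bar\lambda\|\beta-\beta_0\|_{2,1}$ by a groupwise Cauchy--Schwarz, so that $\log\Lambda_\beta(Y)\le\bar\lambda\|\beta-\beta_0\|_{2,1}-\tfrac12\|X(\beta-\beta_0)\|_2^2$. Adding the exponential prior factor $-\lambda\|\beta_S\|_{2,1}$ from $\prod\Psi_{\lambda,m_k}$, the exponent becomes $\bar\lambda\|u\|_{2,1}-\tfrac12\|Xu\|_2^2-\lambda\|\beta_S\|_{2,1}$ with $u=\beta-\beta_0$. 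Splitting $u$ into its restriction to $S_0$ and to $S_0^c$ and arguing by the usual dichotomy of the compatibility cone (whether or not $\|u_{S_0^c}\|_{2,1}\le 7\|u_{S_0}\|_{2,1}$), the part of $\bar\lambda\|u\|_{2,1}$ aligned with the true groups is controlled through $\phi(S_0)$, since $\|u_{S_0}\|_{2,1}\le|S_0|^{1/2}\|Xu\|_2/(\phi(S_0)\|X\|)$; a Young-type inequality against half of $\tfrac12\|Xu\|_2^2$ then yields the additive term $\tfrac{144}{\phi(S_0)^2}s_0\log M$ (recalling $\bar\lambda=3\|X\|\sqrt{\log M}$ and the cone constant $7$). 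The remaining linear terms are dominated by $-\lambda\|\beta_S\|_{2,1}$ together with the other half of the quadratic term, and the leftover Gaussian-type integral contributes only constants and the factors $\prod_{G_k\in S}\Delta_{m_k}\lambda^{m_k}$. Exactly as in \eqref{eq:Gamma_lb}, Assumption~\ref{ass:theory} and the hypothesis $\mm\log s_0\le K'\log M$ let me absorb these normalization and group-size contributions into $e^{c(K,K',A_2)s_0\log M}$.

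It remains to sum over models. There are $\binom{M}{s}\le M^s/s!$ models of size $s$, and iterating \eqref{eq:prior_condition} gives $\pi_M(s)\le A_2^{s-s_0}M^{-A_4(s-s_0)}\pi_M(s_0)$. Dividing the numerator by the evidence lower bound, the factors $\pi_M(s_0)$, $s_0!$ and $e^{-\lambda\|\beta_0\|_{2,1}}$ cancel, and the residual $s$-dependence is geometric in $M^{-A_4}$, convergent once $M$ is large enough (as ensured by the stated lower bounds on $\log M$ and $M$). Summing over $s\ge(L+1)s_0$ produces a leading factor $M^{-A_4Ls_0}$; spending half of it to make the combinatorial and geometric sums summable leaves the exponent $\left(c(K,K',A_2)+\tfrac{144}{\phi(S_0)^2}-\tfrac{LA_4}{2}\right)s_0\log M$. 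Since every estimate above is deterministic on $\T_0$, taking $E_{\beta_0}[\,\cdot\,1_{\T_0}]$ is immediate and gives the claim.

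The main obstacle is the per-model bound on $N_S$. Unlike the ungrouped case there is no guarantee that $\lambda\ge\bar\lambda$ — the scale only satisfies $\underline\lambda\le\lambda\le2\bar\lambda$ — so the stochastic linear term $\bar\lambda\|u\|_{2,1}$ cannot be absorbed by the prior decay alone. The quadratic term $-\tfrac12\|Xu\|_2^2$ must be split and used both to tame the compatibility-controlled contribution on $S_0$ and to keep the off-$S_0$ integral finite, and this must be done uniformly over all $\binom{M}{s}$ models while the variable group sizes and the constants $\Delta_{m_k}$ are carried through. Making these group-dependent constants collapse into a single $c(K,K',A_2)$ via Assumption~\ref{ass:theory} is the delicate bookkeeping at the heart of the argument.
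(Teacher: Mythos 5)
Your overall skeleton (Bayes ratio, Lemma \ref{lem:evidence_lb} for the denominator, per-model bound, geometric summation via \eqref{eq:prior_condition}) matches the paper, but the central step — the per-model bound on $N_S$ — contains a genuine gap, and you half-identify it yourself without resolving it. You propose the purely deterministic bound $\log\Lambda_\beta(Y)\,1_{\T_0}\leq \bar\lambda\|u\|_{2,1}-\tfrac12\|Xu\|_2^2$ with $u=\beta-\beta_0$, and then claim the surplus linear term can be tamed by "splitting the quadratic term ... to keep the off-$S_0$ integral finite." This cannot work. After cancelling against the prior decay $-\lambda\|\beta_S\|_{2,1}$, the exponent retains a \emph{positive} linear contribution of order $(\bar\lambda-\lambda)\|u\|_{2,1}$ off $S_0$, and since only $\underline\lambda\leq\lambda\leq 2\bar\lambda$ is assumed, $\lambda$ may be far smaller than $\bar\lambda$. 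The quadratic term offers no rescue: in the regime $p>n$ the kernel of $X$ is nontrivial, the numerator sums over \emph{all} models $S$ up to size $M$ (where the integration variable is neither sparse nor in the compatibility cone, so neither $\phi(S_0)$, $\bar\phi$, nor $\widetilde\phi$ applies), and along kernel directions supported off $S_0$ the integrand upper bound grows like $e^{(\bar\lambda-\lambda)\|u\|_{2,1}}$, making your bound on $N_S$ infinite for large models. The cone dichotomy you invoke only controls the $S_0$-aligned part; outside the cone it gives nothing when the positive linear coefficient ($\bar\lambda$) exceeds the available negative one ($\lambda$).

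The paper's proof (following Theorem 10 of \cite{CSV15}, which is itself \emph{not} deterministic on $\T_0$) avoids this by splitting the stochastic term before bounding it: only the fraction $\tfrac{\lambda}{2\bar\lambda}(Y-X\beta_0)^TX(\beta-\beta_0)$ is bounded on $\T_0$ via the groupwise Cauchy--Schwarz \eqref{eq:cross_term}, yielding $\tfrac{\lambda}{2}\|u\|_{2,1}$ — which the prior decay \emph{can} absorb up to compatibility terms — while the remaining fraction $(1-\tfrac{\lambda}{2\bar\lambda})(Y-X\beta_0)^TX(\beta-\beta_0)$ is controlled in expectation under $E_{\beta_0}$ using the Gaussian moment generating function, since it is $N(0,\|Xu\|_2^2)$-distributed. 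This costs $e^{\frac12(1-\lambda/(2\bar\lambda))^2\|Xu\|_2^2}$, leaving effective quadratic decay $\tfrac12\bigl[1-(1-\lambda/(2\bar\lambda))^2\bigr]\|Xu\|_2^2\geq\tfrac{\lambda}{4\bar\lambda}\|Xu\|_2^2$, exactly enough for the inequality from p.~2007--8 of \cite{CSV15} (which needs coefficient $\tfrac{\lambda}{8\bar\lambda}$ and produces the $\tfrac{8s_0\bar\lambda\lambda}{\|X\|^2\phi(S_0)^2}\leq\tfrac{144\,s_0\log M}{\phi(S_0)^2}$ term) and for a residual $-\tfrac{\lambda}{4}\|u\|_{2,1}$ that renders every per-model integral finite, equal to $\prod_{G_k\in S}4^{m_k}$. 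Note this means the lemma genuinely bounds $E_{\beta_0}\Pi(\cdot|Y)1_{\T_0}$ rather than $\Pi(\cdot|Y)1_{\T_0}$ pathwise — the expectation is load-bearing, not the afterthought your final sentence suggests. Your handling of the $\Delta_{m_k}$ constants, Assumption \ref{ass:theory}, and the geometric tail sum is otherwise consistent with the paper.
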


\begin{proof}
Using Bayes formula with likelihood ratio \eqref{eq:ll} and Lemma \ref{lem:evidence_lb}, for any measurable set $B \subset \R^p$,
\begin{equation}\label{eq:Bayes}
\begin{split}
\Pi(B|Y)1_{\T_0} &=  1_{\T_0} \frac{\int_B \Lambda_\beta(Y) d\Pi(\beta)}{\int \Lambda_\beta(Y)d\Pi(\beta)}  \\
&\leq C \frac{e^{\lambda \|\beta_0\|_{2,1} + cs_0 \log M}}{ s_0! \pi_M(s_0) }   \int_B 1_{\T_0} e^{-\frac{1}{2}\|X(\beta-\beta_0)\|_2^2 + (Y-X\beta_0)^TX(\beta-\beta_0)} d\Pi(\beta) 
\end{split}
\end{equation}
for $\mathcal{T}_0$ the event defined in \eqref{eq:T0}. Applying Cauchy-Schwarz on the event $\T_0$ gives
\begin{equation}\label{eq:cross_term}
|(Y-X\beta_0)^T X(\beta-\beta_0)| \leq  \max_{k=1,\dots,M} \|X_{G_k}^T(Y-X\beta_0)\|_2 \sum_{k=1}^M \|\beta_{G_k}-\beta_{0,G_k}\|_2 \leq \bar{\lambda} \|\beta-\beta_0\|_{2,1}.
\end{equation}
Therefore, since $(Y-X\beta)^TX(\beta-\beta_0) \sim N(0,\|X(\beta-\beta_0)\|_2^2)$ under $P_{\beta_0}$, on $\T_0$, the integrand in the second last display is bounded by
\begin{align*}
e^{-\frac{1}{2}\|X(\beta-\beta_0)\|_2^2} E_{\beta_0} [1_{\T_0} e^{ (1-\frac{\lambda}{2\bar{\lambda}})(Y-X\beta_0)^TX(\beta-\beta_0)}] e^{\frac{\lambda}{2} \|\beta-\beta_0\|_{2,1} } & \leq e^{-\frac{1}{2}[1-(1-\lambda/(2\bar{\lambda}))^2 ]\|X(\beta-\beta_0)\|_2^2} e^{\frac{\lambda}{2} \|\beta-\beta_0\|_{2,1} }
\end{align*}
and hence 
\begin{align*}
E_{\beta_0} \Pi(B|Y) 1_{\T_0}  \leq C \frac{e^{\lambda \|\beta_0\|_{2,1} + cs_0 \log M}}{ s_0! \pi_M(s_0) }  \int_B e^{-\frac{1}{2}[1-(1-\lambda/(2\bar{\lambda}))^2 ]\|X(\beta-\beta_0)\|_2^2} e^{\frac{\lambda}{2} \|\beta-\beta_0\|_{2,1} } d\Pi(\beta).
\end{align*}
Arguing exactly as on p. 2007-8 in \cite{CSV15},
$$\|\beta_0\|_{2,1} + \frac{1}{2}\|\beta-\beta_0\|_{2,1} \leq  \frac{1}{8\bar{\lambda}} \|X(\beta-\beta_0)\|_2^2 + \frac{8s_0\bar{\lambda}}{\|X\|^2 \phi(S_0)^2} - \frac{1}{4}\|\beta-\beta_0\|_{2,1} + \|\beta\|_{2,1}$$
and hence
\begin{align}\label{eq:dim_eq1}
E_{\beta_0} \Pi(B|Y)1_{\T_0}  \leq C \frac{e^{ cs_0 \log M}}{ s_0! \pi_M(s_0) } e^{\frac{8s_0\bar{\lambda}\lambda }{\|X\|^2 \phi(S_0)^2}} \int_B  e^{-\frac{\lambda}{4} \|\beta-\beta_0\|_{2,1} + \lambda \|\beta\|_{2,1} } d\Pi(\beta) .
\end{align}
Setting now $B = \{\beta:|S_\beta|>R\}$ with $R \geq s_0$, the integral in the last display is bounded by
\begin{align*}
& \sum_{S:|S|>R} \frac{\pi_M(|S|)}{{M\choose |S|} } \int e^{-\frac{\lambda}{4} \|\beta-\beta_0\|_{2,1}} \prod_{G_k \in S} \Delta_{m_k} \lambda^{m_k} d\beta_{G_k} = \sum_{S:|S|>R}^\infty \frac{\pi_M(|S|)}{{M\choose |S|} } \prod_{G_k \in S} 4^{m_k}.
\end{align*}
Using the prior condition \eqref{eq:prior_condition}, this is then bounded by
\begin{align*}
\sum_{s=R+1}^M \pi_M(s) 4^{\mm s} \leq \pi_M(s_0) 4^{\mm s_0} \left( \frac{4^{\mm} A_2}{M^{A_4}} \right)^{R+1-s_0} \sum_{j=0}^\infty  \left( \frac{4^{\mm} A_2}{M^{A_4}} \right)^{j}.
\end{align*}
Since $4^{\mm} A_2/M^{A_4}\leq 1/2$ by the lemma hypothesis, the last sum is bounded by 2 and hence the first term in \eqref{eq:dim_eq1} is bounded by
\begin{align*}
2C\exp \left\{ cs_0 \log M  + \frac{8s_0\lambda\bar{\lambda}}{\|X\|^2 \phi(S_0)^2} + \mm s_0 \log 4 + (R+1-s_0) \log (4^{\mm} A_2/ M^{A_4}) \right\},
\end{align*}
where $C,c>0$ depend only on $K,K'>0$. Taking $R = (L+1) s_0 -1$, using the lemma hypotheses and that $\lambda \leq 2\bar{\lambda} =  6 \|X\| \sqrt{\log M}$, the last display is bounded by
\begin{align*}
2C \exp \left\{ \left( c + \frac{144}{\phi(S_0)^2} + \frac{ (L+1) \mm  \log 4}{\log M} +\frac{L\log A_2}{\log M} - LA_4  \right) s_0 \log M  \right\}.
\end{align*}
For $M>0$ large enough that $\log M\geq \max \{ \frac{4\mm \log 4}{A_2}, \frac{2\log A_2}{A_2}\}$, this is then bounded by
\begin{align*}
2C \exp \left\{ \left( c + \frac{144}{\phi(S_0)^2} + \frac{ A_2}{4}  - LA_4/2  \right) s_0 \log M  \right\}.
\end{align*}
\end{proof}

We next obtain a contraction rate for the full posterior underlying the variational approximation. The proof follows that of Theorem 3 of \cite{CSV15}, modified for the group setting.

\begin{lemma}[Contraction]\label{lem:contract_full}
Suppose that Assumption \ref{ass:theory} holds and the prior satisfies \eqref{eq:prior_condition} and \eqref{eq:lambda}. Further assume $M>0$ is large enough that $\log M\geq \max \{ \frac{4\mm \log 4}{A_2}, \frac{4\log A_2}{A_2}\}$ and $M \geq (4^{\mm+1/2} A_2)^{1/A_4}$. Then there exists a constant $H_0=H_0(A_1,A_3,A_4)>0$ such that for any $\beta_0 \in \R^p$ such that $\mm \log s_0 \leq K' \log M$ and any $L>0$,
\begin{align*}
& E_{\beta_0} \Pi \left( \left. \beta: \|X(\beta-\beta_0)\|_2 \geq  \frac{H_0 \sqrt{(L+2)s_0\log M}}{\bar{\phi}((L+2)s_0)} \right| Y \right)1_{\T_0} \\
& \qquad \leq C(K,K') \exp \left\{ \left( c(K,K',A_2) + \frac{144}{\phi(S_0)^2} -\frac{LA_4}{2} \right) s_0 \log M\right\},
\end{align*}
where $s_0 = |S_{\beta_0}|$ and $\T_0$ is the event \eqref{eq:T0}. Moreover, both
$$E_{\beta_0} \Pi \left( \left. \beta: \|\beta-\beta_0\|_{2,1} \geq  \frac{H_0(L+2)s_0 \sqrt{\log M}}{\|X\| \bar{\phi}((L+2)s_0)^2} \right| Y \right)1_{\T_0} , $$
$$E_{\beta_0} \Pi \left( \left. \beta: \|\beta-\beta_0\|_2 \geq  \frac{H_0 \sqrt{(L+2)s_0\log M}}{ \|X\| \widetilde{\phi}((L+2)s_0)^2} \right| Y \right)1_{\T_0} , $$
satisfy the same inequality.
\end{lemma}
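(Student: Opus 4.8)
The plan is to bound each of the three posterior tail probabilities by splitting the bad set according to model dimension, thereby reducing everything to the dimension estimate of Lemma~\ref{lem:dimension} plus a single integral over \emph{sparse} models. I would carry out the argument for the prediction loss first, since the $\ell_{2,1}$- and $\ell_2$-statements follow from it. Write $d=(L+2)s_0$ and $R=H_0\sqrt{(L+2)s_0\log M}/\bar{\phi}(d)$. Since
\[
\{\beta:\|X(\beta-\beta_0)\|_2\ge R\}\subseteq\{|S_\beta|\ge (L+1)s_0\}\cup\big(\{|S_\beta|<(L+1)s_0\}\cap\{\|X(\beta-\beta_0)\|_2\ge R\}\big),
\]
the first event is controlled directly by Lemma~\ref{lem:dimension} (applied with this $L$), which already delivers the target bound for $E_{\beta_0}\Pi(\cdot|Y)1_{\T_0}$. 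It then remains to bound the sparse contribution, on which $|S_{\beta-\beta_0}|\le|S_\beta|+s_0\le d$, so that $\bar{\phi}(d)$ and $\widetilde{\phi}(d)$ apply to $\beta-\beta_0$.

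On the sparse set I would reuse the Bayes-formula computation \eqref{eq:Bayes}, with the evidence lower bound of Lemma~\ref{lem:evidence_lb} in the denominator and the likelihood ratio \eqref{eq:ll} in the numerator. On $\T_0$ the cross term is bounded by \eqref{eq:cross_term} by $\bar{\lambda}\|\beta-\beta_0\|_{2,1}$, and the prior exponential $e^{-\lambda\|\beta\|_{2,1}}$ is cancelled against the factor $e^{\lambda\|\beta_0\|_{2,1}}$ from Lemma~\ref{lem:evidence_lb} via the reverse triangle inequality $\|\beta_0\|_{2,1}-\|\beta\|_{2,1}\le\|\beta-\beta_0\|_{2,1}$. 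This leaves an integrand bounded by $e^{-\frac12\|X(\beta-\beta_0)\|_2^2+3\bar{\lambda}\|\beta-\beta_0\|_{2,1}}$ against the de-exponentiated prior. On the sparse set, the group compatibility number of Definition~\ref{def:unif_compat} gives $\|\beta-\beta_0\|_{2,1}\le\sqrt{d}\,\|X(\beta-\beta_0)\|_2/(\|X\|\bar{\phi}(d))$, so that $3\bar{\lambda}\|\beta-\beta_0\|_{2,1}\le c_1\|X(\beta-\beta_0)\|_2$ with $c_1=9\sqrt{d\log M}/\bar{\phi}(d)$, using $\bar{\lambda}=3\|X\|\sqrt{\log M}$. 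Completing the square, for $R\ge 8c_1$ (which holds once $H_0\ge72$) one has $-\tfrac12 a^2+c_1 a\le-\tfrac14 a^2-\tfrac18 R^2$ for every $a\ge R$, extracting the decay factor $e^{-R^2/8}$ on the bad set.

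The main obstacle is then the residual integral $\int e^{-\frac14\|X(\beta-\beta_0)\|_2^2}$ over the slab of each model $S$, summed against the prior weights. Carrying out the Gaussian integral produces a normalizing factor $\det(X_S^\top X_S)^{-1/2}$, which I would control by the smallest scaled sparse singular value through $\det(X_S^\top X_S)^{1/2}\ge(\widetilde{\phi}(d)\|X\|)^{p_S}$ with $p_S=\dim\beta_S\le\mm(L+1)s_0$, together with the Gamma-function constants $\Delta_{m_k}$ bounded as in the proof of Lemma~\ref{lem:evidence_lb}. The technical heart of the argument is to show that all of these dimension-dependent factors, raised to the power $p_S$, are absorbed into $e^{Cs_0\log M}$ using Assumption~\ref{ass:theory} (so that $\mm\log\mm\le K\log M$ and hence $\mm\log\log M=O(\log M)$), and that the summation over models $S$ reproduces the geometric series from the proof of Lemma~\ref{lem:dimension} under the standing condition $4^{\mm}A_2/M^{A_4}\le\tfrac12$. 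Choosing $H_0=H_0(A_1,A_3,A_4)$ large enough that $\tfrac18 R^2=\tfrac{H_0^2(L+2)s_0\log M}{8\,\bar{\phi}(d)^2}$ dominates $\tfrac{LA_4}{2}s_0\log M$ then makes the sparse contribution no larger than the dimension term; taking $E_{\beta_0}$ and using $P_{\beta_0}(\T_0)\le1$ completes the prediction bound.

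Finally, the $\ell_{2,1}$- and $\ell_2$-rates follow by the same decomposition. On the sparse set, Definition~\ref{def:unif_compat} gives $\|\beta-\beta_0\|_{2,1}\le\sqrt{d}\,\|X(\beta-\beta_0)\|_2/(\|X\|\bar{\phi}(d))$, while the smallest scaled sparse singular value gives $\|\beta-\beta_0\|_2\le\|X(\beta-\beta_0)\|_2/(\|X\|\widetilde{\phi}(d))$. Substituting the threshold $\|X(\beta-\beta_0)\|_2<R$ shows that the sparse parts of the events $\{\|\beta-\beta_0\|_{2,1}\ge H_0 d\sqrt{\log M}/(\|X\|\bar{\phi}(d)^2)\}$ and $\{\|\beta-\beta_0\|_2\ge H_0\sqrt{d\log M}/(\|X\|\widetilde{\phi}(d)^2)\}$ are contained in $\{\|X(\beta-\beta_0)\|_2\ge R\}$, the $\ell_2$ case using $\bar{\phi}(d)\ge\widetilde{\phi}(d)$ so that $1/(\bar{\phi}(d)\widetilde{\phi}(d))\le1/\widetilde{\phi}(d)^2$. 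Outside the sparse set both are covered by Lemma~\ref{lem:dimension}, so all three bounds hold with the same right-hand side.
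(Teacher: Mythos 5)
Your overall architecture matches the paper's: split off $\{|S_\beta|\ge (L+1)s_0\}$ via Lemma~\ref{lem:dimension}, bound the sparse remainder through the Bayes formula \eqref{eq:Bayes} with Lemma~\ref{lem:evidence_lb} in the denominator and \eqref{eq:cross_term} on $\T_0$, extract $e^{-cR^2}$ by compatibility plus Young's inequality, and deduce the $\ell_{2,1}$- and $\ell_2$-statements from the prediction-loss one via Definition~\ref{def:unif_compat} and $\widetilde{\phi}(s)\le\bar{\phi}(s)$. The last step and the dimension reduction are correct as you state them.

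However, there is a genuine gap in your treatment of the residual integral, and it is exactly the step where the paper does something different. You cancel the slab exponential $e^{-\lambda\|\beta\|_{2,1}}$ entirely against $e^{\lambda\|\beta_0\|_{2,1}}$, so on each model $S$ you are left integrating against the infinite measure $\prod_{G_k\in S}\Delta_{m_k}\lambda^{m_k}\,d\beta_{G_k}$, and integrability must come from the retained Gaussian factor $e^{-\frac14\|X(\beta-\beta_0)\|_2^2}$. This fails on two counts. First, the Gaussian integral requires $X_S^T X_S \succ 0$ for \emph{every} model of size up to $(L+1)s_0$, i.e.\ $\widetilde{\phi}((L+1)s_0)>0$, which is not among the lemma's hypotheses (the statement remains nontrivial when $\widetilde{\phi}$ at that dimension is merely tiny, since $\bar{\phi}$ can be much larger than $\widetilde{\phi}$). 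Second, even when the integral converges, it produces $(4\pi)^{p_S/2}\det(X_S^T X_S)^{-1/2}$, and pairing this with the prior constants $\lambda^{p_S}$ gives a factor of order $\bigl(\lambda/(\|X\|\widetilde{\phi}(|S|))\bigr)^{p_S} \approx \bigl(\sqrt{\log M}/\widetilde{\phi}\bigr)^{\mm(L+1)s_0}$, using $\lambda \le 6\|X\|\sqrt{\log M}$. This is not absorbable into $C\exp\{(c(K,K',A_2)+144/\phi(S_0)^2 - LA_4/2)\,s_0\log M\}$: it introduces a dependence on $\widetilde{\phi}((L+1)s_0)$ absent from the claimed constants, and even with $\widetilde{\phi}$ bounded below it contributes $\exp\{\tfrac12\mm(\log\log M)(L+1)s_0\}$ to the exponent, whose coefficient of $L s_0\log M$ can be of constant order $K$ under Assumption~\ref{ass:theory} (take $\mm \asymp \log M/\log\log M$), destroying the exact $-LA_4/2$ decay. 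The paper avoids all of this with a ``$3 = 4-1$'' trick: it bounds only $4\bar{\lambda}\|\beta-\beta_0\|_{2,1}$ by compatibility and Young's inequality while \emph{retaining} a leftover $-\bar{\lambda}\|\beta-\beta_0\|_{2,1}$ in the exponent, so the model-wise integral becomes $\int e^{-\bar{\lambda}\|\beta-\beta_0\|_{2,1}}\prod_{G_k\in S}\Delta_{m_k}\lambda^{m_k}d\beta_{G_k} = \prod_{G_k\in S}(\lambda/\bar{\lambda})^{m_k}\le 2^{\mm|S|}$ — a translation-invariant, design-free computation in which the scale $\lambda$ cancels against $\bar{\lambda}$ via $\lambda\le 2\bar{\lambda}$, after which the geometric series closes under $2^{\mm}A_2 M^{-A_4}\le 1/4$ exactly as in Lemma~\ref{lem:dimension}. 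Replacing your Gaussian-integral step with this retained $\ell_{2,1}$-tilt repairs the proof.
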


\begin{proof}
Consider the set $E = E_L =  \{\beta \in \R^p: |S_\beta| \leq (L+1)s_0\}$, which satisfies 
\begin{equation}\label{eq:Ec}
E_{\beta_0}\Pi(E^c|Y) 1_{\T_0} \leq C(K,K') \exp \left\{ \left( c(K,K',A_2) + \frac{144}{\phi(S_0)^2} -\frac{LA_4}{2} \right) s_0 \log M\right\}
\end{equation}
by Lemma \ref{lem:dimension}. Recall that $\lambda \|\beta_0\|_{2,1} \leq 2\bar{\lambda} \|\beta-\beta_0\|_{2,1} + \lambda \|\beta\|_{2,1}$ by \eqref{eq:lambda}. Using this, for any set $B \subseteq E$ and $\T_0 = \{\max_k \|X_{G_k}^T(Y-X\beta_0)\|_2 \leq \bar{\lambda} \}$, \eqref{eq:Bayes} and \eqref{eq:cross_term} give
\begin{align*}
\Pi(B|Y)1_{\T_0} \leq C \frac{e^{cs_0 \log M}}{ s_0! \pi_M(s_0) }   \int_B 1_{\T_0} e^{-\frac{1}{2}\|X(\beta-\beta_0)\|_2^2 + 3\bar{\lambda} \|\beta-\beta_0\|_{2,1} + \lambda \|\beta\|_{2,1}} d\Pi(\beta).
\end{align*}
Note that for any $\beta \in E$, $|S_{\beta-\beta_0}| \leq |S_\beta| +s_0 \leq (L+2)s_0 =: D_L s_0$. Using Definition \ref{def:unif_compat} of the uniform compatibility $\bar{\phi}(s)$,
\begin{align*}
(4-1) \bar{\lambda} \|\beta-\beta_0\|_{2,1} & \leq \frac{4\bar{\lambda} \|X(\beta-\beta_0)\|_2 |S_{\beta-\beta_0}|^{1/2}}{\|X\| \bar{\phi}(|S_{\beta-\beta_0}|)}- \bar{\lambda} \|\beta-\beta_0\|_{2,1}  \\
& \leq \frac{1}{8} \|X(\beta-\beta_0)\|_2^2 + \frac{32 \bar{\lambda}^2 D_Ls_0}{\|X\|^2 \bar{\phi}(D_Ls_0)^2} - \bar{\lambda} \|\beta-\beta_0\|_{2,1}.
\end{align*}
Thus for any $B \subseteq E$,
\begin{align*}
\Pi(B|Y)1_{\T_0} \leq C \frac{e^{cs_0 \log M}}{ s_0! \pi_M(s_0) } e^{\frac{32 \bar{\lambda}^2 D_Ls_0}{\|X\|^2 \bar{\phi}(D_Ls_0)^2}}  \int_B 1_{\T_0} e^{-\frac{3}{8}\|X(\beta-\beta_0)\|_2^2 - \bar{\lambda} \|\beta-\beta_0\|_{2,1} + \lambda \|\beta\|_{2,1}} d\Pi(\beta).
\end{align*}

Note that by \eqref{eq:prior_condition}, $\pi_M(s_0) \geq (A_1 M^{-A_3})^{s_0} \pi_M(0)$. For $B = \{ \beta \in E: \|X(\beta-\beta_0)\|_2 \geq R\}$, the last display implies
\begin{align*}
\Pi(B|Y)1_{\T_0} \leq C \frac{e^{cs_0 \log M}}{ s_0! A_1^{s_0} \pi_M(0)} M^{A_3 s_0}  e^{\frac{32 \bar{\lambda}^2 D_Ls_0}{\|X\|^2 \bar{\phi}(D_Ls_0)^2}}  e^{-3R^2/8} \int_B e^{ - \bar{\lambda} \|\beta-\beta_0\|_{2,1} + \lambda \|\beta\|_{2,1}} d\Pi(\beta).
\end{align*}
The last integral is upper bounded by
\begin{align*}
& \sum_{S:|S|\leq (L+1)s_0} \frac{\pi_M(|S|)}{{M \choose |S|}}\int e^{-\bar{\lambda} \|\beta-\beta_0\|_{2,1}} \prod_{G_k \in S} \Delta_{m_k} \lambda^{m_k} d\beta_{G_k} 
= \sum_{s=0}^{(L+1)s_0} \pi_M(s) \prod_{G_k \in S} (\lambda /\bar{\lambda})^{m_k}.
\end{align*}
Using \eqref{eq:prior_condition} and \eqref{eq:lambda}, the last display is bounded by $\pi_M(0) \sum_{s=0}^\infty (A_2M^{-A_4})^s 2^{\mm s} \leq \pi_M(0) \sum_{s=0}^\infty 4^{-s} \leq 2\pi_M(0)$, since $2^{\mm} A_2 M^{-A_4} \leq 2^{-\mm+1} \leq 1/4$ by the lemma hypothesis. Setting $R^2 = H_0^2 D_L s_0 \log M/\bar{\phi}(D_Ls_0)^2$ for a constant $H_0>0$, the second last display is bounded by
\begin{align*}
& C(K,K') \exp \left\{ \left( c(K,K') + A_3 - \frac{\log A_1}{\log M} + \frac{288 D_L}{\bar{\phi}(D_Ls_0)^2} \right) s_0 \log M  - \frac{3}{8}R^2 \right\} \\
& \quad \leq C \exp \left\{ - \left[ \frac{3H_0^2}{8} -288- c - A_3 - \frac{|\log A_1|}{\log M} \right] \frac{D_L s_0 \log M}{\bar{\phi}(D_Ls_0)^2} \right\},
\end{align*}
where we have used that $\bar{\phi}(D_Ls_0) \leq \bar{\phi}(1) \leq 1$. Taking $H_0$ large enough depending on $A_1,A_3,A_4$ and using again that $\bar{\phi}(D_Ls_0)\leq 1$, the last display can be made smaller than \eqref{eq:Ec}, which is thus the dominant term in the tail probability bound.

For the $\|\cdot\|_{2,1}$-loss, using Definition \ref{def:unif_compat} of the uniform compatibility, for any $\beta \in E$,
\begin{align*}
\bar{\lambda} \|\beta-\beta_0\|_{2,1} & \leq \frac{\bar{\lambda} \|X(\beta-\beta_0)\|_2|S_{\beta-\beta_0}|^{1/2}}{\|X\| \bar{\phi}(|S_{\beta-\beta_0}|)}  
 \leq \frac{1}{2} \frac{\bar{\lambda}^2 D_L s_0}{\|X\|^2 \bar{\phi}(D_Ls_0)^2} + \frac{1}{2}\|X(\beta-\beta_0)\|_2^2.
\end{align*}
The result then follows from the first assertion for the prediction loss $\|X(\beta-\beta_0)\|_2$.

For the $\|\cdot\|_2$-loss, note that $\|X(\beta-\beta_0)\|_2 \geq \widetilde{\phi}(D_Ls_0)\|X\| \|\beta-\beta_0\|_2$ for any $\beta \in E$. Since $\widetilde\phi(s) \leq \bar{\phi}(s)$ for all $s$ by Lemma 1 of \cite{CSV15} (which extends to the group setting), the result follows.
\end{proof}

\subsection{$\KL$-divergence between the variational and true posterior}\label{sec:KL_proofs}

We next bound the KL-divergence between the variational family and the true posterior on the following event:
\begin{equation}\label{eq:T1}
\T_1 = \T_1(\Gamma,\eps,\kappa) = \T_0 \cap \{ \Pi(\beta: |S_\beta| > \Gamma|Y) \leq 1/4\} \cap \{ \Pi(\beta: \|\beta-\beta_0\|_2 > \eps|Y) \leq e^{-\kappa}\},
\end{equation}
where $\Gamma,\kappa,\eps>0$. The proof largely follows Section B.2 of \cite{RS22}, again modified to the group sparse setting, which we produce in full for completeness.

\begin{lemma}\label{lem:KL}
If $4e^{1+\Gamma \log M-\kappa} \leq 1$, then there exists an element $Q \in \mathcal{Q} \subset \mathcal{Q}'$ of the variational families such that
\begin{align*}
\KL (Q||\Pi(\cdot|Y)) 1_{\T_1} & \leq \Gamma \left( \log M +  \mm \log \frac{1}{\widetilde{\phi}(\Gamma)} \right) + \frac{\lambda \Gamma}{\widetilde{\phi}(\Gamma)^2} \left( 3s_0^{1/2} \eps  +  \frac{3\sqrt{\log M}}{\|X\|} + \frac{\mm^{1/2}}{\|X\|} \right) \\
& \qquad + \log (4e).
\end{align*}
If Assumption \ref{ass:theory} also holds, then
\begin{align*}
\KL (Q||\Pi(\cdot|Y)) 1_{\T_1} & \leq \Gamma (K+1) \log M \log \frac{1}{\widetilde{\phi}(\Gamma)} + \frac{\lambda \Gamma}{\widetilde{\phi}(\Gamma)^2} \left( 3s_0^{1/2} \eps  +  \frac{(3+\sqrt{K})\sqrt{\log M}}{\|X\|}  \right) \\
& \quad + \log (4e).
\end{align*}
\end{lemma}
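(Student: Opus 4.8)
The plan is to exhibit a single distribution $Q$ in the mean-field family $\Q$ (hence also in $\Q'$) that sits on one well-chosen model and is close to the true posterior, and then bound $\KL(Q\|\Pi(\cdot|Y))$ term by term. The conceptual starting point is the restriction identity: for a measurable set $A^*$ and any $Q$ supported on $A^*$,
$$\KL(Q\|\Pi(\cdot|Y)) = \KL(Q\|\Pi(\cdot|Y)_{A^*}) - \log \Pi(A^*|Y),$$
where $\Pi(\cdot|Y)_{A^*}$ is the posterior conditioned on $A^*$. This cleanly separates the ``which model'' cost, carried by $-\log\Pi(A^*|Y)$, from the quality of the Gaussian approximation inside a fixed model, which is what lets $\Gamma\log M$ appear exactly once.

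First I would choose $A^*$. On $\T_1$ the posterior gives mass at least $3/4$ to $\{|S_\beta|\le \Gamma\}$ and at least $1-e^{-\kappa}$ to the ball $\{\|\beta-\beta_0\|_2\le \eps\}$, hence at least $3/4-e^{-\kappa}$ to their intersection. Since there are at most $\sum_{s\le\Gamma}\binom{M}{s}\le eM^{\Gamma}$ models of size at most $\Gamma$, a pigeonhole argument produces a single model $S^*$ with $|S^*|\le \Gamma$ such that, setting $A^*=\{S_\beta=S^*\}\cap\{\|\beta-\beta_0\|_2\le \eps\}$, one has $\Pi(A^*|Y)\ge (3/4-e^{-\kappa})/(eM^{\Gamma})$. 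The hypothesis $4e^{1+\Gamma\log M-\kappa}\le 1$ forces $e^{-\kappa}$ to be negligible and yields $-\log\Pi(A^*|Y)\le \Gamma\log M + \log(4e)$, the source of the leading $\Gamma\log M$ term and of $\log(4e)$.

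Next I would construct $Q\in\Q$: set $\gamma=\mathbf{1}_{S^*}$ (so the model-selection entropy terms in the prior KL vanish, avoiding a second $\Gamma\log M$), center each Gaussian block at $\beta_{0,S^*}$, the restriction of $\beta_0$ to $S^*$, which lies within $\eps$ of $\beta_0$ since $A^*$ is nonempty, and give each block a small isotropic covariance $\Sigma_{G_k}=r^2 I_{m_k}$ with $r$ calibrated through $\widetilde{\phi}(\Gamma)$ and $\|X\|$. Taking $r$ small enough that $Q$ places all but a negligible fraction of its mass inside $A^*$ lets me truncate $Q$ to $A^*$ at negligible cost. Expanding $\KL(Q\|\Pi(\cdot|Y)_{A^*})$ through the conditional density $\propto \Lambda_\beta\prod_{G_k\in S^*}\Psi_{\lambda,m_k}$, with $\Lambda_\beta$ the likelihood ratio \eqref{eq:ll}, splits into (i) a continuous slab KL whose $-\tfrac12\log\det(\Sigma_{G_k})$ contribution produces the $\Gamma\,\mm\log(1/\widetilde{\phi}(\Gamma))$ term and whose $\lambda\|\beta\|$-type term is bounded by Jensen exactly as for \eqref{eq:penalty}, and (ii) the expected negative log-likelihood ratio $-\E_Q[\log\Lambda_\beta]=\tfrac12\E_Q\|X(\beta-\beta_0)\|_2^2-\E_Q[(Y-X\beta_0)^\top X(\beta-\beta_0)]$. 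On $\T_0$ I would bound the cross term by Cauchy--Schwarz and Lemma \ref{lem:T0}, giving $\bar{\lambda}\|\cdot\|_{2,1}$, then convert to $\ell_2$ via $\|\cdot\|_{2,1}\le |S^*|^{1/2}\|\cdot\|_2$ and $\|\cdot\|_2\le \|X\cdot\|_2/(\|X\|\widetilde{\phi}(\Gamma))$; together with $\|\beta_{0,S^*}-\beta_0\|_2\le\eps$ and the variance contribution $\tr(\Sigma_{G_k})$ this yields the $\frac{\lambda\Gamma}{\widetilde{\phi}(\Gamma)^2}\big(3s_0^{1/2}\eps + 3\sqrt{\log M}/\|X\| + \mm^{1/2}/\|X\|\big)$ block. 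The second bound then follows by substituting Assumption \ref{ass:theory}, $\mm\log\mm\le K\log M$, which absorbs $\mm\log(1/\widetilde\phi(\Gamma))$ into $(K+1)\log M\log(1/\widetilde\phi(\Gamma))$ and $\mm^{1/2}$ into $\sqrt{K}\sqrt{\log M}$.

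The main obstacle is the calibration of the covariance scale $r$ (equivalently $\Sigma_{G_k}$): it must be small enough that $Q$ concentrates inside $A^*$ and that the prediction-error and variance contributions to $-\E_Q[\log\Lambda_\beta]$ stay controlled, yet not so small that the entropy term $-\tfrac12\log\det(2\pi\Sigma_{G_k})$ overwhelms the budget. It is precisely this trade-off that forces the appearance of $\widetilde{\phi}(\Gamma)$ and must be tuned so that all three groups of terms come out at the stated orders. Keeping the normalizing constants straight so that $\Gamma\log M$ is counted once, and controlling the error incurred when truncating the Gaussian to $A^*$, are the remaining technical points.
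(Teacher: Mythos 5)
Your skeleton matches the paper's at the top level: pigeonhole a single model $S^*$ of size at most $\Gamma$ carrying posterior weight at least a constant times $M^{-\Gamma}$ (the paper gets $\hat{q}_{\tilde{S}} \geq (2e)^{-1}M^{-\Gamma}$ via Lemma B.2 of \cite{RS22}, and your nonemptiness trick for $\|\beta_{0,S^{*c}}\|_2 \leq \eps$ is exactly theirs), split the KL into a model-weight cost $\Gamma \log M + O(1)$ plus a within-model KL, and extract the log-determinant term via $\lambda_{min}(X_{S^*}^T X_{S^*}) \geq \|X\|^2 \widetilde{\phi}(|S^*|)^2$. However, your restriction identity is invoked for $A^*$ containing the $\eps$-ball, and a Gaussian supported on the model subspace is not supported in $A^*$, so you are forced to truncate --- and the truncated Gaussian is no longer an element of $\Q$, which the lemma requires. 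The paper avoids this entirely: its candidate $\tilde{Q} = N_{\tilde{p}}(\mu_{\tilde{S}},D_{\tilde{S}}) \otimes \delta_{\tilde{S}^c}$ is absolutely continuous with respect to the single mixture component $\hat{q}_{\tilde{S}}\, \Pi_{\tilde{S}}(\cdot|Y) \otimes \delta_{\tilde{S}^c}$ of the posterior, giving \eqref{eq:KL1} with no truncation; the $\eps$-ball enters only through $\|\beta_{0,\tilde{S}^c}\|_2 \leq \eps$ and the normalizing-constant bound. This part is fixable by dropping the ball from $A^*$ after the pigeonhole.

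The serious gap is your choice of variational mean $\beta_{0,S^*}$ with isotropic covariance $r^2 I$. The paper takes $\mu_{\tilde{S}} = (X_{\tilde{S}}^T X_{\tilde{S}})^{-1}X_{\tilde{S}}^T Y$ with $D_{\tilde{S}}$ matching the diagonal of $(X_{\tilde{S}}^T X_{\tilde{S}})^{-1}$ precisely so that the intermediate Gaussian $N(\mu_{\tilde{S}},\Sigma_{\tilde{S}})$ has density proportional to $e^{-\frac{1}{2}\|Y - X_{\tilde{S}}\beta_{\tilde{S}}\|_2^2}$: the likelihood then cancels exactly in term (II) of \eqref{eq:KL_split}, the only residual discrepancy with the restricted posterior being the slab $e^{-\lambda\|\beta\|_{2,1}}$, whose logarithm is $\lambda$-Lipschitz. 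Hence every noise- and $\eps$-dependent contribution in the final bound carries a factor $\lambda$ --- which is exactly what the stated right-hand side records. With your centering you must instead expand $-\E_Q[\log \Lambda_\beta] = \tfrac{1}{2}\E_Q\|X(\beta-\beta_0)\|_2^2 - \E_Q[(Y-X\beta_0)^T X(\beta-\beta_0)]$, which leaves a squared-bias term $\tfrac{1}{2}\|X_{S^{*c}}\beta_{0,S^{*c}}\|_2^2 \leq \tfrac{1}{2}\|X\|^2 s_0 \eps^2$ and a cross term of size $\bar{\lambda}\|\beta_{0,S^{*c}}\|_{2,1} \leq 3\|X\|\sqrt{\log M}\, s_0^{1/2}\eps$. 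Neither carries a factor $\lambda$, and since \eqref{eq:lambda} permits $\lambda$ as small as $\underline{\lambda} = \|X\| M^{-1/\mm} \ll \bar{\lambda}$, these terms are not dominated by the claimed bound: with the calibration $\eps \asymp \sqrt{s_0 \log M}/(\|X\|\widetilde{\phi})$ used in Lemma \ref{lem:T1}, the squared bias alone is of order $s_0^2 \log M / \widetilde{\phi}^2$ while the lemma's $\lambda$-weighted terms are then negligible. No tuning of $r$ can remove these mean-induced costs; your trade-off analysis for $r$ addresses the entropy-versus-variance balance but not this bias. To prove the lemma as stated you need the least-squares centering (or some other device achieving exact cancellation of the likelihood factor), together with the paper's trace identity $\Tr(\Sigma_{\tilde{S}}^{-1} D_{\tilde{S}}) = \tilde{p}$, which is what lets the mean-field (diagonal) covariance be compared to the full $(X_{\tilde{S}}^T X_{\tilde{S}})^{-1}$ at a cost of only $\mm \Gamma \log(1/\widetilde{\phi}(\Gamma))$.
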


\begin{proof}
The full posterior takes the form
\begin{equation}\label{eq:post}
\Pi(\cdot|Y) = \sum_{S: S\subseteq \{1,\dots,M\}} \hat{q}_S \Pi_S(\cdot|Y) \otimes \delta_{S^c},
\end{equation}
with weights $0\leq \hat{q}_S \leq 1$ satisfying $\sum_S \hat{q}_S = 1$ and where $\Pi_S(\cdot|Y)$ denotes the posterior for $\beta_S\in \R^{\sum_{G_k \in S} m_k}$ in the restricted model $Y = X_S\beta_S + Z$. Arguing exactly as in the proof of Lemma B.2 of \cite{RS22}, one has that on $\T_1$ and for $4e^{1+\Gamma \log M - \kappa}\leq 1$, there exists a set $\tilde{S} \subseteq \{G_1,\dots,G_M\}$ such that
\begin{equation}\label{eq:S}
|\tilde{S}| \leq \Gamma, \qquad \|\beta_{0,S^c}\|_2 \leq \eps, \qquad \hat{q}_{\tilde{S}} \geq (2e)^{-1} M^{-\Gamma}.
\end{equation}

Writing $\tilde{p} = \sum_{G_k \in \tilde{S}}m_k$, consider the element of the variational family
\begin{equation}\label{eq:tQ}
\tilde{Q} = \left( \bigotimes_{G_k \in \tilde{S}} N_{m_k}(\mu_{G_k},D_{G_k}) \right) \otimes \left( \bigotimes_{G_k \in \tilde{S}^c} \delta_{G_k} \right) =: N_{\tilde{p}}(\mu_{\tilde{S}},D_{\tilde{S}}) \otimes \delta_{\tilde{S}^c}
\end{equation}
where $D_{G_k}$ are diagonal matrices to be defined below. This is the distribution that assigns mass one to the model $\tilde{S}$ and then fits an independent normal distribution with diagonal covariance on each of the groups in $\tilde{S}$. Since $\tilde{Q}$ is only absolutely continuous with respect to the $\hat{q}_{\tilde{S}} \Pi_{\tilde{S}}(\cdot|Y)\otimes \delta_{\tilde{S}^c}$ component of the posterior \eqref{eq:post},
\begin{equation}\label{eq:KL1}
\begin{split}
\inf_{Q \in \mathcal{Q}} \KL(Q\| \Pi(\cdot|Y)) \leq \KL (\tilde{Q}\|\Pi(\cdot|Y) &= E_{\beta \sim \tilde{Q}} \left[ \log \frac{dN_{\tilde{p}}(\mu_{\tilde{S}},D_{\tilde{S}}) \otimes \delta_{\tilde{S}^c} }{ \hat{q}_{\tilde{S}} d\Pi_{\tilde{S}}(\cdot|Y) \otimes \delta_{\tilde{S}^c} }(\beta) \right] \\
& = \log \frac{1}{\hat{q}_{\tilde{S}}} +  \KL (N_{\tilde{p}}(\mu_{\tilde{S}},D_{\tilde{S}}) \| \Pi_{\tilde{S}}(\cdot|Y) ).
\end{split}
\end{equation}
On $\T_1$, the first term is bounded by $\log (2eM^{\Gamma})$ by \eqref{eq:S}, so that it remains to bound the second term in \eqref{eq:KL1}.

Define
\begin{equation}\label{eq:vb_parameters}
\mu_{\tilde{S}} = (X_{\tilde{S}}^T X_{\tilde{S}})^{-1} X_{\tilde{S}}^T Y, \qquad D_{\tilde{S}} = \text{diag}((D_{G_k})_{G_k \in \tilde{S}}), \qquad \Sigma_{\tilde{S}} = (X_{\tilde{S}}^T X_{\tilde{S}})^{-1},
\end{equation}
where $D_{G_k} \in \R^{m_k \times m_k}$, $G_k \in \tilde{S}$, is the diagonal matrix with entries
\begin{equation*}
(D_{G_k})_{ii} =\frac{1}{(X_{G_k}^T X_{G_k})_{ii}}, \qquad \qquad i = 1,\dots,m_k.
\end{equation*}
Writing $E_{\mu_{\tilde{S}},D_{\tilde{S}}}$ for the expectation under the distribution $\beta_{\tilde{S}} \sim N_{\tilde{p}}(\mu_{\tilde{S}},D_{\tilde{S}})$,
\begin{equation}\label{eq:KL_split}
\begin{split}
\KL (N_{\tilde{p}}(\mu_{\tilde{S}},D_{\tilde{S}}) \| \Pi_{\tilde{S}}(\cdot|Y) ) & = E_{\mu_{\tilde{S}},D_{\tilde{S}}} \left[ \log \frac{dN_{\tilde{p}}(\mu_{\tilde{S}},D_{\tilde{S}})}{dN_{\tilde{p}}(\mu_{\tilde{S}},\Sigma_{\tilde{S}})} + \log \frac{dN_{\tilde{p}}(\mu_{\tilde{S}},\Sigma_{\tilde{S}})}{d\Pi_{\tilde{S}}(\cdot|Y)}   \right] \\
&  =:(I) + (II).
\end{split}
\end{equation}
We next deal with each term separately.

\textit{Term (I) in \eqref{eq:KL_split}.} Using the formula for the Kullback-Leibler divergence between two multivariate Gaussian distributions,
$$(I) = \KL (N_{\tilde{p}}(\mu_{\tilde{S}},D_{\tilde{S}})\|N_{\tilde{p}}(\mu_{\tilde{S}},\Sigma_{\tilde{S}})) = \tfrac{1}{2}\left( \Tr (\Sigma_{\tilde{S}}^{-1} D_{\tilde{S}}) - \tilde{p} + \log (|\Sigma_{\tilde{S}}|/|D_{\tilde{S}}|) \right),$$
where $|A|$ denotes the determinant of a square matrix $A$. Further define the matrix
$$V_{\tilde{S}} = \text{diag}(((X_{G_k}^T X_{G_k})^{-1})_{G_k \in \tilde{S}}),$$
which is a block-diagonalization of $\Sigma_{\tilde{S}} = (X_{\tilde{S}}^T X_{\tilde{S}})^{-1}$. Let $\tilde{S} = \{G_{k_1},\dots,G_{k_s}\}$ for $s = |\tilde{S}|$. By considering multiplication along the block structure, $\Sigma_{\tilde{S}}^{-1} V_{\tilde{S}}$ has $(i,j)^{th}$-block equal to $(X_{G_{k_i}}^T X_{G_{k_j}}) (X_{G_{k_j}}^T X_{G_{k_j}})^{-1}$, $i,j=1,\dots,s$. Furthermore, $V_{\tilde{S}}^{-1} D_{\tilde{S}}$ is a block-diagonal matrix with $(i,i)^{th}$-block $(X_{G_{k_i}}^T X_{G_{k_i}}) D_{G_{k_i}}$, $i=1,\dots,s$. Thus, by considering the block-diagonal terms,
\begin{align*}
\Tr(\Sigma_{\tilde{S}}^{-1} D_{\tilde{S}}) = \Tr(\Sigma_{\tilde{S}}^{-1} V_{\tilde{S}} V_{\tilde{S}}^{-1} D_{\tilde{S}}) &= \Tr \left( \text{diag} ((X_{G_{k_i}}^T X_{G_k} D_{G_k})_{G_k \in \tilde{S}}) \right) \\
&= \sum_{G_k \in \tilde{S}} \Tr \left( X_{G_{k_i}}^T X_{G_{k_i}} D_{G_{k_i}} \right) = \sum_{G_k \in \tilde{S}} m_k = \tilde{p},
\end{align*}
and hence $(I) = \tfrac{1}{2} \log (|\Sigma_{\tilde{S}}||D_{\tilde{S}}^{-1}|)$. Using \eqref{eq:vb_parameters},
$$|D_{\tilde{S}}^{-1}| = \prod_{G_k \in \tilde{S}} \prod_{i\in G_k} (X_{G_k}^T X_{G_k})_{ii} \leq \prod_{G_k \in \tilde{S}} \|X\|^{2m_k} = \|X\|^{2\tilde{p}}.$$
Let $\lambda_{max}(A)$ and $\lambda_{min}(A)$ denote the largest and smallest eigenvalues, respectively, of a matrix $A$. Arguing as in equation (B.12) in \cite{RS22}, for any $S \subseteq \{G_1,\dots,G_M\}$,
\begin{equation}\label{eq:lambda_min}
\lambda_{min}(X_S^T X_S) \geq \|X\|^2 \widetilde{\phi}(|S|)^2.
\end{equation}
Therefore,
$$|\Sigma_{\tilde{S}}| = 1/|X_{\tilde{S}}^T X_{\tilde{S}}| \leq 1/\lambda_{\min}(X_{\tilde{S}}^T X_{\tilde{S}})^{\tilde{p}} \leq 1/(\|X\| \widetilde{\phi}(|\tilde{S}|))^{2\tilde{p}},$$
and hence $(I) = \tfrac{1}{2}\log (|\Sigma_{\tilde{S}}||D_{\tilde{S}}^{-1}|) \leq \tilde{p} \log (1/\widetilde{\phi}(|\tilde{S}|)) \leq \mm \Gamma \log (1/\widetilde{\phi}(\Gamma))$ using \eqref{eq:S}.

\textit{Term (II) in \eqref{eq:KL_split}}.
One can check that the $N_{\tilde{p}}(\mu_{\tilde{S}},\Sigma_{\tilde{S}})$ distribution has density function proportional to $e^{-\frac{1}{2}\|Y-X_{\tilde{S}}\beta_{\tilde{S}}\|_2^2}$ for $\beta_{\tilde{S}} \in \R^{\tilde{p}}$. Therefore,
\begin{align*}
(II) & = E_{\mu_{\tilde{S}},D_{\tilde{S}}} \left[ \log \frac{D_\Pi e^{-\frac{1}{2}\|Y-X_{\tilde{S}} \beta_{\tilde{S}}\|_2^2 - \lambda \|\beta_{0,\tilde{S}}\|_{2,1} }}{D_N e^{-\frac{1}{2}\|Y-X_{\tilde{S}} \beta_{\tilde{S}}\|_2^2 - \lambda \|\beta_{\tilde{S}}\|_{2,1}}} \right] \\
& = \log (D_\Pi/D_N) + \lambda E_{\mu_{\tilde{S}},D_{\tilde{S}}} (\|\beta_{\tilde{S}}\|_{2,1} - \|\beta_{0,\tilde{S}}\|_{2,1}),
\end{align*}
where $D_\Pi = \int_{\R^{\tilde{p}}} e^{-\frac{1}{2}\|Y-X_{\tilde{S}} \beta_{\tilde{S}}\|_2^2 - \lambda \|\beta_{\tilde{S}}\|_{2,1}} d\beta_{\tilde{S}}$ and
$D_N = \int_{\R^{\tilde{p}}} e^{-\frac{1}{2}\|Y-X_{\tilde{S}} \beta_{\tilde{S}}\|_2^2 - \lambda \|\beta_{0,\tilde{S}}\|_{2,1}} d\beta_{\tilde{S}}$
are the normalizing constants for the densities. Arguing exactly as in the proof of Lemma B.2 of \cite{RS22}, one can show that on the event $\T_1$ is holds that $\log (D_\Pi/D_N) \leq 2\lambda \Gamma^{1/2} \eps + \log 2$ if $4e^{1+\Gamma \log M - \kappa}\leq 1$. On $\T_1$, the second term in the last display is bounded by
\begin{equation}\label{eq:decomp1}
\begin{split}
\lambda E_{\mu_{\tilde{S}},D_{\tilde{S}}} \|\beta_{\tilde{S}} - \beta_{0,\tilde{S}}\|_{2,1} & \leq \lambda |\tilde{S}|^{1/2} E_{\mu_{\tilde{S}},D_{\tilde{S}}}  \|\beta_{\tilde{S}} - \beta_{0,\tilde{S}}\|_2 \\
&\leq \lambda \Gamma^{1/2} \left(  \|\mu_{\tilde{S}} - \beta_{0,\tilde{S}}\|_{2} + E_{0,D_{\tilde{S}}} \|\beta_{\tilde{S}}\|_{2} \right)
\end{split}
\end{equation}
using Cauchy-Schwarz.

Under $P_{\beta_0}$, so that $Y =^d X\beta_0 + \eps$, and using \eqref{eq:vb_parameters},
\begin{align*}
\|\mu_{\tilde{S}} - \beta_{0,\tilde{S}}\|_{2} \leq \| (X_{\tilde{S}}^T X_{\tilde{S}})^{-1} X_{\tilde{S}}^T X_{\tilde{S}^c} \beta_{0,\tilde{S}^c}\|_2 + \| (X_{\tilde{S}}^T X_{\tilde{S}})^{-1}X_{\tilde{S}}^T \eps \|_2.
\end{align*}
Arguing for this term as in Lemma B.2 of \cite{RS22}, one can show that the first term is bounded by $\Gamma^{1/2} s_0^{1/2} \eps/\widetilde{\phi}(|\tilde{S}|)^2$ on $\T_1$. Using that the $\ell_2$-operator norm of $(X_{\tilde{S}}^T X_{\tilde{S}})^{-1}$ is bounded by $1/(\|X\| \widetilde{\phi}(|\tilde{S}|))^2$ by \eqref{eq:lambda_min}, the second term in the last display is bounded by $\|X_{\tilde{S}}^T \eps\|_2/(\|X\|^2 \widetilde{\phi}(|\tilde{S}|)^2)$. But on $\T_1 \subset \T_0$,
$$\|X_{\tilde{S}}^T \eps\|_2^2 = \sum_{G_k \in \tilde{S}} \|X_{G_k}^T(Y-X\beta_0)\|_2^2 \leq 9|\tilde{S}| \|X\|^2 \log M.$$
Combining the above bounds thus yields 
\begin{align*}
\|\mu_{\tilde{S}} - \beta_{0,\tilde{S}}\|_{2} \leq \frac{\Gamma^{1/2} s_0^{1/2} \eps}{\widetilde{\phi}(\Gamma)^2}  +  \frac{3 \Gamma^{1/2} \sqrt{\log M}}{\|X\| \widetilde{\phi}(|\tilde{S}|)^2}
\end{align*}
on $\T_1$, thereby controlling the first term in \eqref{eq:decomp1}.

It remains only to bound the second term in \eqref{eq:decomp1}. Let $e_i$ denote the $i^{th}$ unit vector in $\R^{m_k}$, $i=1,\dots,m_k$, and let $\bar{e}_i$ denote its extension to $\R^{\tilde{p}}$ with unit entry in the $i^{th}$ coordinate of group $G_k$. Then $(X_{G_k}^T X_{G_k})_{ii} = \|X_{G_k}e_i\|_2^2 = \|X \bar{e}_i\|_2^2 \geq \|X\|^2 \widetilde{\phi}(1)^2$ for $i=1,\dots,m_k$. Therefore,
\begin{align*}
E_{0,D_{\tilde{S}}} \|\beta_{\tilde{S}}\|_{2}^2 = \Tr(D_{\tilde{S}}) = \sum_{G_k \in \tilde{S}} \sum_{i \in G_k} \frac{1}{(X_{\tilde{S}}^T X_{\tilde{S}})_{ii}} \leq \frac{\tilde{p}}{\|X\|^2 \widetilde{\phi}(1)^2}.
\end{align*}
Putting together of all the above bounds yields
$$(II) \leq 2\lambda \Gamma^{1/2} \eps + \log 2 + \frac{\lambda \Gamma}{\widetilde{\phi}(\Gamma)^2} \left( s_0^{1/2} \eps  +  \frac{3\sqrt{\log M}}{\|X\|} + \frac{\mm^{1/2}}{\|X\|} \right),$$
using that $\tilde{p} \leq \mm |\tilde{S}|$, $|\tilde{S}| \leq \Gamma$ and $\widetilde{\phi}(|\tilde{S}|) \leq \widetilde{\phi}(1) \leq 1$ for $\tilde{S} \neq \emptyset$.

Substituting the bounds for $(I)$ and $(II)$ just derived into \eqref{eq:KL1} and \eqref{eq:KL_split} then gives the first result. The second result follows from using that $\mm \leq K \log M$ under Assumption \ref{ass:theory}.
\end{proof}

\subsection{Proofs of Theorems \ref{thm:contraction_general} and \ref{thm:dimension_general}}\label{sec:proof_completion}

To complete the proofs, we apply Lemma \ref{lem:post_to_VB} with the event $\T_1$ defined in \eqref{eq:T1}
for suitably chosen constants $\Gamma,\eps,\kappa>0$.

\begin{lemma}\label{lem:T1}
Suppose that Assumption \ref{ass:theory} holds, the prior satisfies \eqref{eq:prior_condition}-\eqref{eq:lambda} and $s_n$ satisfies $\mm \log s_n \leq K' \log M$ for some $K'>0$. Then
$$\inf_{\substack{\beta_0\in \R^p: \phi(S_{\beta_0}) \geq c_0\\ |S_{\beta_0}|\leq s_n }} P_{\beta_0} (\T_1(\Gamma_{s_0},\eps_{s_0},\kappa_{s_0})) \to 1$$
as $n\to\infty$, where 
\begin{equation}\label{eq:T1_parameters}
\Gamma_{s_0} =  C_\Gamma s_0, \qquad 
\eps_{s_0} = C_\eps \frac{\sqrt{s_0 \log M}}{\|X\| \widetilde{\phi}(C_\phi s_0)}, \qquad
\kappa_{s_0} = (C_\Gamma s_0 + 1)\log M,
\end{equation}
and the constants have dependence $C_\Gamma = C_\Gamma(K,K',A_2,A_4,c_0)$, $C_\eps = C_\eps(K,K',A_1-A_4,c_0)$ and $C_\phi = C_\phi(K,K',A_2,A_4,c_0)$. Moreover, $4e^{1+\Gamma_{s_0} \log M-\kappa_{s_0}} \leq 1$ for $M>0$ large enough.
\end{lemma}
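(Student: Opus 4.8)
The plan is to bound the complementary probability $P_{\beta_0}(\T_1^c)$ by decomposing $\T_1 = \T_0 \cap A \cap B$ into its three constituents, where $A = \{\Pi(\beta:|S_\beta|>\Gamma_{s_0}\,|\,Y) \leq 1/4\}$ and $B = \{\Pi(\beta:\|\beta-\beta_0\|_2 > \eps_{s_0}\,|\,Y) \leq e^{-\kappa_{s_0}}\}$, so that a union bound gives $P_{\beta_0}(\T_1^c) \leq P_{\beta_0}(\T_0^c) + P_{\beta_0}(\T_0\cap A^c) + P_{\beta_0}(\T_0\cap B^c)$. The first term is immediate: Lemma \ref{lem:T0} gives $P_{\beta_0}(\T_0^c) \leq 2/M$ uniformly over all $\beta_0\in\R^p$. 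The idea for the remaining two terms is to convert the $E_{\beta_0}$-expectation bounds on the full posterior from Lemmas \ref{lem:dimension} and \ref{lem:contract_full} into probability bounds via Markov's inequality, and then to choose the constants $C_\Gamma, C_\phi, C_\eps$ so that the resulting exponents are strictly negative. Throughout, the hypotheses $\phi(S_{\beta_0})\geq c_0$ and $|S_{\beta_0}|\leq s_n$ are used to make the bounds uniform.

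For the dimension term I would write, by Markov's inequality, $P_{\beta_0}(\T_0\cap A^c) \leq 4\, E_{\beta_0}[\mathbf{1}_{\T_0}\,\Pi(\beta:|S_\beta|>\Gamma_{s_0}\,|\,Y)]$ and apply Lemma \ref{lem:dimension} with $(L+1)s_0 = \Gamma_{s_0} = C_\Gamma s_0$, i.e.\ $L = C_\Gamma - 1$. Replacing $144/\phi(S_0)^2$ by $144/c_0^2$, the exponent becomes $(c + 144/c_0^2 - (C_\Gamma-1)A_4/2)\,s_0\log M$; choosing $C_\Gamma$ large enough (depending only on $K,K',A_2,A_4,c_0$) makes this at most $-s_0\log M \leq -\log M$ since $s_0\geq 1$, so the whole term is $O(1/M)$.

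For the contraction term, Markov costs a factor $e^{\kappa_{s_0}} = M^{C_\Gamma s_0 + 1}$, yielding $P_{\beta_0}(\T_0\cap B^c) \leq e^{\kappa_{s_0}} E_{\beta_0}[\mathbf{1}_{\T_0}\,\Pi(\beta:\|\beta-\beta_0\|_2 > \eps_{s_0}\,|\,Y)]$. I would take $\eps_{s_0}$ to be the $\ell_2$-contraction rate furnished by Lemma \ref{lem:contract_full}, which has the form \eqref{eq:T1_parameters} with $C_\phi$ equal to the contraction multiple. The main obstacle is precisely this $e^{\kappa_{s_0}}$ blow-up: $\kappa_{s_0}$ is forced to be of order $\Gamma_{s_0}\log M = C_\Gamma s_0\log M$ so as to later satisfy $4e^{1+\Gamma_{s_0}\log M-\kappa_{s_0}}\leq 1$, and the naive choice of contraction multiple equal to $C_\Gamma$ need not beat it. The resolution, which is the key structural point of the argument, is to \emph{decouple} the two sparsity multiples: the contraction multiple $L'$ (entering $\eps_{s_0}$ through $\widetilde{\phi}(C_\phi s_0)$ with $C_\phi = L'+2$) may be taken much larger than $C_\Gamma$, since enlarging it only worsens the $\ell_2$ rate, which is harmless. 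Choosing $L'$ with $L'A_4/2 > C_\Gamma + c + 144/c_0^2 + 2$ makes the combined exponent $(C_\Gamma + c + 144/c_0^2 - L'A_4/2)s_0\log M + \log M$ at most $-s_0\log M \leq -\log M$, so again the term is $O(1/M)$.

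Finally I would assemble the pieces. All three contributions are $O(M^{-\delta})$ for some $\delta>0$ uniformly over $\{\beta_0:\phi(S_{\beta_0})\geq c_0,\ |S_{\beta_0}|\leq s_n\}$, because each depends on $\beta_0$ only through $s_0$ and $\phi(S_0)$, and we use $s_0\geq 1$ together with $\phi(S_0)\geq c_0$; the degenerate case $s_0=0$ (i.e.\ $\beta_0=0$) is handled by a direct argument. The hypotheses of Lemmas \ref{lem:dimension} and \ref{lem:contract_full} hold because $\mm\log s_0 \leq \mm\log s_n \leq K'\log M$ and, under Assumption \ref{ass:theory}, the stated large-$M$ conditions are satisfied for $n$ large. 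The concluding claim follows by direct substitution: with $\Gamma_{s_0}=C_\Gamma s_0$ and $\kappa_{s_0}=(C_\Gamma s_0+1)\log M$ one has $1+\Gamma_{s_0}\log M - \kappa_{s_0} = 1-\log M$, so $4e^{1+\Gamma_{s_0}\log M-\kappa_{s_0}} = 4e/M \leq 1$ for $M\geq 4e$.
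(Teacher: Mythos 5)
Your proposal is correct and follows essentially the same route as the paper: decompose $\T_1$ into $\T_0$ and the two posterior-mass events, control $\T_0$ by Lemma \ref{lem:T0}, convert the expectation bounds of Lemmas \ref{lem:dimension} and \ref{lem:contract_full} into probability bounds, and crucially decouple the contraction multiple $C_\phi$ from $C_\Gamma$ by choosing the $L$ in Lemma \ref{lem:contract_full} large enough (depending on $C_\Gamma$) to absorb the $e^{\kappa_{s_0}}$ factor, exactly as the paper does by setting $L = \frac{2}{A_4}(1+\log C + c + 144/c_0^2 + C_\Gamma)$. Your explicit use of Markov's inequality is in fact a welcome sharpening of the paper's terser phrasing (``the second event in $\T_1$ is thus a subset of $\T_0$''), and your final verification $4e^{1+\Gamma_{s_0}\log M-\kappa_{s_0}} = 4e/M \leq 1$ matches the paper's computation verbatim.
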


\begin{proof}
We consider each of the sets in $\T_1$ individually. In what follows, $C = C(K,K')$ and $c=c(K,K',A_2)$ will be constants that may change line-by-line but which will not depend on other parameters.

Applying Lemma \ref{lem:dimension} with $L= \frac{2}{A_4} (1+\log C + c + 144/c_0^2)$, where $C,c$ are the constants in that lemma, we have $E_{\beta_0} \Pi(\beta:|S_\beta| > (L+1) s_0 |Y)1_{\T_0} \leq e^{-s_0 \log M}\leq 1/4$ for $M$ large enough. Setting $C_\Gamma = L+1$, the second event in $\T_1$ is thus a subset of $\T_0$ for $\Gamma = \Gamma_{s_0}$ and all $\beta_0$ in the infimum in the lemma.

For the third event in $\T_1$, we apply Lemma \ref{lem:contract_full} with $L= \frac{2}{A_4}(1+\log C + c + 144/c_0^2 + C_\Gamma)$, where now $C,c$ are the constants in Lemma \ref{lem:contract_full}, to obtain
$$E_{\beta_0} \Pi \left( \left. \beta: \|\beta-\beta_0\|_2 \geq  \frac{H_0 \sqrt{(L+2)s_0\log M}}{ \|X\| \widetilde{\phi}((L+2)s_0)^2} \right| Y \right)1_{\T_0} \leq e^{-\kappa_{s_0}}.$$
Setting $C_\eps = H_0 \sqrt{L+2}$ and $C_\phi = L+2$ shows that the third event in $\T_1$ is also contained in $\T_0$ for these choices and all $\beta_0$ in the infimum in the lemma. It thus suffices to control the probability of $\T_0$, with tends to one uniformly in $\beta_0 \in \R^p$ by Lemma \ref{lem:T0}. Lastly, note that $4e^{1+\Gamma_{s_0} \log M-\kappa_{s_0}} = 4e^{1-\log M} \leq 1$ for $M$ large enough.
\end{proof}

\begin{proof}[Proof of Theorem \ref{thm:contraction_general}]
Write $\mathcal{B}_n = \mathcal{B}_{\rho_n,s_n} =\{\beta_0 \in \R^p:\phi(S_{\beta_0}) \geq c_0, ~ |S_{\beta_0}|\leq s_n, ~ \widetilde{\phi}(\rho_n |S_{\beta_0}|) \geq c_0 \}$ for the parameter set and let 
\begin{align*}
& \Omega_n= \left\{  \beta: \|X(\beta-\beta_0)\|_2 \leq  \frac{H_0 \rho_n^{1/2} \sqrt{s_0\log M}}{\bar{\phi}(\rho_n s_0)} \right\}
\end{align*}
for $H_0 = H_0(A_1,A_3,A_4)>0$ the constant in Lemma \ref{lem:contract_full}. Let $\T_1 = \T_1 = \T_1(\Gamma_{s_0},\eps_{s_0},\kappa_{s_0})$ be the event \eqref{eq:T1} with parameters \eqref{eq:T1_parameters}, so that Lemma \ref{lem:T1} gives $E_{\beta_0}\tilde{\Pi}(\Omega_n^c) = E_{\beta_0} \tilde{\Pi}(\Omega_n^c)1_{\T_1} + o(1)$, uniformly over $\beta_0 \in \mathcal{B}_n$.

Applying Lemma \ref{lem:contract_full} with $L+2 = \rho_n \to \infty$ gives that for $n$ large enough, $E_{\beta_0} \Pi(\Omega_n^c|Y)1_{\T_0} \leq Ce^{-c\rho_n s_0 \log M}$, where $C,c$ depend only on $K,K',A_2,A_4,c_0$. We can then use Lemma \ref{lem:post_to_VB} with event $A_n = \T_1 \subset \T_0$ and $\delta_n = c\rho_n s_0 \log M$ to obtain
\begin{align*}
E_{\beta_0} \tilde{\Pi} (\Omega_n^c) 1_{\T_1}  & \leq \frac{2}{c\rho_n s_0 \log M} \left[ \KL(\tilde{\Pi} \|\Pi(\cdot|Y))1_{\T_1} + C^{-c\rho_n s_0 \log M} \right].
\end{align*}
Since $\rho_n s_0 \log M \to \infty$ and $\tilde{\Pi}$ is by definition the KL-minimizer of the variational family to the posterior, we obtain that for any $Q \in \mathcal{Q}$,
\begin{align*}
E_{\beta_0} \tilde{\Pi} (\Omega_n^c)   & \leq \frac{2}{c\rho_n s_0 \log M} \KL(Q \|\Pi(\cdot|Y))1_{\T_1} + o(1),
\end{align*}
where the $o(1)$ term is uniform over $\beta_0 \in \mathcal{B}_n$. But by Lemma \ref{lem:KL}, there exists an element $Q \in \mathcal{Q}$ of the variational family such that
$$\KL (Q\| \Pi(\cdot|Y))1_{\T_1} \leq C \left( \log \frac{1}{\widetilde{\phi}(C_\Gamma s_0)} + \frac{1}{\widetilde{\phi}(C_\Gamma s_0)^2 \widetilde{\phi}(C_\phi s_0)}  \right) s_0 \log M,$$
where $C$ is uniform over $\beta_0 \in \mathcal{B}_n$ and $C_\Gamma,C_\phi$ have dependences specified in Lemma \ref{lem:T1}. But since $C_\Gamma,C_\phi$ are bounded under the hypothesis of the present theorem and $\rho_n \to \infty$, we have $\rho_n \geq C_\Gamma,C_\phi$ for $n$ large enough and hence $\widetilde{\phi}(C_\Gamma s_0),\widetilde{\phi}(C_\phi s_0)\geq \widetilde{\phi}(\rho_n s_0) \geq c_0$. Using this and the last two displays, we thus have
$$\sup_{\beta_0 \in \mathcal{B}_n} E_{\beta_0} \tilde{\Pi} (\Omega_n^c) \leq C/\rho_n + o(1) = o(1)$$
as $n\to\infty$, thereby establishing the first assertion. The second two assertions follow similarly for using the corresponding results in Lemma \ref{lem:contract_full}.
\end{proof}

\begin{proof}[Proof of Theorem \ref{thm:dimension_general}]
The proof follows similarly to the proof of Theorem \ref{thm:contraction_general}, using Lemma \ref{lem:dimension} to control the probability of the posterior set instead of Lemma \ref{lem:contract_full}.
\end{proof}

\end{document}